\theoremstyle{plain}
\newtheorem{thm}{Theorem}
\newtheorem{claim}{Claim}
\newtheorem{cse}{Case}
\newtheorem{lma}[thm]{Lemma}
\newtheorem{rmk}[thm]{Remark}
\newtheorem*{thm*}{Theorem}
\newtheorem*{prop*}{Proposition}
\newtheorem*{lma*}{Lemma}
\newtheorem*{cor*}{Corollary}
\newtheorem*{rmk*}{Remark}
\theoremstyle{definition}
\newtheorem{defi}[thm]{Definition}
\newtheorem*{conj*}{Conjecture}
\theoremstyle{definition}
\newtheorem{question}[thm]{Question}
\renewcommand{\tr}[1]{\mathrm{Tr}#1}
\newcommand{\D}[2]{\frac{\mathrm{d}#1}{\mathrm{d}#2}}
\definecolor{plot1}{RGB}{48, 79, 254}
\definecolor{plot2}{RGB}{0, 184, 212}
\definecolor{plot3}{RGB}{0, 200, 83}
\definecolor{plot4}{RGB}{255, 214, 0}
\definecolor{plot5}{RGB}{255, 109, 0}
\definecolor{plot6}{RGB}{221, 44, 0}
\newcommand{\MARKSZ}{0.5mm}
\newcommand{\MARKFORM}{oplus*}
\newcommand{\FONTSZ}{\small}
\begin{document}

\title{\vspace{-35pt}Device-independent Randomness Amplification and Privatization}

\author[1]{Max Kessler\thanks{kesslerm@student.ethz.ch}}
\author[1]{Rotem Arnon-Friedman\thanks{rotema@itp.phys.ethz.ch}}
\affil[1]{Institute for Theoretical Physics, ETH-Z\"urich, CH-8093, Z\"urich, Switzerland}
\date{}

\maketitle

\vspace{-35pt}

\begin{abstract}
	Randomness is an essential resource in computer science. In most applications perfect, and sometimes private, randomness is needed, while it is not even clear that such a resource exists. It is well known that the tools of classical computer science do not allow us to create perfect and secret randomness from a single weak public source. Quantum physics, on the other hand, allows for such a process, even in the most paranoid cryptographic sense termed ``quantum device-independent cryptography''.  In this work we propose and prove the security of a new device-independent protocol that takes any single public Santha-Vazirani source as input and creates a secret close to uniform string in the presence of a quantum~adversary.  
	
	Our work is the first to achieve randomness amplification with all the following properties: (1)~amplification and ``privatization'' of a public Santha-Vazirani source with arbitrary bias (2) the use of a device with only two components (compared to polynomial number of components) (3) non-vanishing extraction rate and (4) maximal noise tolerance. In particular, this implies that our protocol is the first protocol that can possibly be implemented  with reachable parameters.
	We are able to achieve these by combining three new tools: a particular family of Bell inequalities, a proof technique to lower bound entropy in the device-independent setting, and a special framework for quantum-proof multi-source extractors. 
\end{abstract}

\section{Introduction}

Randomness is widely used in computer science; it is essential for cryptography and (at the least) beneficial for many other scenarios, e.g., when designing  efficient algorithms or proving the existence of certain functions and combinatorial objects of interest, via the probabilistic method~\cite{vadhan2012pseudorandomness}. 

Unfortunately, we cannot know for sure that randomness even exists; it might as well be that everything in nature is completely deterministic and fixed in advance. Furthermore, even if we assume the existence of some sources of randomness in nature, it is not clear at all that there are sources of \emph{perfect} randomness. Physical sources of randomness, such as radioactive decay or thermal noise, can be used to produce unpredictable bit strings, but those are usually partially biased and correlated bits. Even worse, how unpredictable these sources of randomness are depends on the knowledge of the observer regarding the physical system. For a person who can keep track of all microscopic degrees of freedom the outcomes can be completely predictable.

The question addressed in this work is familiar --- can we reduce the amount of perfect randomness required for one's task of interest? In particular, we are interested in the cryptographic point of view. That is, when we say perfect randomness, for example, we mean that it should be uniform even with respect to  some prior knowledge or side information of a malicious party or an adversary.\footnote{This is the most demanding context to consider randomness in. A positive answer to the these questions in the cryptographic sense also implies a positive answer in applications where a malicious party is not of interest. The opposite direction is, of course, not true.} We then ask:

\begin{question}
	Can \emph{perfect} randomness be created from \emph{weak or short} randomness?
\end{question}
\begin{question}
	 Can \emph{private, secret,} randomness be created from \emph{public} randomness?
\end{question}

By weak randomness we mean that the produced bits can be correlated and biased (though not completely deterministic). One such source, investigated in many  works and of relevance for the current one, is the so called ``Santha-Vazirani source'', or SV-source,~\cite{SV} --- a source that produces a sequence of bits, where each bit has \emph{some} randomness given all previous ones. This source is a special type of the more general ``min-entropy source''~\cite{min-entropy-source} (both defined formally below). 
Public randomness means that anyone can see the random string once it is produced. This is the case, for example, for the random numbers produced by the NIST randomness beacon\footnote{\url{http://www.nist.gov/itl/csd/ct/nist_beacon.cfm}}; they are publicly available over the internet. 

``Classical'' computer science addresses the first question by considering \emph{pseudorandom generators} and \emph{randomness extractors}. Pseudorandom generators take a short perfectly random seed and generate from it a longer string of bits that no efficient algorithm can distinguish from a uniformly random string (see, e.g.,~\cite{goldreich2010primer} for a survey). Thus, for the existence of pseudorandom generators we must make some assumptions regarding the complexity of certain computational tasks~\cite{diffie1976new,shamir1983generation}. Hence, they cannot be used when considering an all-powerful adversary. 

Randomness extractors are functions that take a weak random source as an input and return an almost-uniform string as the output (see~\cite{nisan1999extracting}). Extractors are ``information-theoretically secure'' in the sense that, in contrast to pseudorandom generators, they do not require the use of computational  assumptions. However, as widely known, no function can take a single SV-source and create close to uniform randomness out of it~\cite{SV}. We therefore ought to consider extractors which either take an additional independent (short) random seed as input or several independent weak sources of randomness. These are called seeded extractors and multi-source extractors, respectively. (See~\cite{de2012trevisan,kasher2010two} for examples of extractors that work even in the presence of a quantum adversary).

The answer to the second question seems obvious and intuitive --- if everything is known in public (i.e., the initial source of randomness and the procedure, or protocol, used to manipulate it) then there is no way to create some private, secret, information out of it.

Quantum physics allows us to tackle the above questions from another angle and derive different conclusions, without making assumptions regarding computational complexity or the number of independent sources~\cite{bera2016randomness,acin2016certified}. By preparing certain quantum states, e.g., a photon in a particular configuration, and measuring them one can generate perfectly random bits which, according to the laws of physics, were not known to anybody in advance.\footnote{Note, however, that quantum physics (as well as any other physical theory) cannot exclude the possibility that there is no randomness in nature to begin with. To prove that the outcome of a measurement performed on a quantum state is random, for example, we must first assume that we have the ability to choose the different states and measurements we would like to perform.}

Taking such an approach to answer the above questions is ``unfair'' and unsatisfactory. Firstly, one can argue that allowing the use of a source of, say, photons is like allowing the use of private unbiased coins. (And allowing the use of entangled photons, distributed among several parties, is like allowing shared randomness). Secondly, and significantly more importantly, when trying to implement such a source of randomness we find that creating perfect quantum states and measurements is practically impossible. In the cryptographic setting, imperfections and noise in the implementation are being exploited to gain information on the generated randomness~\cite{gerhardt2011full}. 

To solve these issues (and many others) the quantum cryptography community took one step further~\cite{ekert2014ultimate}. In the so called \emph{device-independent} approach we let the adversary prepare the quantum devices used to generate the desired randomness. The honest parties interact with the device prepared by the adversary to test it and abort the protocol if its behaviour does not fit some predefined requirements.
Then, the entire procedure is known to the adversary and there are no ``hidden private coins''. Furthermore, we can no longer assume anything about the inner-workings of the device. Hence, if we are able to prove that the produced outcomes are secure to use, then the statement is inherently independent of the physical device and therefore robust to imperfections in the implementation.

In the device-independent scenario it might be that the adversary programmed the device to output a certain fixed string which is completely known to her. Thus, at first sight, it seems impossible to prove that the outputs are random from the perspective of the adversary. As known for quite some time now, the solution is to base device-independent protocols on the violation of Bell inequalities~\cite{ekert1991quantum,mayers1998quantum,barrett2005no,acin2016certified}. 

A Bell inequality~\cite{bell1964einstein} can be thought of as a game played by the honest parties using a device that includes two non-communicaiting components (the most famous one being the CHSH inequality~\cite{CHSH} or CHSH game; see~\cite{brunner2014bell} for a review on Bell inequalities and non-locality).
The game has a special property~--- some quantum, non-local, strategies can win the game with probability $\omega_q$ greater than any classical, local, strategy.
Hence, if the honest parties observe that using their device they win the game with probability $\omega_q$ they conclude it must be quantum (further details are given in Section~\ref{sec:nl-games}). Otherwise they abort the protocol.
Experiments have verified the quantum advantage in such ``Bell games'' in a loophole-free way~\cite{hensen2015loophole,shalm2015strong,giustina2015significant} 
(in particular, this means that the experiments were executed without making assumptions that could otherwise be exploited by the adversary in the cryptographic setting).
It is well established that  the higher the winning probability in a game is, the higher the amount of secret randomness which was produced in the process. We show this in Section~\ref{sec:single-round} for our scenario of interest.

In this work we suggest a new quantum device-independent cryptographic protocol that uses a \emph{single public SV source} as input and produces \emph{secret close to uniform randomness}, even with respect to a quantum adversary. We state the concrete result and compare it to previous works in the following.  

\subsection{Results and contributions}

We focus in this work on the amplification of an SV-source. An SV-source with bias $\mu\in(0,0.5)$ has the following property: for each bit produced by the source $b_i$, $\Pr[b_i=0|b_1,\dotsc,b_{i-1}]\in[\mu,1-\mu]$, where $b_1,\dotsc,b_{i-1}$ are all the previous bits produced by the source. Such sources describe physical processes in which the bits are produced one after the other. Hence, the bias of each bit can depend (adversarially) on the previous bits, but not on the bits that will be produced in the future. 
Many of the processes in nature produce a sequence of bits, one bit after the other; the chronological order then implies that each bit can only depend on the past and not on the future. Thus, an SV-source can be used to describe such process in a realistic way. 

The first challenge when dealing with randomness amplification is to find an interesting (and relevant) setting to consider and devise a protocol that can be proven secure in that setting. Previous works considered different protocols and there is no ``standard model''.\footnote{Though it is always the case that some Bell game is repeated many times, as in all device-independent protocols (e.g., device-independent quantum key distribution and randomness expansion).} We first describe the scenario that we focus on and its relevance. Then we state our result and explain the main steps and ideas of the proof.

\begin{figure}
	\includegraphics[width=115mm]{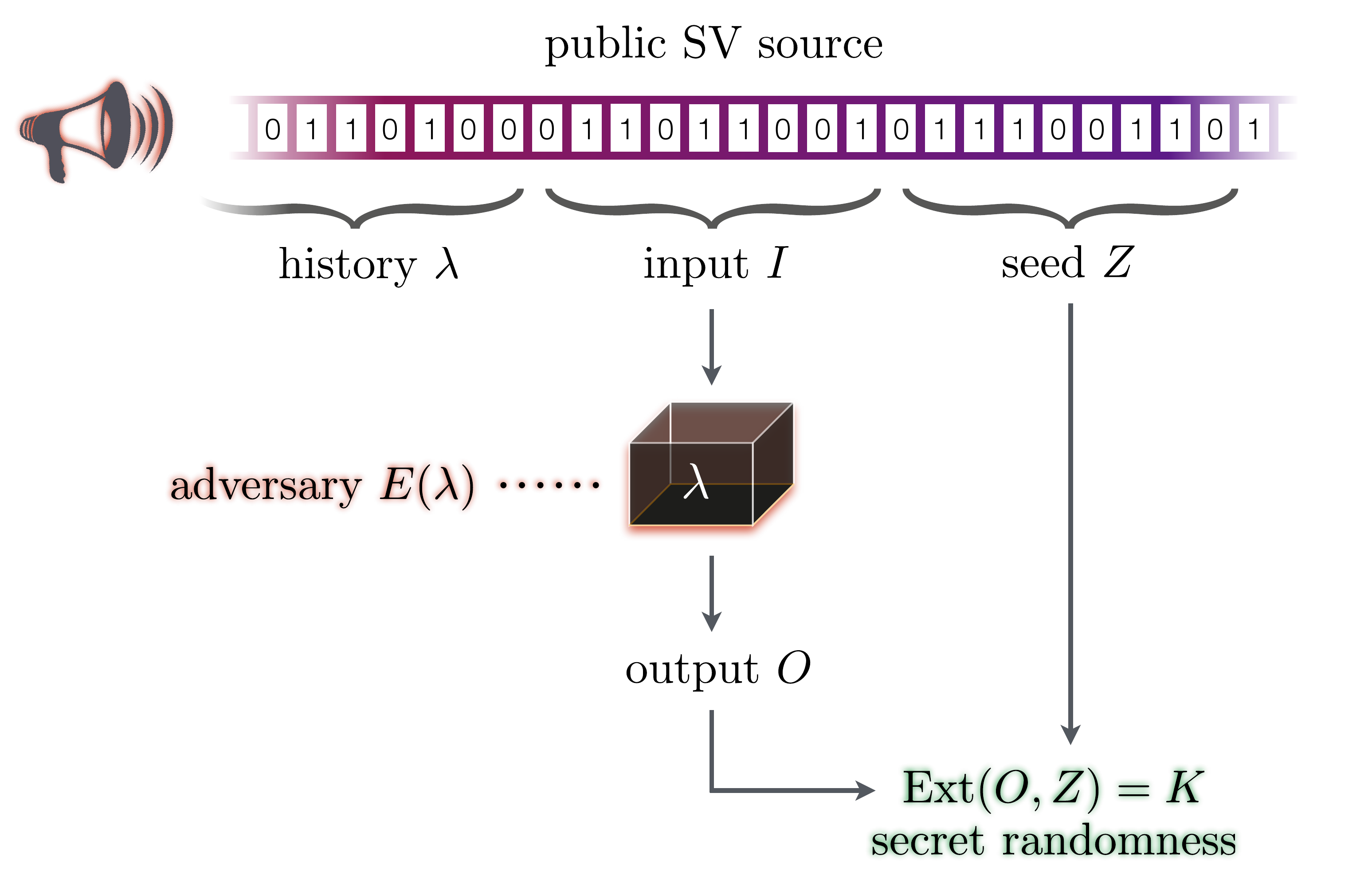}
	\centering
	\caption{An illustration of the considered setting. We start with a public SV source and a device which was created by the adversary (the black box in the figure). The goal is to produce a secret, close to uniform, string $K$. The bits produced by the SV-source when running the protocol, $I$ and $Z$, and the device can be correleated via the previous bits of the source, $\lambda$, and the adversary $E$.
	Our protocol is such that the honest party first uses some of the bits, $I$, as input to the device. The output of the device is denoted by $O$. Then, a special type of randomness extractor is applied to $O$ and additional bits $Z$ from the source. The result is the output randomness $K$.} 
	\label{fig:setting_intro}
\end{figure}

The setting that we consider is illustrated in Figure~\ref{fig:setting_intro}. 
We start with an arbitrary, public, SV-source with bias $\mu\in(0,0.5)$. $\lambda$ denotes all the bits produced before the adversary, Eve, prepares the device for the honest party, Alice. $\lambda$ can also include any other piece of classical information from the past that might be of relevance to Eve.
Eve then creates the device, denoted by the black box in the figure, depending on $\lambda$. She can keep quantum side information $E=E(\lambda)$ correlated with the device for herself; this side information can later be used by Eve to gain information about the final random string. Once Alice holds the device she can use it together with additional bits produced by the source, $I$ and $Z$ in the figure, to create her final secret random string~$K$.

The SV-source can be controlled by an untrusted party but we assume that every bit, when produced, has some randomness conditioned on all side information. Mathematically, for the first bit of $I$, $I_1$, for example, we have $\frac{1}{2} - \mu \leq \Pr[I_1 = 0|\lambda] \leq \frac{1}{2} + \mu$.

In particular, in the above explained scenario it holds that, given the history $\lambda$ and Eve's knowledge $E$, the device $D$ and the sequence of bits $I \circ Z$ are independent. That is,\footnote{This should be understood on the intuitive level, as we did not define the device $D$ in a mathematical way. The exact setting is modelled formally in Section~\ref{sec:assumptions}.}
\begin{equation}\label{eq:intro_mutual_info_setting}
	\mathrm{I}(D:I \circ Z|\lambda E) =0 \;,
\end{equation}
where $\mathrm{I(\bullet:\bullet|\bullet)}$ is the conditional mutual information.

We remark that the considered scenario is relevant for actual implementations of randomness amplification protocols: the chronological order of events is such that Eve can prepare the device depending only on past information (the history) but not on the bits which will be produced after delivering the device to Alice. This implies that all correlations between the following bits produced by the source and the device are due to past events and Eve's side information. Thus, Equation~\eqref{eq:intro_mutual_info_setting} holds. Several previous works, e.g.,~\cite{CRrand,Gallego2013,brandao2016realistic}, considered similar settings as well.

The main contribution of our work is a construction of a \emph{device-independent randomness amplification protocol} that uses a single public SV-source to create secret and close to uniform randomness, with respect to all of the knowledge that the adversary has:

\begin{thm}[Informal]\label{thm:informal}
	Given any public SV-source with bias $\mu \in (0,0.5)$ there exists a protocol, requiring a two-component device, such that:
	\begin{enumerate}
		\item (Soundness) For any device $D$ used to implement the protocol such that Equation~\eqref{eq:intro_mutual_info_setting} holds, either the protocol aborts with overwhelming probability or an $\varepsilon$-close to uniform (given the adversary's knowledge) string $K$ is produced. 
		\item (Completeness) There exists an honest implementation of the device such that the protocol aborts with negligible probability when using this device, even in the presence of noise.
	\end{enumerate} 
\end{thm}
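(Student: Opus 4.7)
The natural approach is to structure the protocol in two stages and verify the two requirements separately. In the first stage the honest party runs $N$ rounds of a Bell game chosen from the family alluded to in the introduction, feeding in each round a constant number of bits taken from the prefix $I$ of the SV-source as the two components' inputs, and recording the device outputs $O=(O_1,\dotsc,O_N)$. After the $N$ rounds, the honest party estimates the empirical winning fraction $\hat\omega$ and aborts if $\hat\omega<\omega_q-\nu$ for a carefully chosen tolerance $\nu$. Conditioned on not aborting, she invokes a quantum-proof multi-source extractor on the pair $(O,Z)$, where $Z$ is a fresh block of bits from the SV-source, and outputs the result as $K$. With this template fixed, soundness and completeness reduce to three quantitative ingredients: a single-round entropy bound derived from the Bell inequality, an accumulation statement that lifts this single-round bound to the $N$-round string $O$, and a multi-source extractor that accepts a weak source and an SV-source and is secure against quantum side information.

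For completeness I would exhibit a concrete two-component device realising the optimal quantum strategy of the chosen Bell inequality, and argue that even in the presence of independent noise of strength at most $\nu/2$ on each round, the probability that any particular round is won is at least $\omega_q-\nu/2$. Independence across rounds allows a Hoeffding or Azuma bound to show that $\hat\omega\geq\omega_q-\nu$ except with probability $e^{-\Omega(N)}$, giving completeness with exponentially small abort probability. Slight care is needed because the inputs $I$ are SV-biased rather than uniform, but since the honest device's winning probability on any individual round only depends on the input distribution through its support, and the SV constraint ensures every input appears with probability at least $\mu$, one still obtains a positive gap below which the honest $\hat\omega$ lies only with exponentially small probability.

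For soundness the plan is to first prove, in the spirit of Section~\ref{sec:single-round}, that in any round in which the device wins the game with probability sufficiently close to $\omega_q$ given the history, the output bits carry a linear amount of smooth min-entropy conditional on the adversary's quantum side information $E$ and on the SV-history $\lambda$. The key here is to use a Bell inequality from the particular family mentioned in the abstract that remains a monogamy-of-entanglement-type witness even when the inputs are restricted to the SV distribution; this is exactly where Equation~\eqref{eq:intro_mutual_info_setting} is used, since it lets us treat the inputs as generated from a distribution independent of $D$ given $(\lambda,E)$ and hence analyse the round from Eve's point of view as a fixed convex combination of input-induced strategies. I would then convert this per-round guarantee into a global bound on $H_{\min}^\varepsilon(O\mid E\lambda)$ via the entropy accumulation technique referenced in the introduction, using the abort condition $\hat\omega\geq\omega_q-\nu$ as the event on which the accumulated rate is non-trivial. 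The final step is to apply the quantum-proof multi-source extractor to $(O,Z)$: $O$ supplies the high-min-entropy source, while $Z$, being fresh SV bits, is independent of $(O,E,\lambda)$ conditional on the intervening history, so the extractor framework outputs a string $K$ that is $\varepsilon$-close to uniform given the adversary's view.

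The main obstacle, in my view, is the second step, namely the entropy accumulation in the device-independent setting when the inputs are drawn from an SV-source rather than from an independent uniform seed. Standard entropy accumulation assumes freshly uniform, independent inputs per round, whereas here the inputs of round $i$ may depend on earlier rounds' outputs through $\lambda$, and are individually only $\mu$-unpredictable. Overcoming this requires both a single-round bound that is robust to adversarially chosen input distributions supported on all pairs with weight at least $\mu$, and an accumulation argument that tracks conditional entropy across rounds whose input randomness is itself weak; this is precisely where the new proof technique advertised in the paper must do work, and I would expect the rest of the argument (completeness, extractor application, final composition) to be comparatively routine once this step is in hand.
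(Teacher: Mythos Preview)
Your proposal is essentially the paper's own four-step argument: MDL-type Bell inequality suited to biased inputs, single-round von Neumann entropy bound, entropy accumulation to lift this to a smooth min-entropy bound on $O$ conditioned on $(I,E,\lambda)$, and a quantum-proof two-source extractor in the Markov model applied to $(O,Z)$; completeness is exactly the Hoeffding argument you sketch. One small correction of emphasis: the entropy accumulation theorem as used here does \emph{not} assume uniform independent inputs---it only requires the Markov-chain condition $A^{i-1}B^{i-1}\leftrightarrow I^{i-1}E\leftrightarrow I_i$, which follows directly from Equation~\eqref{eq:intro_mutual_info_setting}, so the ``main obstacle'' you flag is really absorbed into constructing the single-round min-tradeoff function for the MDL inequality (the paper's Lemma~\ref{lma:holevo_bound}) rather than into modifying the accumulation machinery itself.
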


The formal statement is given in Theorem~\ref{thm:formal}.  The soundness, or security, parameter $\varepsilon$ depends on the bias of the source, $\mu$, as well as the parameters of an extractor used in our protocol to create $K$. For certain choices of parameters the protocol can be made explicit.

Theorem~\ref{thm:informal} improves upon the prior state-of-the-art in several significant aspects (see Section~\ref{sec:related_works} and Table~\ref{tab:work_comp} for comparison with previous works):
\begin{enumerate}
	\item \textbf{Device requirement} -- we only require that the device includes two components (the lowest possible), compared to a polynomial number in previous works that considered a public weak source of randomness.. \\
	This means that the black box  in Figure~\ref{fig:setting_intro} consists of two separated parts.\footnote{One can imagine the two components as being two computers or, alternatively,  two provers in a multi-prover interactive proof system.} Having two components is a necessary requirement for protocols based on Bell inequalities. As we explain in Section~\ref{sec:related_works}, previous works that considered a public weak source had to use, at the least, polynomial number of components, which is not realistic. Other works that allowed a constant number of devices could not derive a result for an arbitrary bias $\mu$, a public SV-source, and/or quantum adversaries.
	\item \textbf{Extraction rate (efficiency)} -- for a large range of parameters we can extract a linear number of bits\footnote{To be more precise -- for a large range of parameters (the full details are given in Remark~\ref{rmk:RAP-secrecy}) there is an explicit extractor that can be used in our protocol to extract a linear number of bits. If one is interested in an explicit protocol for all parameters, there are two options: 1)~A simple modification of our protocol, which requires the use of a device with 4 components, can be used to extract a sub-linear number of bits using a three-source extractor. (A similar thing was previously done in~\cite[Theorem 2]{brandao2016realistic} but the resulting protocol requires 8 components and the security proof uses an additional assumption of a private SV-source; see Section~\ref{sec:related_works}). 2)~Using the current protocol (with only two components) one can extract a logrithmic number of bits. If, in the future, new (classical) two-source extractors with better parameters are developed, they can be used in our protocol to achieve better extraction rates without modifying the protocol or its security proof.} while maintaining cryptographic security level, compared to a vanishing extraction rate in previous works that considered a public weak source of randomness. \\ 
	Using an extractor with sufficiently good parameters $\varepsilon$ can be made exponentially small in the number of bits taken from the SV-source while extracting a linear number of bits. Previous works could not achieve this, \emph{independently} of the extractor used in the protocol.
	\item \textbf{Robustness} -- we are able to tolerate the maximal amount of noise, compared to low noise levels in previous works that considered a public weak source of randomness. \\
	The completeness statement holds for any amount of noise in the implementation which still results in a violation of the Bell inequality.\footnote{This can be seen, for example, from Figure~\ref{fig:single_round_bound} below which shows that non-zero entropy can be certified as long as there is a violation of the Bell inequality.} This is the maximal possible amount one can hope to tolerate. 
\end{enumerate}

Apart from randomness amplification, our protocol can also be used as a main building block for device-independent randomness expansion and key distribution using weak sources of randomness. More details are given in Section~\ref{sec:conclustions}.

Theorem~\ref{thm:informal} \emph{cannot} be derived by improving previously known techniques (as explained in Section~\ref{sec:related_works}). To prove it we present a completely new proof, which can be of independent interest.  Our proof uses three different tools which were developed recently and were not used before in the context of randomness amplification. 
One particular example for an independent technical contribution is the proof given in Section~\ref{sec:single-round}, where we investigate a new type of Bell inequalities and show, for the first time, that they can also be used in a cryptographic setting.
Another contribution is presenting a first application of a special type of extractors that were recently introduced in~\cite{Extractors}. The existence of such extractors is what allows us to produce randomness, in the presence of a quantum adversary, when starting with a single public SV-source.

\subsection{Main steps in the proof}\label{sec:proof_steps}

Our protocol is stated as Protocol~\ref{alg:RAP} in Section~\ref{sec:ra_protocol}. The protocol is simple: the device is used sequentially with the inputs $I$ from the SV-source to create the outputs $O$. Once all the outputs are produced Alice calculates the average violation of a specific Bell inequality from the raw data and aborts if the violation is not sufficiently high. If she does not abort then a special type of extractor is applied to $O$ together with additional bits from the source $Z$. 

\subsubsection*{Step 1: Choosing the ``correct'' Bell inequality}

\begin{figure}
	\includegraphics[width=40mm]{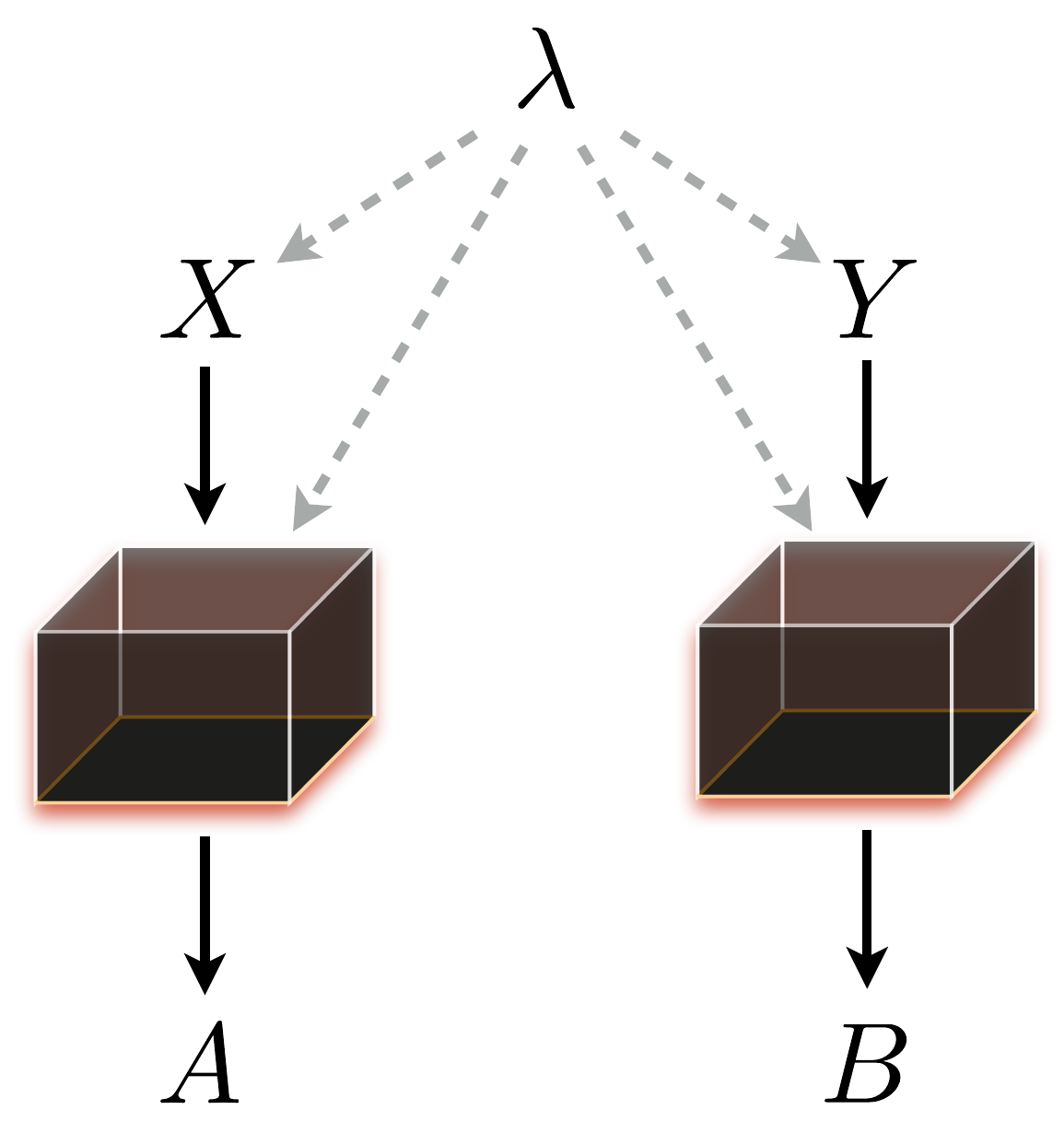}
	\centering
	\caption{Correlations between the device and the inputs. The two comonents of the device are denoted by the black boxes. The inputs to the two components, $X$ and $Y$, come from the SV-source. The outputs are denoted by $A$ and $B$. The device and the inputs can be correlated via the history $\lambda$, as denoted by the dashed arrows. A violation of an MDL inequality certifies that the device cannot be classical in this setting.} 
	\label{fig:intro_mdl}
\end{figure}

As all device-independent protocols, our protocol is based on the violation of a given Bell inequality above a certain threshold.  This way Alice can make sure that the device implements a quantum non-local strategy. 
All previous protocols use the CHSH Bell inequality or other well known inequalities.

We use a recently developed family of Bell inequalities (with two parties, two inputs, and two outputs) which fits perfectly to the scenario of randomness amplification. As explained above, in our setting,  the device and the inputs $I$ can be correlated via $\lambda$. The Bell inequalities developed in~\cite{MDL}, called ``measurement dependent locality (MDL) inequalities'', are adapted to the situation illustrated in Figure~\ref{fig:intro_mdl} for \emph{any} bias of the source. They therefore accommodate the dependency between the device and the side information. In contrast, the violation of the CHSH inequality cannot be used to ``verify quantumness'' above some threshold for the bias (see further details in Section~\ref{sec:MDL}). 
Other Bell inequalities which were used in the context of randomness amplification and allowed for an arbitrary bias of the SV-source require a device with more than two components~\cite{Gallego2013,brandao2016realistic,ramanathan2015randomness}.

We note that, for the completeness of our protocol, it is crucial that for any bias of the source there is a quantum strategy (i.e., quantum state and measurements) that violate the inequality. This is indeed the case as shown in~\cite{MDL}. When proving completeness we also explain how the maximal violation within quantum physics can be found numerically. 

The rest of the steps in the proof deal with the soundness proof.

\subsubsection*{Step 2: Certifying randomness from the MDL violation after a single use of the device}

The analysis done in~\cite{MDL} for the MDL inequalities only ensures that a violation of the inequality implies that the device must be non-local, i.e., it cannot be implemented by a classical strategy. While this is important for the study of fundamental questions in physics, it is not sufficient in the cryptographic setting. A quantitive bound on how random the output of the device must look to an adversary was missing. 

The first part of our proof is devoted to deriving a relation between the violation of the MDL inequality and the amount of knowledge Eve can gain regarding the output in a single use of the device. Specifically, we prove a lower-bound on the von Neumann entropy of the output given all side information:
\begin{equation}\label{eq:intro_von_nuemann}
	H(O_i|I_i E, \lambda) \geq t \;,
\end{equation}
where $I_i$ and $O_i$ are the inputs and outputs when using the device for the $i$'th time and $t\geq 0$ depends on the bias of the source and the observed violation of the MDL inequality (see Lemma~\ref{lma:holevo_bound} for the exact bound and Figure~\ref{fig:single_round_bound} for a plot). 
The conditional von Neumann entropy is just one way of quantifying the amount of secret randomness, but as we will show below, this is the relevant quantity for us. 

A bound similar to Equation~\eqref{eq:intro_von_nuemann}, but for the CHSH inequality, was proven in~\cite{Pironio}. In the case of the CHSH inequality the inputs are assumed to be chosen uniformly and independently of the device and hence one cannot use the result of~\cite{Pironio} directly for randomness amplification. We find a way to connect the two scenarios and derive a bound as in Equation~\eqref{eq:intro_von_nuemann} for the MDL inequality from that of the CHSH inequality.

The resulting bound is non-trivial as long as the MDL inequality is violated (while if there is no violation the conditional entropy must be 0, since the device might be using a classical strategy). Combined with the following step, this property allows us to tolerate maximal amount of noise in the honest implementation of the device used in the protocol. 

\subsubsection*{Step 3: Bounding the total amount of min-entropy after multiple uses of the device}

To bound the amount of extractable randomness from the outputs of the device $O$ we need to lower bound the total conditional smooth min-entropy\footnote{The smooth min-entropy is a standrd quantity related to the, more commonly known, min-entropy; the formal definition is given in Section~\ref{sec:entropies-markov}. The important thing to know at this stage is that it tightly determines how much randomness Alice can extract from $O$ in the presence of a quantum adversary~\cite{konig2009operational}.} $H^{\varepsilon_s}_{\min}(O|IE,\lambda)$, for $\varepsilon_s\in(0,1)$, given that our protocol did not abort.

If the different uses of the device in the protocol were independent and identical, getting a bound on $H^{\varepsilon_s}_{\min}(O|IE,\lambda)$ is rather easy. On the intuitive level, the total amount of entropy in that case is the sum of the entropies in each round of the protocol~\cite{tomamichel2009fully,devetak2005distillation}. However, as the adversary is the one preparing the device, there is no reason to believe that the device behaves in an independent and identical way. The analysis is therefore more delicate. 

To overcome this difficulty we uses a new information-theoretic tool, called the entropy accumulation theorem~\cite{EAT}, to bound the total amount of smooth
min-entropy, in a sequential processes, using the von~Neumann entropy of a single step of the process. More precisely, we use the framework
developed in~\cite{RotemEAT} for proving security of device-independent cryptographic protocols using the entropy accumulation theorem. In~\cite{RotemEAT} the entropy accumulation theorem was used to prove security of device-independent key distribution and randomness expansion protocols. We adapt the different steps to our scenario of randomness amplification with the MDL inequalities. 


To prove a lower bound on $H^{\varepsilon_s}_{\min}(O|IE,\lambda)$ we start by showing that for any SV-source and device, the sequential process defined by the rounds of our protocol and the actions of the device fulfil the prerequisites of the entropy accumulation theorem. Next, using Equation~\eqref{eq:intro_von_nuemann} we devise a ``min-tradeoff function''. This function quantifies the ``worst-case von Neumann entropy'' that is accumulated in a single round of the protocol, while taking into account the observed violation of the MDL inequality. Once this function is constructed we can apply the techniques of~\cite{EAT,RotemEAT} to derive a bound on $H^{\varepsilon_s}_{\min}(O|IE,\lambda)$. 
The first order term of the lower bound on $H^{\varepsilon_s}_{\min}(O|IE,\lambda)$ is $n H(O_i|I_i E, \lambda)$, where $n$ is the number of rounds of the protocol. That is, $H^{\varepsilon_s}_{\min}(O|IE,\lambda)\in \Omega(n)$, which is optimal.  
For more details, see Section~\ref{sec:soundness}.

\subsubsection*{Step 4: Extracting the randomness}

Once a bound on the conditional smooth min-entropy is derived we need to extract the randomness using an extractor. However, since only a single SV-source is available, there is no additional independent source of randomness. Thus, standard seeded or multi-source extractors cannot be used. 

In the last step of our proof we show that the setting that we consider (as in Figure~\ref{fig:setting_intro} above) implies that a newly developed model for quantum-proof multi-source extractors can be used~\cite{Extractors}. The model presented in~\cite{Extractors}, termed the ``Markov model'', deals with extraction from multiple weak sources which are independent only given some side information, possibly quantum. Each of the sources must have sufficient amount of entropy conditioned on that side information. It was proven in~\cite{Extractors} that any (strong) multi-source extractor is also a (strong) quantum-proof multi-source extractor in the Markov model, with some loss in parameters (the exact statements which we use are presented in Section~\ref{sec:extractors}).  

We show that the considered setting implies that 
\[
	\mathrm{I}(O:Z|IE,\lambda) = 0 \;,
\]
meaning that given $I, E$, and $\lambda$, $O$ and $Z$ are independent. Furthermore, the previous step of our proof ensures that $O$ has sufficient amount of entropy conditioned on $IE\lambda$. The same is true for $Z$ since it is taken directly from the SV-source. We can therefore use a strong quantum-proof two source extractor in the Markov model to create the final string $K=\mathrm{Ext}(O,Z)$, which is close to uniform even given $ZIE\lambda$. This implies the security of our protocol. 

The use of this special type of extractors~\cite{Extractors} is what allows us to start with nothing but a single public SV-source and consider quantum side-information. Previous models for quantum-proof multi-source extractors~\cite{kasher2010two,chung2014multi} do not allow for the side information considered in the current setting. Moreover, a \emph{strong} extractor is crucial here since the seed $Z$ is public (as it comes from the public SV-source). 

We remark that $I$ and $Z$ cannot be used directly as the sources for the extractors, although they both have high min-entropy given $\lambda$ and $E$. The reason is that they are not independent given $\lambda E$. The use of the device is therefore necessary in order to create a string $O$ which is ``decoupled'' from $Z$. 

The combination of all the steps above proves the soundness of our protocol.

\subsection{Previous works}\label{sec:related_works}

We now discuss the different works and assumptions and compare them to the current work. See also Table~\ref{tab:work_comp}.

\subsubsection*{Public SV-source}

Colbeck and Renner were the first to consider the task of randomness amplification~\cite{CRrand} and give a ``proof of concept''. There, the relation between the knowledge that an adversary has about a final single bit was bounded using the expected Bell violation . They showed that using a public SV-source with bounded bias ($\mu=0.058$) and a two-component device a single close to uniform bit can be created in the presence of both quantum and non-signalling (super quantum) adversaries. The number of measurements, however, grew with their security parameter and only one bit was produced. Hence any protocol based on such approach would have resulted in a vanishing extraction rate.

Following that, \cite{Gallego2013} improved on the above result by considering a protocol that can accommodate arbitrary bias of the SV-source and tolerate some noise.
Instead of restricting the analysis to quantum adversaries~\cite{Gallego2013} focused on the stronger non-signalling adversaries. Unfortunately, the protocol required the use of many devices --- polynomial in the number of bits used from the source. One can imagine this as requiring a polynomial number of laboratories separated in space, each of which runs a quantum experiment. This is of course unrealistic in any implementation. 

To see why the proof technique of~\cite{Gallego2013} could not be extended to get results similar to ours note the following. First, to deal with an arbitrary bias of the SV-source a 5-party Bell inequality was used. This implies that any protocol based on their Bell inequality would require, at the least, 5 devices (otherwise the violation is meaningless). Second, the final randomness is extracted using a deterministic process, which is only possible since their protocol requires a polynomial number of devices (for details see the discussion in~\cite[Supplementary information C]{Gallego2013}).
To reduce the number of devices one would have to construct strong randomness extractors which are secure in the presence of non-signalling adversaries, but there are indications that such do not exist~\cite{arnon2012limits}.

\subsubsection*{Private SV-source}

In~\cite{brandao2016realistic,ramanathan2015randomness} a protocol using a constant number of devices was constructed, also when considering non-signalling adversaries. In addition, as in our work, the protocol is robust to noise and achieves a non-zero extraction rate.
The crucial difference between~\cite{brandao2016realistic,ramanathan2015randomness} and the current work is that the security proof of~\cite{brandao2016realistic,ramanathan2015randomness} assumes that the SV-source must be private, i.e., no information about the bits produced by the source can leak to the adversary at any point (also after the end of the protocol). 

One might argue that this is not such a strong requirement, especially since we anyhow assume that the final randomness created by the protocol is kept secret. However, there is one critical difference: it is implied by the security definition of randomness amplification protocols (sometimes termed composable; see Section~\ref{sec:secur_def}) that if part of the produced randomness is leaked to the adversary the rest of the bits are still close to uniform. In contrast, when proving security with a private source it is not clear at all what happens when some information about the source is leaked to the adversary. It is nowhere proven (or conjectured) that if partial information about the used source (even a single bit) is leaked the entropy of the produced string remains somewhat high. 

The proof of~\cite{brandao2016realistic,ramanathan2015randomness} cannot be used to get a protocol which can take a public SV-source as input. The reason is that the assumption regarding the privacy of the source is used in order to simplify the security criterion and argue that a classical multi-source extractor can be used to extract the randomness, although a non-signalling adversary is present. 
To allow for a public source one will need a strong multi-source extractor which is secure in the presence of a non-signalling adversary, but as mentioned above it is not clear that such exists. 

We also remark that the simplification of the security definition to a classical one due to the use of private source enabled the analysis of the total amount of min-entropy in the outputs of the device. The same analysis cannot be used as is when considering the case of a public source or when trying to bound the smooth min-entropy as we do here. 
Moreover, in~\cite{brandao2016realistic,ramanathan2015randomness} as well, Bell inequalities with more than two parties are used. Thus, such protocols cannot lead to a protocol that requires only two components as ours. 
 
\begin{table}
\centering
\begin{tabular}{c||c|c|c|c|c|c|c}
	Work & Source & Adversary &  \# Devices &  Public source? &Arbitrary bias? & Robust? & Efficient?  \\
	\hline\hline
	\cite{CRrand} & SV & Q \& NS &  2  & \checkmark & $\times$ & $\times$ & zero  \\
	\cite{Gallego2013} & SV & NS &  poly  & \checkmark & \checkmark & \checkmark  & zero \\
	\cite{brandao2016realistic} & SV &  NS &  4   & $\times$ & \checkmark & \checkmark & \checkmark \\
	{\color{Aquamarine}Current} & {\color{Aquamarine}SV} & {\color{Aquamarine}Q} &  {\color{Aquamarine}2}  & {\color{Aquamarine}\checkmark} & {\color{Aquamarine}\checkmark} & {\color{Aquamarine}\checkmark} & {\color{Aquamarine}\checkmark} \\ 
	\hdashline
	\cite{chung2014physical} & min-entropy & Q &  poly  & \checkmark & \checkmark & slightly  & zero \\ 
	\cite{chunggeneral} & min-entropy & NS &  exp  & \checkmark & \checkmark & slightly  & zero 
\end{tabular}
\caption{Comparison of the different works. Q and NS stand for a quantum and non-signalling (super-quantum) adversary respectively. The number of devices is with respect to the number of bits used from the weak source of randomness. For a more detailed comparison of previous works see also~\cite[Supplementary Information]{brandao2016realistic} and~\cite[Table 1]{acin2016certified}.}\label{tab:work_comp}
\end{table}

\subsubsection*{Public min-entropy source}

In two more recent works~\cite{chung2014physical,chunggeneral} a protocol that can amplify a public min-entropy source was suggested and its security was proven. \cite{chung2014physical} assumed a quantum adversary while~\cite{chunggeneral} considered a non-signalling one. The first part of the protocol in these works takes the min-entropy source and extracts blocks of bits, some of them close to uniform with respect to the used devices, by enumerating all possible seeds. The different blocks are then used as inputs to a randomness expansion protocol~\cite{miller2016robust}. This approach leads to a polynomial number of devices in~\cite{chung2014physical} and exponential in~\cite{chunggeneral}. Furthermore, in both works the security parameter is inverse polynomial in the number of bits used from the source, the efficiency of the protocols vanishes, and the amount of tolerated noise is low. 

A min-entropy source is of course more general than the SV-source considered in the current work. Our work cannot be extended as is to the case of a min-entropy source. On the other hand, it is also not clear how to take the work of~\cite{chung2014physical,chunggeneral} and decrease the number of devices -- to get close to uniform inputs for the randomness expansion protocol starting with a single weak source one must enumerate the seeds; each seed should then be used while running the protocol on a different set of devices. The number of devices (and hence also the zero extraction rate) is thus a fundamental part in the proof technique of~\cite{chung2014physical,chunggeneral}. 

\subsubsection*{Source-device-adversary model}
In~\cite{chung2014physical,chunggeneral} the authors model the relation between the source, the adversary, and the device differently than what we do here. In particular, they allow for some quantum side information about the \emph{source}, in contrast to our $\lambda$ which is classical. 
In all other mentioned works the assumptions regarding the relation between the three components are similar to the ones considered here (though not mentioned explicitly in the same way). In~\cite{wojewodka2016amplifying} a different scenario is considered, but the security analysis is not complete and only restricted SV-sources can be amplified.

\paragraph*{Organisation of the paper.}
We start in Section~\ref{sec:pre} with some preliminaries. In particular, the necessary information regarding the MDL inequalities and two-source extractors in the Markov model is given. Section~\ref{sec:single-round} is devoted to proving a relation between the observed violation of an MDL inequality and the knowledge that a quantum adversary can gain about the output of the device. In Section~\ref{sec:RAP} we state our randomness amplification protocol and prove its security. We end in Section~\ref{sec:conclustions} with some open questions.

\section{Preliminaries}\label{sec:pre}

\subsection{Notation}

In the following we will denote by 
\begin{itemize}
\item capital letters classical registers (i.e., random variables) and quantum registers.
\item a subscript register, e.g. $X_i$, a single register with label $i$ and a superscript register, e.g. $X^i$, the sequence of registers with labels up to $i$, i.e., $X^i = X_1...X_i$.
\item the operator $\oplus$ addition modulo 2, sometimes also called the XOR operation.
\item $\mathbb{P}_{\mathcal{A}}$ the set of probability distributions over an alphabet $\mathcal{A}$.
\end{itemize}

\subsection{Quantum mechanics}
We introduce the concepts of quantum mechanics that we use throughout our work.
For a more detailed view on quantum mechanics in quantum information theory we refer to Nielsen and Chuang~\cite{nielsen2010quantum}.

A state of a quantum mechanical system can generally be described by density operators.
\begin{defi}[Density operator]
\label{def:density-operator}
	A density operator $\rho$ on a Hilbert space $\mathcal{H}$ is a normalized positive operator on $\mathcal{H}$, i.e., $\rho \geq 0$ and $\tr{\rho} = 1$.
	A density operator is said to be pure if it has the form $\rho = \ketbra{\psi}{\psi}$, where, using Dirac notation, $\ket{\psi} \in \mathcal{H}$.
\end{defi}
\noindent 
A bipartite quantum state on two Hilbert spaces $\mathcal{H}_{A}$ and $\mathcal{H}_{B}$ is described by a density operator $\rho_{AB}$ on the Hilbert space $\mathcal{H}_{A} \otimes \mathcal{H}_{B}$.
If we want to recover the state on $\mathcal{H}_{A}$ alone we take the partial trace, $\rho_{A} = \tr_{B}(\rho_{AB}) = \sum_{b} (\mathrm{id}_{A} \otimes \bra{b}) \rho_{AB} (\mathrm{id}_{A} \otimes \ket{b})$, where $\{\ket{b}\}_{b}$ is an orthonormal basis (ONB) on $\mathcal{H}_{B}$.

Some special density operators are given in the following.
\begin{enumerate}[(i)]
	\item The density operator is said to be fully mixed if $\rho = \frac{1}{d} \mathrm{id}$, where $d = \mathrm{dim}(\mathcal{H})$.
	\item The density operator $\rho_{XA}$ is said to be a classical-quantum state (cq-state) if $\rho = \sum_{i=1}^{d} p_{i} \ketbra{i}{i} \otimes \rho_{A}^{i}$, where $\{\ket{i}\}_{i}$ is an ONB on a $d$-dimensional Hilbert space and $\sum_{i=1}^{d} p_{i} = 1$ with $p_{i} \geq 0 \, \forall \, i$.
	The notion can be extended to an arbitrary amount of classical registers.
\end{enumerate}

We describe the evolution of a quantum state by completely positive trace preserving (CPTP) maps.
\begin{defi}[CPTP map]
\label{def:CPTPM}
	A linear map $\mathcal{E} \in \mathrm{Hom}(\mathrm{End(\mathcal{H}_{A})}, \mathrm{End(\mathcal{H}_{B})})$ is said to be trace preserving if, for any $\rho \in \mathrm{End(\mathcal{H}_{A})}$,
	\begin{equation*}
		\tr \left(\mathcal{E}(\rho) \right) = \tr (\rho) \,.
	\end{equation*}
	The map $\mathcal{E}$ is said to be completely positive if, for any $\rho_{AR} \in \mathrm{End(\mathcal{H}_{A} \otimes \mathcal{H}_{R})}$ and $\rho_{AR} \geq 0$,
	\begin{equation*}
		(\mathcal{E} \otimes \mathcal{I}_{R}) (\rho_{AR}) \geq 0 \,,
	\end{equation*}
	where $\mathcal{H}_{R}$ is any additional Hilbert space and $\mathcal{I}_{R}$ is the identity map on that Hilbert space.
\end{defi}

When talking about the closeness of quantum states we quantify it by the trace distance which describes how well two states can be distinguished.
\begin{defi}[Trace distance]
\label{def:trace-distance}
	The trace distance between two density operators $\rho$ and $\sigma$ on a Hilbert space $\mathcal{H}$ is defined as
	\begin{equation*}
		\delta(\rho, \sigma) = \frac{1}{2} \| \rho - \sigma \|_{1} = \frac{1}{2} \tr \left( \sqrt{(\rho-\sigma)^{\dagger} (\rho-\sigma)} \right) \,.
	\end{equation*}
\end{defi}

\subsection{Entropies and Markov chains}
\label{sec:entropies-markov}

\textbf{Entropies} We make use of the Shannon entropy for classical random variables~\cite{shannon} and its quantum equivalent, the von Neumann entropy~\cite{vonneumann}.
The conditional Shannon entropy is defined as follows.

\begin{defi}[Shannon entropy]
\label{def:shannon-entropy}
	Let $X,Y$ be discrete random variables over the alphabets $\mathcal{X}, \mathcal{Y}$ distributed according to the probability distribution $P_{XY}$.
	Then the conditional Shannon entropy is defined as
	\begin{equation*}
		H(X|Y) = - \sum_{x \in \mathcal{X}, y \in \mathcal{Y}} P_{XY}(x,y) \log_{2}P_{X|Y=y}(x) \,.
	\end{equation*}
\end{defi}

Its quantum equivalent, the von Neumann entropy, is defined for a quantum state $\rho_{AE}$.

\begin{defi}[von Neumann entropy]
\label{def:vonNeumann-entropy}
	Let $\mathcal{H}_{A}$ and $\mathcal{H}_{E}$ be two Hilbert spaces and $\rho_{AE}$ a quantum state on $\mathcal{H}_{A} \otimes \mathcal{H}_{E}$.
	Then the von Neumann entropy is defined as 
	\begin{equation*}
		H(AE)_{\rho_{AE}} = - \tr{\left( \rho_{AE} \log \rho_{AE} \right)} \,.
	\end{equation*}
	Furthermore, the conditional von Neumann entropy is defined as
	\begin{equation*}
		H(A|E)_{\rho_{AE}} = H(AE)_{\rho_{AE}} - H(E)_{\rho_{AE}} \,.
	\end{equation*}
\end{defi}

Furthermore we employ the (smooth) min-entropy, both in the classical and in the quantum case.
The (smooth) min-entropy, was introduced by Renner~\cite{RennerPhD}, for a classical quantum state.

\begin{defi}[Min-entropy]
\label{def:min-entropies}
	Let $\mathcal{H}_{A}$ and $\mathcal{H}_{E}$ be two Hilbert spaces and $\rho_{AE} = \sum_{a} p_{a} \ketbra{a}{a} \otimes \rho_{E}^{a}$ a classical quantum state on $\mathcal{H}_{A} \otimes \mathcal{H}_{E}$. Then the conditional min-entropy is defined as
	\begin{equation*}
		H_{\mathrm{min}}(A|E) = -\log p_{\mathrm{guess}}(A|E) \,,
	\end{equation*}
	where $p_{\mathrm{guess}}(A|E)$ is the maximal probability of guessing $A$ given the quantum system $E$
	\begin{equation*}
		p_{\mathrm{guess}}(A|E) = \max_{\{M_{E}^{a}\}_{a}} \left| \sum_{a} p_{a} \tr{\left( M_{E}^{a}\rho_{E}^{a} \right)} \right| \,.
	\end{equation*}
	The maximization ranges over all sets of POVMs $\{M_{E}^{a}\}_{a}$ on $E$.
\end{defi}

The smooth min-entropy is a smoothed version of the min-entropy, meaning it is the maximum of the min-entropy in an $\varepsilon$-neighbourhood around the probability distribution or quantum state.

\begin{defi}[Smooth min-entropy]
\label{def:smooth-min-entropy}
	Let $\mathcal{H}_{A}$ and $\mathcal{H}_{E}$ be two Hilbert spaces and $\rho_{AE} = \sum_{a} p_{a} \ketbra{a}{a} \otimes \rho_{E}^{a}$ a classical quantum state on $\mathcal{H}_{A} \otimes \mathcal{H}_{E}$. Then the conditional smooth min-entropy is defined as
	\begin{equation*}
		H_{\mathrm{min}}^{\varepsilon}(A|E)_{\rho_{AE}} = \max_{\sigma_{AE} \in \mathcal{B}^{\varepsilon}(\rho_{AE})} H_{\mathrm{min}}(A|E)_{\sigma_{AE}} \,,
	\end{equation*}
	where $\mathcal{B}^{\varepsilon}(\rho_{AE})$ is the set of sub-normalised states that are at most $\varepsilon$ away from $\rho_{AE}$ in terms of purified distance (see~\cite{tomamichel2010entropyduality}).
\end{defi}
When the quantum state is clear from the context we drop the subscript of the entropies and simply write $H(A|E)$ instead of $H(A|E)_{\rho_{AE}}$.

The mutual information $I(X:Y|Z)$ quantifies the common information of $X$ and $Y$, given $Z$ and can be described as a function of the entropies of the parts.

\begin{defi}[Mutual information]
\label{def:mutual-info}
	Let $X,Y$ and $Z$ be random variables.
	Then the Shannon mutual information is defined as
	\begin{equation*}
		I(X:Y|Z) = H(X|Z) - H(X|YZ) \,.
	\end{equation*}
	
	In the quantum case, let $\rho_{XYZ}$ be a quantum state.
	Then the quantum mutual information s defined as
	\begin{equation*}
		I(X:Y|Z) = H(X|Z) - H(X|YZ) \,.
	\end{equation*}
\end{defi}

\begin{defi}[Markov chain]
	A set of random variables $X,Y,X$, or a tripartite quantum state $\rho_{XYZ}$, is said to form a (quantum) Markov chain, denoted by $X \leftrightarrow Y \leftrightarrow Z$, if the conditional mutual information $I(X:Z|Y)$ vanishes.
\end{defi}

\subsection{Weak sources of randomness}
\label{sec:weak-sources}

We consider two classes of weak random sources, an SV sources and a min-entropy source.
The SV source was first introduced by Santha and Vazirani~\cite{SV}.
Formally an SV source is defined as follows.

\begin{defi}[$\mu$-SV source, \cite{SV}]
\label{def:SV}
	Let $S$ be any source producing a sequence of binary random variables $X_i$ that can depend on some side information $\lambda$.
	Then, for any $\mu \in ( 0, \frac{1}{2} )$, $S$ is called an $\mu$-SV source if the random variables $X_{i}$ are distributed according to some probability distribution $P_{X_i|X^{i-1}, \lambda}$ that depends on $\lambda$ and satisfies
	\begin{equation}
		\frac{1}{2} - \mu \leq P_{X_i|X^{i-1}, \lambda}(0|x^{i-1}) \leq \frac{1}{2} + \mu \quad \forall i,x^{i-1} \, .
	\end{equation} 
\end{defi}
\noindent We see that an SV source produces bits that are all, to some extent, random, even given the previous bits and some possible side information.

An MDL source produces two bits at a time and bounds the probability of each outcome in a similar way as the SV source.

\begin{defi}[$\mu$-MDL source, \cite{MDL}]
\label{def:MDL-source}
	Let $S$ be any source producing binary random variables $X_{i}$ and $Y_{i}$ that can depend on some side information $\lambda$.
	Then, for any $\mu = \{\mu_\mathrm{min}, \mu_\mathrm{max}\} \in ( 0, \frac{1}{4} ) \times ( \frac{1}{4}, 1 )$, $S$ is called a $\mu-\mathrm{MDL}$ source if the outputs are distributed according to some probability distribution $P_{X_{i}Y_{i}|X^{i-1}Y^{i-1}, \lambda}$ 
	that depends on $\lambda$ and satisfies
	\begin{equation}
		\mu_\mathrm{min} \leq P_{X_{i}Y_{i}|X^{i-1}Y^{i-1}, \lambda}(x_{i}y_{i}|x^{i-1}y^{i-1}) \leq \mu_\mathrm{max} \quad \forall x^{i},y^{i} \, . \label{eq:MDL-source}
	\end{equation}	
\end{defi}

In our work we us the notation of MDL sources. These are directly related to the SV sources as shown below.

\begin{lma}
	For all $0 \leq \mu \leq \nicefrac{1}{2}$ a $\mu$-SV source is a $\left\{ \left( \frac{1}{2} - \mu \right)^2, \left( \frac{1}{2} + \mu \right)^2 \right\}$-MDL source. \label{lma:SV-MDL}
\end{lma}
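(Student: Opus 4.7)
The plan is to view the MDL source as the SV source with its output stream grouped into consecutive pairs. Concretely, identify the SV output sequence with the interleaved sequence $X_1, Y_1, X_2, Y_2, \ldots$, so that producing the $i$-th MDL pair $(X_i, Y_i)$ corresponds to producing the $(2i-1)$-th and $2i$-th bits of the SV source. Under this identification, the ``history'' available before bit $Y_i$ is the MDL history $X^{i-1}Y^{i-1}$ together with $X_i$, and the side information $\lambda$ is carried along unchanged.

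The first step is to apply the chain rule for conditional probabilities:
\[
P_{X_iY_i|X^{i-1}Y^{i-1},\lambda}(x_iy_i|x^{i-1}y^{i-1})
= P_{X_i|X^{i-1}Y^{i-1},\lambda}(x_i|x^{i-1}y^{i-1}) \cdot P_{Y_i|X^iY^{i-1},\lambda}(y_i|x^iy^{i-1}).
\]
Both factors are of the form ``probability that the next SV bit takes a specified value, conditioned on all previous SV bits and on $\lambda$,'' so Definition~\ref{def:SV} gives that each factor lies in $[\tfrac{1}{2}-\mu, \tfrac{1}{2}+\mu]$, regardless of whether the specified value is $0$ or $1$ (since the complement of a probability in $[\tfrac{1}{2}-\mu, \tfrac{1}{2}+\mu]$ is again in the same interval).

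The second step is to multiply these bounds. Since both factors are nonnegative and bounded in $[\tfrac{1}{2}-\mu, \tfrac{1}{2}+\mu]$, their product satisfies
\[
\left(\tfrac{1}{2}-\mu\right)^{2} \leq P_{X_iY_i|X^{i-1}Y^{i-1},\lambda}(x_iy_i|x^{i-1}y^{i-1}) \leq \left(\tfrac{1}{2}+\mu\right)^{2},
\]
which is exactly the MDL condition \eqref{eq:MDL-source} with $\mu_\mathrm{min} = (\tfrac12-\mu)^2$ and $\mu_\mathrm{max} = (\tfrac12+\mu)^2$. One should also quickly check that these fall in the admissible ranges $(0,\tfrac14)$ and $(\tfrac14,1)$ respectively, which is immediate from $\mu\in(0,\tfrac12)$.

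I do not expect any serious obstacle; the only mildly subtle point is being careful that the SV bound applies symmetrically to both outcomes $0$ and $1$, so that the two factors in the chain rule can be independently squeezed between $\tfrac{1}{2}-\mu$ and $\tfrac{1}{2}+\mu$ no matter what $(x_i,y_i)$ is specified. Apart from that the argument is a one-line application of the chain rule followed by a product of intervals.
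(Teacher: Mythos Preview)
Your proof is correct and essentially identical to the paper's: both apply the chain rule to factor the pair probability into two single-bit SV conditionals and then multiply the $[\tfrac12-\mu,\tfrac12+\mu]$ bounds to obtain the squared endpoints. Your version is slightly more explicit about the relabelling into MDL variables and the symmetry in the outcome value, but the argument is the same.
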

\begin{proof}
	Employing the definition of conditional probabilities $P_{X_i|X^{i-1} , \lambda} = \frac{P_{X^i , \lambda}}{P_{X^{i-1} , \lambda}}$ and $P_{X_{i+1}|X^{i} , \lambda} = \frac{P_{X^{i+1} , \lambda}}{P_{X^{i} , \lambda}}$ we find $P_{X_{i+1} X_{i} | X^{i-1} , \lambda} = \frac{P_{X^{i+1} , \lambda}}{P_{X^{i-1} , \lambda}} = P_{X_i|X^{i-1} , \lambda} P_{X_{i+1}|X^{i} , \lambda}$. From that it follows immediately that the constraints for two consecutive outputs of the SV source are
	\begin{equation}
		\left( \frac{1}{2} - \mu \right)^2 \leq P_{X_{i+1}X_i|X^{i-1}, \lambda}(x_{i+1}x_i|x^{i-1}) \leq \left( \frac{1}{2} + \mu \right)^2 \quad \forall i,x^{i+1} \, .
	\end{equation}
	Choosing $\mu_\mathrm{min} = \left( \frac{1}{2} - \mu \right)^2$ and $\mu_\mathrm{max} = \left( \frac{1}{2} + \mu \right)^2$ this satisfies Definition~\ref{def:MDL-source} of an MDL source.
\end{proof}

Finally a min-entropy source is a source that produces a bit string that has a min-entropy which is lower bounded by some constant.

\begin{defi}[$k$-min-entropy source, \cite{min-entropy-source}]
\label{def:min-entropy-source}
	Let $S$ be any source producing a sequence of binary random variables $X_i$ that can depend on some side information $\lambda$.
	Furthermore let $n$ be the arbitrary length of that sequence.
	Then $S$ is said to be a $k$-min-entropy source if the min-entropy of the bit string conditioned on the side information is lower bounded by $k$, i.e., $H_{\mathrm{min}}(X^{n}|\lambda) \geq k$.
\end{defi}

It is worthwhile noticing that any SV source can also be used as a min-entropy source.
The reversed implication, however, is not true, since in an SV source each new bit must contain a minimal amount of randomness.
In this sense the output of the SV source has more structure.

With regards to randomness amplification it has been shown by Santha and Vazirani~\cite{SV} that, classically, a single SV-source, private or public, cannot be amplified.
If one has access to two or more independent sources of which at least one is private, one can extract randomness from them using a randomness extractor.
However, if all the sources are public this is still not possible.

\subsection{Non-local games and Bell inequalities}
\label{sec:nl-games}

\paragraph*{Non-local games.} During a non-local game two players, Alice and Bob, are given questions by a verifier and have to return answers.
Both the questions and answers can be described simply as numbers.
The questions, $x$ and $y$, are taken from alphabets (we restrict ourselves to binary alphabets) $\mathcal{X}$ and $\mathcal{Y}$, and distributed according to some probability distribution $P_{XY}$.
Similarly, the answers, $a$ and $b$, can be chosen from (binary) alphabets $\mathcal{A}$ and $\mathcal{B}$.
Alice and Bob win a round of the game if the questions and answers satisfy a previously defined condition.
Formally we can think of a function $w: \mathcal{X} \times \mathcal{Y} \times \mathcal{A} \times \mathcal{B} \rightarrow \mathcal{W}$, where $\mathcal{W}$ is the set describing the outcome of the game.

In order to win the game with the highest probability Alice and Bob can, before the game starts, choose a strategy.
After the game starts they are no longer allowed to communicate.
The rules of the game are that the players are not allowed to communicate, one player does not know the other player's question, and that the players cannot influence the questions they are asked.

In terms of strategy we distinguish two classes, the first one being classical local hidden variable (LHV)/ shared randomness strategies.
In an LHV strategy, Alice and Bob share some common information $\lambda$ and, according to the common information, choose their answers deterministically.
The second class of strategies are quantum strategies.
Using a quantum strategy, Alice and Bob can share a multipartite quantum state.
They can then use the questions to choose measurements that are done on the quantum state.
The results of these measurements can then be used to produce answers for the questions.

It can be shown that quantum strategies are strictly more powerful than LHV strategies.
Namely, for some non-local games, there exist quantum strategies that achieve a winning probability that is higher than any LHV strategy can achieve.
We call the probability distributions $P_{ABXY}$, of the questions and answers, that characterise these strategies non-local statistics.

Using this fact that strategies producing non-local statistics are more powerful than LHV strategies, we can certify quantumness using non-local games.
We can do this by analysing the winning probability of the strategy in the game.
If the winning probability is higher than the threshold for any LHV strategy we can conclude that Alice and Bob must have used a quantum strategy.

\paragraph*{Bell inequalities.} An equivalent description of non-local games are Bell inequalities.
In this scenario we consider Bell experiments; i.e., experiments where we have two devices that take inputs (the questions) and produce outputs (the answers).
The probability distributions over the inputs and outputs can, similar to the case of non-local games, be divided into LHV statistics and quantum statistics.
However, the winning probability is replaced by a Bell parameter, a general function of the probability distribution, $f(P_{ABXY})$.
The Bell inequality is then a constraint on the Bell parameter that is satisfied by all LHV statistics.
A Bell inequality could for example look as follows
\begin{equation*}
	f(P_{ABXY}) \leq f_{\mathrm{LHV}} \,.
\end{equation*}

In the Bell experiments we consider some hidden side information $\lambda$.
The assumptions that we make about the setting are that firstly, given the inputs and the side information, the outputs do not depend on each other.
Secondly, we assume that , given $x$ and $\lambda$, $a$ does not depend on $y$, and, given $y$ and $\lambda$, $b$ does not depend on $x$.
Finally we require that the questions be independent of the side information.
Given these assumptions we can, similar as with non-local games, certify quantumness by calculating the Bell parameter and comparing it to the local threshold.
If the Bell parameter exceeds the local threshold we know that the statistics must be non-local.
Statistics that are not non-local are called local.
The set of local statistics is called the local polytope, $\mathcal{L}$.
We can think of the facets of the local polytope as the Bell inequalities.
If one Bell inequality is violated by $P_{AB|XY}$ the statistics lie outside of $\mathcal{L}$ and are thus non-local.
The local polytope with its facets is schematically depicted in Figure~\ref{fig:local_polytope}.

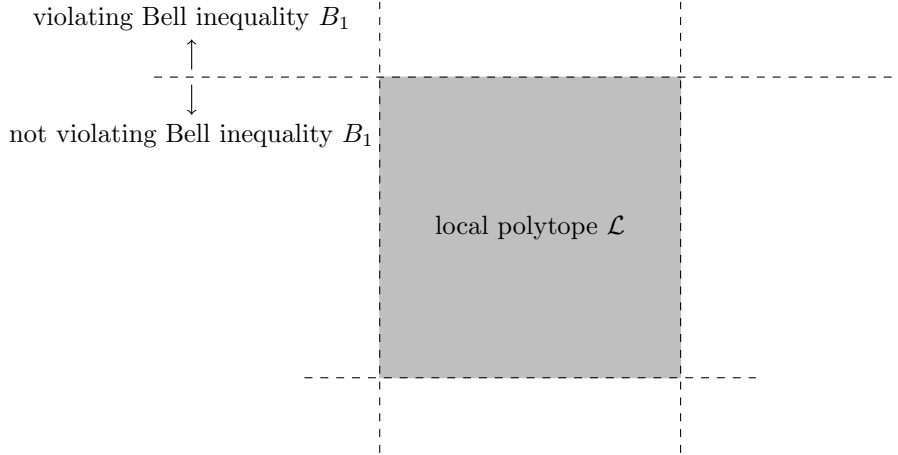
\begin{figure}
	\centering
	\begin{tikzpicture}
		\filldraw[lightgray] (0,0) rectangle (4,4);
		\node at (2,2) {local polytope $\mathcal{L}$};
		\draw[dashed] (-3,4) -- (7,4);
		\draw[dashed] (4,-1) -- (4,5);
		\draw[dashed] (-1,0) -- (5,0);
		\draw[dashed] (0,-1) -- (0,5);
		\draw[->] (-2.5,4.1) -- (-2.5,4.5);
		\node[above] at (-2.5,4.5) {violating Bell inequality $B_1$};
		\draw[->] (-2.5,3.9) -- (-2.5,3.5);
		\node[below] at (-2.5,3.5) {not violating Bell inequality $B_1$};
	\end{tikzpicture}
	\caption{Schematic description of a local polytope with boundaries (facets)  described by Bell inequalities. For example the top horizontal dashed line is determined by the Bell inequality $B_1$. A violation of this Bell inequality means that the probability distribution lies above the dashed line.}
\label{fig:local_polytope}
\end{figure}

\subsubsection{The CHSH game}

As an example of a non-local game we consider the CHSH game.
The winning function for the game is
\begin{align*}
	w: \mathcal{X} \times \mathcal{Y} \times \mathcal{A} \times \mathcal{B} &\rightarrow \{0,1\} \\
	(x,y,a,b) &\mapsto \begin{cases} 1 &\text{if } a\oplus b = x \wedge y \\ 0 &\text{otherwise} \end{cases} \,,
\end{align*}
meaning the game is won, if and only if the questions and answers satisfy $a\oplus b = x \wedge y$.
It can be shown that, if the questions are uniformly distributed, no classical strategy can achieve a winning probability higher than $p_{\mathrm{win}} = \frac{3}{4}$.
However, if Alice and Bob share a maximally entangled state and do measurements according to the questions and use the outputs as answers, they can achieve a winning probability $p_{\mathrm{win}} = \frac{2 + \sqrt{2}}{4}$.

The CHSH game is the game corresponding to the CHSH inequality~\cite{CHSH},
\begin{equation}
	\beta \equiv \sum_{a,b,x,y \in \{0,1\}} (-1)^{a+b+xy} P_{AB|XY}(ab|xy) \leq 2 \,. \label{eq:CHSH}
\end{equation}
An equivalent version of the CHSH inequality (while enforcing non-signalling condition) is
\begin{equation}
	\alpha \equiv P_{AB|XY}(00|00) - \big( P_{AB|XY}(01|01) + P_{AB|XY}(10|10) + P_{AB|XY}(00|11) \big) \leq 0 \label{eq:Eberhard} \,.
\end{equation}
This inequality was first introduced by Eberhard~\cite{EberhCHSH}.
Within quantum mechanics we can have non-local values $\beta \in [2,2 \sqrt{2}]$ and $\alpha \in [0,\frac{\sqrt{2}-1}{2}]$.
Thus, given the affine relation between the two values we find the relation
\begin{equation}
	\beta = 4 \alpha + 2 \quad \Leftrightarrow \quad \alpha = \frac{\beta}{4} - \frac{1}{2} \,. \label{eq:relation_alpha_beta}
\end{equation}

\subsection{Measurement dependent locality}
\label{sec:MDL}

In standard non-local games we usually assume that the questions are uniformly distributed and cannot be influenced by anyone.
This assumption is called measurement independence.
P\"utz \textit{et.\@ al}~\cite{MDL} weakened the assumption of measurement independence to an assumption of limited measurement dependence, where Eve can influence the distribution of the questions to some extent, and studied Bell inequalities in this scenario.
A schematic drawing of the setting in this scenario is shown in Figure~\ref{fig:intro_mdl}.
The inputs can now depend on some hidden information and need not be uniform anymore.
The way Eve can influence the distribution of the questions is described by an MDL source (Definition~\ref{def:MDL-source}).
The main result of their work is that we can verify the usage of quantum strategies for any amount of measurement dependence, as long as $\mu_{\mathrm{min}} > 0$.

In order to verify quantum strategies with an MDL source, we need a new Bell inequality, an MDL inequality~\cite{MDL}
\begin{equation}
	S_{\mu} \equiv \mu_\mathrm{min} P_{ABXY}(0000) - \mu_\mathrm{max} \big( P_{ABXY}(0101) + P_{ABXY}(1010) + P_{ABXY}(0011) \big) \leq 0 \, . \label{eq:MDL_ineq}
\end{equation}
Using this inequality we verify quantum strategies if $S_{\mu} \geq 0$.
Furthermore we now call statistics that do not violate Equation~\eqref{eq:MDL_ineq} measurement dependent local (MDL).
This MDL inequality translates into a game with winning function
\begin{align*}
	w(a,b,x,y) = \, \begin{cases} \mu_{\mathrm{min}} & \text{if } (a,b,x,y) = (0,0,0,0) \\ 
    -\mu_{\mathrm{max}} & \text{if } (a,b,x,y) \in \{(0,1,0,1), (1,0,1,0), (0,0,1,1)\} \\
	0 & \text{otherwise} \,.\end{cases} 
\label{eq:winning_func}
\end{align*}

\subsection{Untrusted device}
\label{sec:untrusted-devices}
In our randomness amplification protocol we use two separated untrusted devices to play a non-local game.
Untrusted in this context means that we assume that the adversary produces the devices and can produce them (almost) anyway she wants.
However, we enforce the condition that we can use the device to play a two-player non-local game with binary inputs and outputs; i.e., upon receipt of a binary input, the devices produce a binary output.
This condition can be easily checked during the execution of the protocol.
If the devices do not produce outputs or produce outputs that are not binary we can simply abort the protocol.

Moreover, we assume that quantum mechanics is complete. 
Thus we can model the inner workings of the device as doing measurements on an unknown quantum state.
The measurements can depend on the inputs and the outputs can depend on the outcome of the measurement.
If the devices are used sequentially in a number of rounds like in our protocol, the measurements can be different in each round.
In addition the new quantum state on which the measurements are done can depend on previous rounds.

In a device-independent adversarial scenario we play the non-local game to verify the quantumness of the inner workings of the devices.
Hence we can think of the adversary implementing a strategy, i.e., a specific set of states and measurements, in the device such that she gains a maximal amount of knowledge of the outputs.
This strategy also includes her attempt to trick us into thinking that the devices produce non-local statistics whereas they are not.
Since the adversary is also assumed to be the manufacturer of the devices she can build a third device that contains a purification of the quantum states in the two other devices.

\subsection{Quantum-proof randomness extractors in the Markov model}
\label{sec:extractors}

A (classical) two-source extractor is defined as follows.
\begin{defi}[Two-source extractor, \cite{raz05extractors}]\label{def:two-source}
	A function $\mathrm{Ext}:\{0,1\}^{n_1}\times\{0,1\}^{n_2}\rightarrow\{0,1\}^m$ is called a $(k_1,k_2,\varepsilon)$ two-source extractor if for any two independent sources $X_1,X_2$ with $H_{\text{min}}\left(X_1\right)\geq k_1$ and $H_{\text{min}}\left(X_2\right)\geq k_2$, we have 
	\[
		\frac{1}{2}\| \rho_{\mathrm{Ext}(X_1,X_2)} - \rho_{U_m}  \| \leq \varepsilon \;,
	\]
	where $\rho_{U_m}$ is the fully mixed state on a system of dimension $2^m$. $\mathrm{Ext}$ is said to be \emph{strong in the $i$'th input} if
  	\[
		\frac{1}{2} \|  \rho_{\mathrm{Ext}(X_1,X_2)X_i} - \rho_{U_m} \otimes \rho_{X_i} \| \leq \varepsilon \;. 
	\]
	 If $\mathrm{Ext}$ is not strong in any of its inputs it is said to be weak. 
\end{defi}

In our work we use extractors that work in the presence of quantum side information described by the Markov model introduced in~\cite{Extractors}.
In the Markov model we assume that the two sources of a two-source extractor together with the side information $C$ form a Markov chain: $I(X_{1} : X_{2} | C) = 0$ (where $X_1$ and $X_2$ are classical registers, while $C$ can hold a quantum state).
This can be interpreted as the requirement that, given the side information, the two sources are independent.
Formally the quantum Markov model and a quantum-proof two-source extractor in the Markov model are defined as follows.

\begin{defi}[Quantum Markov model, \cite{Extractors}]
\label{def:quantum_markov_mutual_info}
  	A ccq-state \(\rho_{X_1X_2C}\) is said to belong to the Markov model if $X_1 \leftrightarrow C \leftrightarrow X_2$ is a Markov chain (i.e., $I(X_1:X_2|C) = 0$).
\end{defi}

\begin{defi}[Strong quantum-proof two-source extractor in the Markov model, \cite{Extractors}]
\label{def:quantum-two-source.mc}
	A function $\mathrm{Ext}: \{0,1\}^{n} \times \{0,1\}^{d} \to \{0,1\}^m$ is a $(k_1,k_2,\varepsilon)$ quantum-proof two-source extractor in the Markov model, strong in the second source, if for all sources $X_1,X_2$, and quantum side information $C$, where $X_1 \leftrightarrow C \leftrightarrow X_2$ form a Markov chain, and with min-entropy $H_{\text{min}}\left(X_1|C\right)\geq k_1$ and $H_{\text{min}}\left(X_2|C\right)\geq k_2$, we have
	\begin{equation*}
	 	\frac{1}{2}\| \rho_{\mathrm{Ext}(X_1,X_2)X_2C} - \rho_{U_m} \otimes \rho_{X_2C} \| \leq \varepsilon \;. 
	\end{equation*}
	where $\rho_{\mathrm{Ext}(X_1,X_2)C} = \mathrm{Ext} \otimes \mathcal{I}_C \rho_{X_1X_2C}$ and $\rho_{U_m}$ is the fully mixed state on a system of dimension $2^m$. 
\end{defi}

The main result of~\cite{Extractors} is that any (classical) two-source extractor is also quantum-proof in the Markov model:
\begin{lma}\label{lem:quantum_markov_two_source}
	Any $(k_1,k_2,\varepsilon)$-[strong] two-source extractor is a $\left(k_1 + \log \frac{1}{\varepsilon}, k_2 + \log\frac{1}{\varepsilon},\sqrt{3\varepsilon \cdot 2^{(m-2)}}\right)$-[strong] quantum-proof two-source extractor in the Markov model, where $m$ is the output length of the extractor.
\end{lma}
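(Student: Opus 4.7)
The plan is to reduce the quantum Markov-model statement to the classical two-source extractor guarantee by combining a structural decomposition of quantum Markov chains with a norm-conversion trick. First I would invoke the structure theorem of Hayden--Jozsa--Petz--Winter for tripartite states satisfying $\mathrm{I}(X_1:X_2|C)=0$: the Hilbert space of the middle system decomposes as $\mathcal{H}_C = \bigoplus_j \mathcal{H}_{C_j^L}\otimes \mathcal{H}_{C_j^R}$, and
\[
\rho_{X_1X_2C} \;=\; \sum_j p_j \,\sigma^{(j)}_{X_1 C^L_j}\otimes \tau^{(j)}_{C^R_j X_2}.
\]
Reading off the block label $J$ as an additional classical register, conditioned on $J$ the two sources $X_1,X_2$ become genuinely independent, each equipped with its own quantum side register ($C^L_J$ for $X_1$, $C^R_J$ for $X_2$). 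This step converts the Markov assumption into a concrete ``separable-side-information'' structure.

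Second, I would propagate the conditional min-entropy assumptions through this decomposition. A standard smoothing/chain-rule argument shows that $H_{\mathrm{min}}(X_i|C)\geq k_i + \log(1/\varepsilon)$ implies that on a subset of branches $J$ of total weight at least $1-\varepsilon$ one has $H_{\mathrm{min}}(X_i|C^{L/R}_J, J=j)\geq k_i$. The $\log(1/\varepsilon)$ overhead in the statement of the lemma is exactly the cost of this reduction.

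Third, on each good branch I would apply a quantum-to-classical conversion source by source. Writing the classical two-source extractor guarantee in terms of the collision probability of $\mathrm{Ext}(X_1,X_2)$ yields an $\ell_2$-distance bound from uniform, and the inequality $\|A\|_1\leq \sqrt{d}\,\|A\|_2$ with $d=2^m$ lifts this into a trace-distance bound in the presence of the quantum side register $C^L_J C^R_J$, giving the characteristic $\sqrt{2^m\cdot\varepsilon}$ degradation. The exact numerical factor $3\cdot 2^{m-2}$ in the lemma is then obtained by carefully bookkeeping the smoothing loss together with the $\ell_2$-to-$\ell_1$ constant. For the strong variant I would treat $(X_2, C^R_J, J)$ as the side register on which $\mathrm{Ext}(X_1,X_2)$ must look uniform, so that only quantum-security in the first input is required; the symmetric role of $X_1,X_2$ takes care of the other strong variant. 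Convexity of the trace distance over the mixture $\sum_j p_j$ finally combines the per-branch bounds into the claimed global bound.

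The main obstacle I expect is the interaction between the HJPW decomposition and the min-entropy smoothing: one must verify that conditioning on the block label $J$ is ``cheap'', in the sense that it costs at most $\log(1/\varepsilon)$ bits of min-entropy on a set of branches of total weight $1-\varepsilon$, and that within each block the reduced state is still a bona fide classical-quantum state to which the $\ell_2$ reduction applies. Once this structural cleanness is in place, the rest of the argument is a careful accounting of the smoothing and norm-conversion losses, which together produce the stated parameters.
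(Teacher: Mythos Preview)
The paper does not contain a proof of this lemma: it is stated as ``the main result of~\cite{Extractors}'' and imported without argument, so there is no in-paper proof to compare your proposal against.

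For what it is worth, your outline is broadly consistent with how the result is actually established in~\cite{Extractors}: one uses the Hayden--Jozsa--Petz--Winter structure theorem to decompose the Markov side information into a direct sum of product pieces, reduces to the product-side-information case on each block, and then pays the $\sqrt{2^{m}\varepsilon}$ factor via an $\ell_2$/collision-probability argument lifted to trace distance. The $\log(1/\varepsilon)$ overhead in the min-entropy thresholds indeed arises from controlling the bad branches of the decomposition. One point where your sketch is a bit vague is the ``quantum-to-classical conversion source by source'': in the product case one still has genuinely quantum side information on each source, and the reduction to the classical extractor guarantee goes through an operator Cauchy--Schwarz / pretty-good-measurement style bound rather than a literal measurement step; you would need to make that explicit to turn the outline into a proof.
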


In this work we use such an extractor, but for a source with a lower bound on the smooth min-entropy rather than the min-entropy itself. The effect of this on the parameters of the extractor was also investigated in~\cite{Extractors}. We use the following form of the statement:
\begin{lma}
\label{lma:smooth-entropy-bound}
	Let $\mathrm{Ext} : \{0,1\}^{n} \times  \{0,1\}^{d} \to \{0,1\}^m$ be a $(k_1,k_2,\varepsilon)$ quantum-proof two-source extractor in the Markov model, strong in the source $X_i$. Then for any Markov state $\rho_{X_1X_2C}$ with $H_{\mathrm{min}}^{\varepsilon_s}(X_1|C)_{\rho} \geq k_1+\log(1/\varepsilon)+1$ and $H_{\mathrm{min}}(X_2|C)_{\rho} \geq k_2+\log(1/\varepsilon)+1$,
	\[
		\frac{1}{2} \Vert \rho_{\mathrm{Ext}(X_1,X_2)X_iC} - \rho_{U_m} \otimes \rho_{X_iC} \Vert \leq 6  \left(\varepsilon_s + \varepsilon\right) \;.
	\]
\end{lma}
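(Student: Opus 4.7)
The plan is to reduce the smooth-min-entropy hypothesis on $X_1$ to the exact min-entropy hypothesis required by Definition~\ref{def:quantum-two-source.mc} by passing to a suitable nearby state, and then to transfer the resulting closeness bound back to $\rho$ via the triangle inequality. Concretely, by the definition of smooth min-entropy there is a sub-normalised state $\tilde\rho_{X_1X_2C}$ at purified distance at most $\varepsilon_s$ from $\rho$ with $H_{\min}(X_1|C)_{\tilde\rho} \geq k_1 + \log(1/\varepsilon) + 1$. Purified distance upper bounds trace distance, so $\tfrac{1}{2}\|\rho-\tilde\rho\|_1 \leq \varepsilon_s$, and because measuring $X_1X_2$ in the computational basis is a CPTP map that neither increases purified distance nor decreases conditional min-entropy, $\tilde\rho$ may be taken to be classical on $X_1X_2$.

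The main obstacle is that $\tilde\rho$ need not be a Markov chain $X_1\leftrightarrow C\leftrightarrow X_2$, so Definition~\ref{def:quantum-two-source.mc} does not apply directly. To deal with this I would invoke the fact, established in~\cite{Extractors}, that when smoothing a Markov state one can stay inside the Markov model at the cost of at most one bit of conditional min-entropy; this is precisely the origin of the ``$+1$'' in both entropy hypotheses. The result is a Markov state $\tilde\rho'_{X_1X_2C}$ with $H_{\min}(X_1|C)_{\tilde\rho'} \geq k_1 + \log(1/\varepsilon)$ and, using that the hypothesis on $X_2$ is given as the (non-smooth) min-entropy with the same slack, $H_{\min}(X_2|C)_{\tilde\rho'} \geq k_2 + \log(1/\varepsilon)$, at trace distance of order $\varepsilon_s$ from $\rho$.

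With $\tilde\rho'$ in hand, applying Definition~\ref{def:quantum-two-source.mc} yields
\begin{equation*}
\tfrac{1}{2}\|\tilde\rho'_{\mathrm{Ext}(X_1,X_2)X_iC} - \rho_{U_m}\otimes \tilde\rho'_{X_iC}\|_1 \leq \varepsilon.
\end{equation*}
Since the extractor acts as a CPTP map on $X_1X_2$ and partial traces are contractive in trace norm, the triangle inequality gives
\begin{equation*}
\tfrac{1}{2}\|\rho_{\mathrm{Ext}(X_1,X_2)X_iC} - \rho_{U_m}\otimes \rho_{X_iC}\|_1 \leq \tfrac{1}{2}\|\rho-\tilde\rho'\|_1 + \varepsilon + \tfrac{1}{2}\|\tilde\rho'_{X_iC}-\rho_{X_iC}\|_1,
\end{equation*}
and collecting the constants from the smoothing and Markov-projection steps yields the claimed bound of $6(\varepsilon_s+\varepsilon)$. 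The hardest part of the argument is the Markov-preserving smoothing: a naive $\varepsilon_s$-ball smoothing need not remain in the Markov cone, and Definition~\ref{def:quantum-two-source.mc} is not known to be robust to approximate Markovianity, so the step depends critically on the refined smoothing procedure developed in~\cite{Extractors}.
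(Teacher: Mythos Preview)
The paper does not give its own proof of this lemma: it is stated in the preliminaries as a result taken from~\cite{Extractors}, so there is no in-paper argument to compare against. Your sketch follows the natural route---smooth $X_1$ to a nearby state with genuine min-entropy, apply the extractor there, and transport the bound back via the triangle inequality and contractivity of CPTP maps---and you correctly isolate the only non-routine point, namely that the smoothed state need not remain a Markov chain, deferring this to the Markov-compatible smoothing machinery of~\cite{Extractors}.

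One remark on the bookkeeping: you attribute the ``$+1$'' in both entropy hypotheses to the Markov-projection step, but you do not explain the additional $\log(1/\varepsilon)$ slack, which also appears in both hypotheses and is not accounted for by a single bit of Markov repair. In the argument of~\cite{Extractors} that slack enters through a decomposition-and-conditioning step (splitting off a ``bad'' event of weight at most $\varepsilon$ on which the min-entropy guarantee may fail), not through the smoothing itself. Your sketch is therefore structurally sound but the attribution of where each constant originates would need to be adjusted to match the source; since the paper simply imports the statement, this is a matter of aligning with~\cite{Extractors} rather than with anything in the present paper.
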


\subsection{The entropy accumulation theorem}
\label{sec:EAT}

The entropy accumulation theorem (EAT), introduced in~\cite{EAT}, gives a straightforward way of bounding the smooth min-entropy of a system consisting of $n$ random variables that possibly depend on each other. 
For our work the simplified versions of the definitions and theorems of~\cite{EAT}, as presented in~\cite{RotemEAT}, suffice.
In the following we introduce the definitions and theorems which are crucial to working with the EAT.

\begin{defi}[EAT channels]
\label{def:EATchannel}
	EAT channels $\mathcal{N}_i: R_{i-1} \rightarrow R_iA_iB_iI_iC_i$, for $i \in [n]$, are CPTP maps such that for all $i \in [n]$:
	\begin{enumerate}
	\item $A_i,B_i,I_i$ and $C_i$ are finite-dimensional classical systems (RV). $A_i$ and $B_i$ are of dimension $d_{A_i}$ and $d_{B_i}$ respectively. $R_i$ are arbitrary quantum registers.
	
	\item For any input state $\sigma_{R_{i-1}R'}$, where $R'$ is a register isomorphic to $R_{i-1}$, the output state $\sigma_{R_iA_iB_iI_iC_iR'} = \left( \mathcal{N}_i \otimes \mathcal{I}_{R'} \right) \left( \sigma_{R_{i-1}R'} \right)$ has the property that the classical value $C_i$ can be measured from the marginal $\sigma_{A_iB_iI_i}$ without changing the state.
	
	\item For any initial state $\rho^0_{R_0E}$, the final state $\rho_{A^nB^nI^nC^nE} = \left( \textrm{Tr}_{R_n} \circ \mathcal{N}_n \circ \dots \circ \mathcal{N}_1 \right) \otimes \mathcal{I}_E \rho^0_{R_0E}$ satisfies the Markov chain condition $A^{i-1}B^{i-1} \leftrightarrow I^{i-1}E \leftrightarrow I_i$ for each $i \in [n]$.
	\end{enumerate}
\end{defi}

\begin{defi}[Min-tradeoff function]
\label{def:min_tradeoff}
	Let $\mathcal{N}_1,\ldots,\mathcal{N}_N$ be a family of EAT channels. Let $\mathcal{C}$ denote the common alphabet of $C_1,\ldots,C_n$. 	
	A function $f_{\min}$ from $\mathbb{P}_{\mathcal{C}}$ to the real numbers is called a \emph{min-tradeoff function} for $\{\mathcal{N}_i\}$ if it satisfies
	\[
		f_{\min}(p) \leq \inf_{\sigma_{R_{i-1}R'}:\mathcal{N}_i(\sigma)_{C_i}=p} H\left( A_i B_i | I_i R' \right)_{\mathcal{N}_i(\sigma)} \;
	\]
	for all $i\in [n]$, where the infimum is taken over all input states of $\mathcal{N}_i$ for which the marginal on $C_i$ of the output state is the probability distribution $p$, and the infimum over the empty set is defined as plus infinity.
\end{defi}

\begin{defi}
\label{def:freq}
	Let $C^{n}$ be a set of random variables over the alphabet $\mathcal{C}$. Then $\mathrm{freq}_{C^{n}}$ defines the probability distribution over $\mathcal{C}$ defined by $\mathrm{freq}_{C^{n}}(x) = \frac{|\{ i \in \{1, \dots, n\} : C_{i} = x \}|}{n}$.
\end{defi}

\begin{defi}[Infinity norm]
\label{def:inf-norm}
	Let $f: \Omega \rightarrow \mathbb{R}$ be a function over some set $\Omega \subset \mathbb{R}^{m}$. Then the infinity norm of the gradient of $f$ is defined as
	\begin{equation*}
		\|  \nabla f \|_\infty = \sup \left\{ \frac{\partial}{\partial x_{i}} f(\textbf{x}) : \textbf{x} \in \Omega, \, i \in \{1,\dots,m\} \right\} \,.
	\end{equation*}
\end{defi}

\begin{thm}[EAT,~\cite{EAT}]
\label{thm:EAT}
	Let $\mathcal{N}_i:R_{i-1}\rightarrow R_i A_i B_i I_i C_i$ for $i\in [n]$ be EAT channels as in Definition~\ref{def:EATchannel}, $\rho_{A^nB^nI^nC^nE} = \left( \tr_{R_n}\circ \mathcal{N}_n \circ \dots \circ \mathcal{N}_1\right) \otimes \mathcal{I}_E \; \rho_{R_0E}$ be the final state, 
	$\Omega$ an event  defined over $\mathcal{C}^n$, $p_\Omega$ the probability of $\Omega$ in $\rho$, 
	and $\rho_{|\Omega}$ the final state conditioned on $\Omega$. Let $\varepsilon_{\text{s}} \in (0,1)$.
	 
	For $f_{\min}$ a min-tradeoff function for $\{\mathcal{N}_i\}$, as in Definition~\ref{def:min_tradeoff}, and any $t\in \mathbb{R}$ such that $f_{\min}\left( \mathrm{freq}_{C^n} \right) \geq t$ for any $C^n\in\mathcal{C}^n$ for which $\Pr\left[C^n\right]_{\rho_{|\Omega}}> 0$,
	\[
		H_{\min}^{\varepsilon_{\text{s}}} \left( A^nB^n|I^nE \right)_{\rho_{|\Omega}} > n t - v\sqrt{n} \;,
	\]
	where $v = 2\left(\log(1+2 d_{A_iB_i} ) + \lceil \|  \nabla f_{\min} \|_\infty \rceil \right)\sqrt{1-2\log (\varepsilon_{\text{s}} \cdot p_\Omega)}$ and $d_{A_iB_i}$ denotes the dimension of $A_iB_i$.
\end{thm}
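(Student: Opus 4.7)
The strategy is to pass through the sandwiched R\'enyi entropies $H_\alpha^\uparrow$ for $\alpha$ slightly larger than $1$, since these interpolate between the min-entropy and the von Neumann entropy while admitting well-behaved chain rules. The standard duality between smooth min-entropy and R\'enyi entropy gives, for any $\alpha \in (1,2]$,
\[
H_{\min}^{\varepsilon_s}(A^n B^n \mid I^n E)_{\rho_{|\Omega}} \geq H_\alpha^\uparrow(A^n B^n \mid I^n E)_{\rho_{|\Omega}} - \frac{g_1(\varepsilon_s)}{\alpha-1},
\]
and conditioning on the event $\Omega$ can be absorbed into a single $\log(1/p_\Omega)/(\alpha-1)$ penalty, yielding a lower bound in terms of $H_\alpha^\uparrow(A^n B^n \mid I^n E)_\rho$ on the unconditional state.

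The core technical step is a chain rule: for each EAT channel $\mathcal{N}_i$ I would show
\[
H_\alpha^\uparrow(A^i B^i \mid I^i E)_\rho \geq H_\alpha^\uparrow(A^{i-1} B^{i-1} \mid I^{i-1} E)_\rho + \inf_{\sigma} H_\alpha^\uparrow(A_i B_i \mid I_i R')_{\mathcal{N}_i(\sigma)},
\]
where the infimum runs over valid inputs $\sigma_{R_{i-1}R'}$ to $\mathcal{N}_i$. Here the Markov condition $A^{i-1}B^{i-1} \leftrightarrow I^{i-1}E \leftrightarrow I_i$ is essential: without it, introducing the fresh input $I_i$ on the conditioning side could cost an unbounded amount of entropy. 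Iterating the chain rule $n$ times produces a lower bound of $\sum_{i=1}^{n}\inf_\sigma H_\alpha^\uparrow(A_i B_i \mid I_i R')_{\mathcal{N}_i(\sigma)}$.

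Next I would connect the per-round R\'enyi infimum to the min-tradeoff function through a Taylor expansion of $H_\alpha^\uparrow$ around $\alpha = 1$:
\[
\inf_\sigma H_\alpha^\uparrow(A_i B_i \mid I_i R')_{\mathcal{N}_i(\sigma)} \geq \inf_\sigma H(A_i B_i \mid I_i R')_{\mathcal{N}_i(\sigma)} - (\alpha-1) K_1,
\]
with $K_1$ controlled by $\log(1 + 2 d_{A_i B_i})$. Restricting the infimum to inputs whose induced $C_i$-marginal equals a fixed distribution $p_i$ gives $f_{\min}(p_i)$ by the defining inequality of Definition~\ref{def:min_tradeoff}. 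Since $f_{\min}$ has bounded gradient, a convexity argument combined with Jensen's inequality lets me pass from the pointwise sum $\sum_i f_{\min}(p_i)$ to $n\cdot f_{\min}(\mathrm{freq}_{C^n})$, and on the event $\Omega$ the hypothesis on $t$ bounds this from below by $nt$.

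Finally, I would balance the two sub-linear penalties: the $(\alpha-1)^{-1}\log(1/(\varepsilon_s p_\Omega))$ term from the R\'enyi-to-smooth conversion and the $(\alpha-1) n K$ term from the Taylor expansion (which involves $\lceil \|\nabla f_{\min}\|_\infty \rceil$ through $K_1$). Choosing $\alpha - 1 \propto 1/\sqrt{n}$ with the right constant produces a correction of order $\sqrt{n}$ with the explicit prefactor $v = 2(\log(1+2 d_{A_i B_i}) + \lceil \|\nabla f_{\min}\|_\infty \rceil)\sqrt{1 - 2\log(\varepsilon_s p_\Omega)}$ as stated. The main obstacle is the chain rule in the second step: establishing an additive decomposition of the sandwiched R\'enyi entropy that exploits the Markov condition requires delicate operator-algebra manipulations (generalizing Uhlmann-type and data-processing inequalities for the sandwiched divergence), and it is the technical heart of the argument.
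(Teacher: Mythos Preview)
The paper does not prove this theorem. Theorem~\ref{thm:EAT} is stated in Section~\ref{sec:EAT} as a preliminary result imported from~\cite{EAT} (Dupuis--Fawzi--Renner), with no proof given; the paper merely \emph{applies} it in the proof of Theorem~\ref{thm:main}. There is therefore nothing in the paper to compare your attempt against.

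That said, your sketch is a faithful high-level outline of how the EAT is actually established in~\cite{EAT}: passing to sandwiched R\'enyi entropies, absorbing the conditioning on $\Omega$ as a $\log(1/p_\Omega)/(\alpha-1)$ penalty, iterating a chain rule that relies on the Markov condition, Taylor-expanding the per-round R\'enyi entropy around $\alpha=1$ to reach the min-tradeoff function, and optimizing $\alpha-1\propto 1/\sqrt{n}$ to balance the two sub-linear losses. You are also right that the chain rule for $H_\alpha^\uparrow$ under the Markov condition is the technical core. One refinement: in the original proof the link between the accumulated R\'enyi entropy and $f_{\min}(\mathrm{freq}_{C^n})$ is obtained by first tensoring each round with an auxiliary ``weight'' register depending on $C_i$ and $f_{\min}$, rather than by a direct convexity/Jensen step on the $p_i$'s; this is where the affine structure of $f_{\min}$ (and hence $\|\nabla f_{\min}\|_\infty$) enters cleanly. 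Your description glosses over this mechanism, but the overall strategy is correct.
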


To gain some intuition regarding the EAT we now give a short explanation of how it is used below. The concrete and formal details are given in the following sections. Our EAT-channels are chosen to be the channels describing the actions in each step of the protocol (both of the honest parties and the uncharacterised quantum device). The event $\Omega$ is the event of not aborting the protocol. $\rho_{|\Omega}$ is hence the state in the end of the protocol conditioned on not aborting. The goal is then to lower-bound the conditional smooth min-entropy of this state and this is exactly what Theorem~\ref{thm:EAT} gives us. The first order term in the given bound is $n t$ where $n$ is the number of rounds of the protocol and $t$ is the minimal amount of entropy accumulated in each step, quantified using the min-tradeoff function. In Section~\ref{sec:single-round} we make the relevant analysis to find the value of $t$.

\section{Secret randomness from a single round}
\label{sec:single-round}

In this section we quantify the randomness of the outputs of an MDL experiment.
With that achieved we can carry on in Section~\ref{sec:RAP} to quantify the randomness of the outputs in a sequence of MDL experiments.
Hence quantifying the randomness in a single MDL experiment is crucial in our process of producing an arbitrary amount of randomness.

In our single MDL experiment we consider a device consisting of two separated components, such that one can enforce a situation in which the ``non-signalling conditions'' between the components hold. (i.e., the two components cannot signal, or communicate, with one another).
During the execution of the experiment the two operators of the device, Alice and Bob, draw inputs, $X$ and $Y$, from the MDL source.
They then feed the inputs to their component and record the output, $A$ and $B$, that it generates.
As noted in Section~\ref{sec:untrusted-devices}, we consider a third party that can hold a purification, $E$, of the quantum state in the device.
We thus want to quantify the randomness of $A$ and $B$ given $X$, $Y$, and $E$.
An algorithmic description of the MDL experiment is given in Protocol~\ref{alg:MDL-experiment}.
In Step~\ref{step:MDL-exp-symm} we use a uniform and independent random bit $F$ to symmetrise the outputs.
Of course, in the context of randomness amplification we cannot do this.
Nevertheless, we use this just as a step in the proof and later argue that the symmetrisation step can be dropped in practice. 

Formally we choose to quantify the randomness by the von Neumann entropy, $H(AB|XYE)$.
The remaining part of this section is dedicated to proving the following bound on this entropy. 
\begin{lma}
\label{lma:holevo_bound}
	Consider the MDL experiment described in Protocol~\ref{alg:MDL-experiment} where both the inputs and the outputs are binary.
	Then, for a state and a set of measurements (i.e., strategy of the adversary) yielding a violation $S_\mu > 0$ of Inequality~\eqref{eq:MDL_ineq}, the bound
	\begin{equation}
		H(AB | XYE) \geq 1 - h \left( \frac{1}{2} + \frac{1}{\mu^*} \sqrt{ S_\mu (S_\mu + \mu^*)} \right) \label{eq:single_round_bound}
	\end{equation}
	on the von Neumann entropy of the outputs holds, where $\mu^{*} = \mu_{\mathrm{min}} \cdot \mu_{\mathrm{max}}$.
\end{lma}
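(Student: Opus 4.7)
The plan is to reduce the MDL bound to a known CHSH bound. I would start by rewriting the MDL violation using the identity $P_{ABXY}(abxy)=P_{AB\mid XY}(ab\mid xy)\,P_{XY}(xy)$, so that
\[
S_\mu=\mu_{\min}P_{AB\mid XY}(00\mid 00)P_{XY}(00)-\mu_{\max}\!\!\sum_{(a,b,x,y)\in\mathcal{S}}\!\!P_{AB\mid XY}(ab\mid xy)P_{XY}(xy),
\]
where $\mathcal{S}=\{(0,1,0,1),(1,0,1,0),(0,0,1,1)\}$. Since the input distribution comes from a $\mu$-MDL source, every $P_{XY}(xy)$ sits in $[\mu_{\min},\mu_{\max}]$; choosing the \emph{worst} assignment (maximum weight $\mu_{\max}$ on the winning event and minimum weight $\mu_{\min}$ on each losing event) gives
\[
S_\mu\;\leq\;\mu_{\min}\mu_{\max}\Bigl(P_{AB\mid XY}(00\mid 00)-\sum_{\mathcal{S}}P_{AB\mid XY}(ab\mid xy)\Bigr)\;=\;\mu^{*}\,\alpha,
\]
where $\alpha$ is exactly the Eberhard form of the CHSH quantity from \eqref{eq:Eberhard}. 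Hence $\alpha\geq S_\mu/\mu^{*}$, and by the affine relation \eqref{eq:relation_alpha_beta} the device's CHSH value satisfies $\beta\geq 4S_\mu/\mu^{*}+2$.

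Next I would invoke the CHSH-based entropy bound of Pironio \emph{et al.}, which says that for any quantum strategy realising a CHSH value $\beta\geq 2$ one has
\[
H(AB\mid XYE)\;\geq\;1-h\!\left(\tfrac{1}{2}+\tfrac{1}{2}\sqrt{(\beta/2)^{2}-1}\right).
\]
The critical technical point (and the main obstacle) is that this bound was originally derived for \emph{uniform, independent} inputs, whereas our MDL source produces correlated, biased inputs that may even depend on $E$ through $\lambda$. This is precisely why Protocol~\ref{alg:MDL-experiment} includes the symmetrisation by the fresh uniform bit $F$ in Step~\ref{step:MDL-exp-symm}: after symmetrisation, the conditional output distribution $P_{AB\mid XY}$ produced by the device behaves as if the measurements were chosen uniformly, so Pironio's qubit-reduction argument (Jordan's lemma plus the two-qubit Tsirelson bound) applies to the symmetrised strategy. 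Since $\alpha$ is invariant under this symmetrisation and $H(AB\mid XYE)$ can only grow under it, the chain $\alpha\geq S_\mu/\mu^{*}\Rightarrow \beta\geq 4S_\mu/\mu^{*}+2$ and Pironio's bound combine to give the desired inequality on the original (unsymmetrised) entropy.

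Finally I would substitute $\beta=4S_\mu/\mu^{*}+2$ into Pironio's bound and simplify: $(\beta/2)^{2}-1=(2\alpha+1)^{2}-1=4\alpha(\alpha+1)$, so $\tfrac{1}{2}\sqrt{(\beta/2)^{2}-1}=\sqrt{\alpha(\alpha+1)}$, which with $\alpha=S_\mu/\mu^{*}$ rewrites as $\tfrac{1}{\mu^{*}}\sqrt{S_\mu(S_\mu+\mu^{*})}$. Inserting this into the argument of $h(\cdot)$ yields exactly \eqref{eq:single_round_bound}. A short sanity check at the boundary $S_\mu=0$ (no violation, argument equals $1/2$, $h(1/2)=1$, bound is trivially $0$) and at the Tsirelson point (recovering the CHSH entropy bound) would close the argument.
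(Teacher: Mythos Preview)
Your plan matches the paper's proof almost step for step: bound $S_\mu\le\mu^*\alpha$ by pushing each $P_{XY}(xy)$ to the appropriate endpoint of $[\mu_{\min},\mu_{\max}]$, feed $\alpha\ge S_\mu/\mu^*$ into Pironio's CHSH bound, and undo the symmetrisation at the end via $H(\tilde A\tilde B\mid XYFE)\le H(AB\mid XYE)$. The algebra in your final substitution is exactly what the paper does.

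There is one point where your explanation is muddled and would not quite go through as stated. You attribute the transfer of Pironio's bound from the uniform-input setting to the MDL setting to the output symmetrisation by $F$, saying that after symmetrisation ``$P_{AB\mid XY}$ behaves as if the measurements were chosen uniformly''. That is not the mechanism. The paper's argument (its ``Process 1 vs.\ Process 2'' comparison) is that the Holevo quantity $\chi(\tilde A:FE\mid X=x)$ is computed on the \emph{conditional} state $\rho_{\tilde AFE\mid X=x}$, and once you condition on $X=x$ and trace out $Y$, this state is literally identical whether $X,Y$ were drawn from the MDL source or uniformly --- the input distribution has already been marginalised away. That is why Pironio's per-$x$ bound applies verbatim. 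The role of the symmetrisation bit $F$ is separate: it forces $H(\tilde A\mid X=x)=1$, which is what turns the Holevo bound into the ``$1-\chi$'' entropy lower bound via Lemma~\ref{lma:rewrite}. Keep these two ingredients distinct and your proof is the paper's proof.
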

\noindent We prove the lemma by employing the bound on the Holevo quantity (introduced later) that Pironio \textit{et.\@ al} derived in~\cite{Pironio} for the CHSH game.
We adapt the bound to the MDL game with biased inputs.

\begin{algorithm}[t]
	\floatname{algorithm}{Protocol}
	\raggedright
	\caption{Execution of an MDL experiment}
	\label{alg:MDL-experiment}
	\begin{algorithmic}[1]
		\STATEx \textbf{Arguments:} 
		\STATEx\hspace{\algorithmicindent} $M(\mu)$ -- $\mu$-MDL source
		\STATEx\hspace{\algorithmicindent} $D$ -- untrusted device of two components
	
		\STATEx
	
		\STATE Alice and Bob choose inputs from the MDL source with parameters $\mu$.
		\STATE Alice and Bob use $D$ with $X,Y$ and record their outputs as $A$, $B$.
		\STATE [optional] Alice and Bob choose a uniform and independent binary random variable $F$ and update their outputs as $\tilde{A} = A \oplus F$ and $\tilde{B} = B \oplus F$. \label{step:MDL-exp-symm}
	\end{algorithmic}
\end{algorithm}

To prove  Lemma~\ref{lma:holevo_bound} we first express the entropy in terms of the Holevo quantity, $\chi (\tilde{A} : F E| X = x)$, similarly to what was done as in~\cite{RotemEAT}. The expression is given in the following lemma. 
The proof is given in Appendix~\ref{apx:proofs}.

\begin{lma}
\label{lma:rewrite}
	In an MDL experiment with binary inputs and outputs, as described in Protocol~\ref{alg:MDL-experiment}, with two devices, between which the non-signaling condition holds, the entropy of the outputs can be lower bounded as
	\begin{equation}
		H(\tilde{A} \tilde{B}|XYFE) \geq \sum_x \mathrm{Pr}[X = x] \cdot \big( 1 - \chi (\tilde{A} : F E| X = x) \big ) \,,
	\end{equation}
	where  $\chi (\tilde{A} : F E| X = x) = H(FE|X=x) - H(FE|\tilde{A},X=x)$ is the Holevo quantity.
\end{lma}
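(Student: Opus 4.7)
The plan is to bound $H(\tilde{A}\tilde{B}|XYFE)$ from below in three steps: first reduce to Alice's marginal contribution, then strip $Y$ from the conditioning using non-signaling, and finally identify the remaining conditional entropy with $1-\chi(\tilde{A}:FE|X=x)$ via the uniformity of $\tilde{A}$ induced by the symmetrisation bit $F$.

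For the first step, apply the chain rule for conditional entropy,
\[
H(\tilde{A}\tilde{B}|XYFE)=H(\tilde{A}|XYFE)+H(\tilde{B}|\tilde{A}XYFE),
\]
and observe that the second summand is non-negative: $\tilde{B}$ is a classical register and the conditional von~Neumann entropy of a classical variable given any cq-system is always non-negative (a direct consequence of the concavity of the von~Neumann entropy). This already yields $H(\tilde{A}\tilde{B}|XYFE)\geq H(\tilde{A}|XYFE)$. For the second step, invoke the non-signaling structure of the two-component device: Alice's output $A$ is produced by a POVM on her local subsystem depending only on $X$, and the reduced state on Alice's subsystem and Eve's register (after tracing out Bob's subsystem) is unaffected by Bob's input $Y$. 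Combined with the fact that $F$ is a fresh uniform bit independent of everything else, this gives the Markov chain $\tilde{A}\leftrightarrow XFE\leftrightarrow Y$, so $H(\tilde{A}|XYFE)=H(\tilde{A}|XFE)$.

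For the third step, expand $H(\tilde{A}|XFE)=\sum_x \mathrm{Pr}[X=x]\,H(\tilde{A}|FE,X=x)$ using the classicality of $X$, and apply the chain rule to the cq-state conditioned on $X=x$:
\[
H(\tilde{A}|FE,X=x)=H(\tilde{A}|X=x)-\bigl(H(FE|X=x)-H(FE|\tilde{A},X=x)\bigr)=H(\tilde{A}|X=x)-\chi(\tilde{A}:FE|X=x).
\]
The symmetrisation bit $F$ introduced in step~\ref{step:MDL-exp-symm} of Protocol~\ref{alg:MDL-experiment} is uniform and independent of $A$ and $X$, so $\tilde{A}=A\oplus F$ is uniform conditioned on any value of $X$, regardless of the distribution of $A$; hence $H(\tilde{A}|X=x)=1$. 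Assembling the three steps produces the claimed lower bound.

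The delicate point is justifying the Markov chain condition in the second step in the presence of the correlations permitted by the MDL setting: the inputs $X,Y$ and Eve's register $E$ may all be jointly correlated through a common classical past $\lambda$, so it is not obvious a~priori that further conditioning on $Y$ is redundant. The argument relies on the fact that, once $E$ carries this classical history, the reduction to the Alice-and-Eve subsystem commutes with tracing over Bob's output (by completeness of Bob's POVM), so Bob's input $Y$ cannot alter Alice's output statistics---this is the standard non-signaling property of two-party quantum experiments, translated to the present setting with quantum side information.
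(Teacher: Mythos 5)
Your proof is correct and follows essentially the same route as the paper's: drop the $\tilde{B}$ term via the chain rule and non-negativity of conditional entropy for a classical register, strip $Y$ from the conditioning via a non-signaling Markov chain, then write $H(\tilde{A}|FE,X=x)$ as $H(\tilde{A}|X=x)-\chi(\tilde{A}:FE|X=x)$ and use the symmetrising bit $F$ to set $H(\tilde{A}|X=x)=1$. The only (minor) difference is that you state the Markov chain as $\tilde{A}\leftrightarrow XFE\leftrightarrow Y$, which is actually the more careful formulation — the paper writes $A\leftrightarrow X\leftrightarrow Y$ but implicitly needs the chain to hold with $F$ and $E$ in the conditioning as well — and you explicitly flag that the independence of the device state from $X,Y$ (given the classical history) is what makes this Markov chain go through, which the paper leaves implicit.
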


We now proceed to prove Lemma~\ref{lma:holevo_bound}.
\begin{proof}[Proof of Lemma~\ref{lma:holevo_bound}]
	In the proof of our claim we first prove an upper bound on the Holevo quantity of the symmetrized outputs as a function of the MDL violation.
	Once the upper bound on the Holevo quantity is derived we make use of Lemma~\ref{lma:rewrite} and derive the lower bound on the von Neumann entropy of the symmetried outputs.
	Finally we argue why the entropy bound for the symmetrized outputs is also an entropy bound for the unsymmetrized outputs.
	
	In order to upper bound the Holevo quantity we start with the bound that was derived in~\cite[Equation (11)]{Pironio} for the standard CHSH scenario.
	Together with the relation in Equation~\eqref{eq:relation_alpha_beta} we find
	\begin{equation}
	\chi (\tilde{A} : E F| X = x) \leq h \left( \frac{1}{2} + \frac{1}{2} \sqrt{\frac{\beta^2}{4} - 1} \right) = h \left( \frac{1}{2} + \sqrt{\alpha (\alpha + 1)} \right) \,, \label{eq:holevo-bound-CHSH}
	\end{equation}
	where $\beta$ is the violation of the CHSH inequality and $\alpha$ is the violation of the Eberhard inequality.
	
	Continuing we relate this bound to our scenario where the inputs for the Bell measurements are not uniform and not independent.
	To that end we consider two processes producing different states.
	In the first process we consider an MDL source that produces biased bits that might be correlated with some side information $\lambda$.
	In the second process we consider an input source that produces uniform and independent bits.
	We want to relate the Holevo quantity of the outputs in both processes.
	
	In both processes we can describe the generation of the outputs as doing a measurement on an unknown quantum state
	\begin{equation}
		\rho_{Q_AQ_BE, \lambda} \,, \label{eq:unknown-state-pre}
	\end{equation}
	where $Q_{A}$ and $Q_{B}$ are the quantum registers in Alice's and Bob's device respectively and $E$ is the quantum side information that the adversary holds.
	The state can also depend on the classical side information that Eve has.
	The specific measurements can depend on the inputs, $X$ and $Y$, and the classical side information $\lambda$.
	We also include the uniform and independent random variable $F$ in the measurements.
	This variable then determines whether the outputs are being flipped or not, as described in Step~\ref{step:MDL-exp-symm}.
	More precisely we describe the measurement, implemented with a strategy $\lambda$ for specific $x,y,f$, as a CPTP map evolving the unknown quantum state
	\begin{equation}
	\begin{aligned}
		\mathcal{E}_{xyf, \lambda} : Q_{A}Q_{B}E &\rightarrow Q_{A}Q_{B}\tilde{A} \tilde{B} E \label{eq:MDL-CPTPM} \\
		\rho_{Q_AQ_BE, \lambda} &\mapsto \mathcal{E}_{xyf, \lambda}(\rho_{Q_AQ_BE, \lambda}) \,.
	\end{aligned}
	\end{equation}
	Note that the two parts of the device and the adversary are spatially separated and thus the CPTP map factors into three parts,
	\begin{align*}
		\mathcal{E}_{xyf, \lambda} = \mathcal{E}^{A}_{xf, \lambda} &\otimes \mathcal{E}^{B}_{yf, \lambda} \otimes \mathcal{I}_{E} \\
		\mathcal{E}^{A}_{xf, \lambda}: \; Q_{A} &\rightarrow Q_{A} \tilde{A} \\
		\mathcal{E}^{B}_{yf, \lambda}: \; Q_{B} &\rightarrow Q_{B} \tilde{B} \,,
	\end{align*}
	where $\mathcal{I}_{E}$ is the identity map on the adversary's quantum register.
	
	\textbf{Process 1} is associated to the measurements in our MDL scenario.
		First we choose inputs $X, Y \in \{0,1\}$ according to a distribution $P_{XY|\lambda}$ satisfying Definition~\ref{def:MDL-source}.
		Furthermore we also choose an independent and uniform random variable $F \in \{0,1\}$ for the symmetrisation of the outputs.
		For a specific strategy $\lambda$ of the adversary the post measurement state is
		\begin{equation*}
			\rho^1_{Q_{A}Q_{B} \tilde{A} \tilde{B} XYFE, \lambda} = \sum_{x, y} P_{XY|\lambda}(x,y) \sum_{f} \frac{1}{2} \, (\mathcal{E}^{A}_{xf, \lambda} \otimes \mathcal{E}^{B}_{yf, \lambda} \otimes \mathcal{I}_{E})(\rho_{Q_AQ_BE, \lambda}) \otimes |xyf\rangle \langle xyf| \,, \nonumber
		\end{equation*}
		where $\{|x \rangle \}_{x}, \{|y \rangle \}_{y},$ and $\{|f \rangle \}_{f}$ each form an orthonormal basis of a two dimensional Hilbert space. After tracing out the systems $Q_{A}, Q_{B}, B$ and $Y$, which are irrelevant for the calculation of $\chi$, we are left with
		\begin{equation*}
			\rho^1_{\tilde{A} XFE, \lambda} = \sum_{x} P_{X|\lambda}(x) \left( \sum_{f} \frac{1}{2} \, (\tr{}_{Q_{A}} \circ \mathcal{E}^{A}_{xf, \lambda} \otimes \mathcal{I}_{E})(\rho_{Q_AE, \lambda}) \otimes |f\rangle \langle f|  \right) \otimes |x\rangle \langle x| \,. \label{eq:rho_1}
		\end{equation*}
		We denote
		\begin{equation*}
			\rho^1_{\tilde{A} FE|X=x, \lambda} = \left( \sum_{f} \frac{1}{2} \, (\tr{}_{Q_{A}} \circ \mathcal{E}^{A}_{xf, \lambda} \otimes \mathcal{I}_{E})(\rho_{Q_AE, \lambda}) \otimes |f\rangle \langle f|  \right) \,.
		\end{equation*}
	
	\textbf{Process 2} is associated to the standard CHSH scenario.
		First we choose the inputs $X, Y \in \{0,1\}$ independent of everything else and uniformly at random.
		Then we choose an independent and uniform random variable $F \in \{0,1\}$ to symmetrise the outputs.
		Similar to Process~1, for a specific strategy $\lambda$ of the adversary, the post measurement state is
		\begin{equation*}
			\rho^2_{Q_{A}Q_{B} \tilde{A} \tilde{B} XYFE, \lambda} = \sum_{x, y} \frac{1}{4} \sum_{f} \frac{1}{2} \, (\mathcal{E}^{A}_{xf, \lambda} \otimes \mathcal{E}^{B}_{yf, \lambda} \otimes \mathcal{I}_{E})(\rho_{Q_AQ_BE}) \otimes |xyf\rangle \langle xyf| \,. \nonumber
		\end{equation*}
		After tracing out the systems $Q_{A}, Q_{B}, B$ and $Y$ we are left with
		\begin{equation*}
			\rho^2_{\tilde{A} XFE, \lambda} = \sum_{x} \frac{1}{2} \left( \sum_{f} \frac{1}{2} \, (\tr{}_{Q_{A}} \circ \mathcal{E}^{A}_{xf, \lambda} \otimes \mathcal{I}_{E})(\rho_{Q_AE}) \otimes |f\rangle \langle f| \right) \otimes |x\rangle \langle x| \,. \label{eq:rho_2}
		\end{equation*}
		We denote
		\begin{equation*}
			\rho^2_{\tilde{A}FE|X=x, \lambda} = \left( \sum_{f} \frac{1}{2} \, (\tr{}_{Q_{A}} \circ \mathcal{E}^{A}_{xf, \lambda} \otimes \mathcal{I}_{E})(\rho_{Q_AE}) \otimes |f\rangle \langle f| \right) \,.
		\end{equation*}
	
	We observe that the states $\rho^1_{\tilde{A} FE|X=x, \lambda}$ and $\rho^2_{\tilde{A} FE|X=x, \lambda}$ are equal and consequently we find
	\begin{align*}
		\chi (\tilde{A} : E F| X = x)_{\rho^1_{\tilde{A} FE|X=x, \lambda}} = \chi (\tilde{A} : E F| X = x)_{\rho^2_{\tilde{A} FE|X=x, \lambda}}\,.
	\end{align*}
	This concludes our prove that the Holevo quantity is the same in Process~1, with biased inputs, and Process~2, with uniform inputs.
	
	In the next step we want to express the bound on the Holevo quantity as a function of the violation $S_\mu$ of our MDL inequality. Starting with Equation~\eqref{eq:holevo-bound-CHSH} we know that 
	\begin{align*}
		\chi (\tilde{A} : E F| X = x)_{\rho^1_{\tilde{A} FE|X=x, \lambda}} &= \chi (\tilde{A} : E F| X = x)_{\rho^2_{\tilde{A}FE|X=x, \lambda}} \\
		&\leq h \left( \frac{1}{2} + \sqrt{\alpha (\alpha + 1)} \right) \,.
	\end{align*}
	Now we can relate $S_\mu$ to a \emph{minimal} Bell violation $\alpha$ that would have been observed with the given state and measurements.
	For the relation between the two violations we find
	\begin{align*}
		S_\mu &= \mu_ \mathrm{min} P_{\tilde{A} \tilde{B}|XY}(00|00) \cdot \underbrace{P_{XY}(00)}_{\leq \mu_ \mathrm{max}} - \\ 
		& \qquad \mu_ \mathrm{max} \Big( P_{\tilde{A} \tilde{B}|XY}(01|01) \cdot \underbrace{P_{XY}(01)}_{\geq \mu_ \mathrm{min}} + P_{\tilde{A} \tilde{B}|XY}(10|10) \cdot \underbrace{P_{XY}(10)}_{\geq \mu_ \mathrm{min}} + P_{\tilde{A} \tilde{B}|XY}(00|11) \cdot \underbrace{P_{XY}(11)}_{\geq \mu_ \mathrm{min}} \Big) \\
		& \leq \mu^* \left( P_{\tilde{A} \tilde{B}|XY}(00|00) - \big( P_{\tilde{A} \tilde{B}|XY}(01|01) + P_{\tilde{A} \tilde{B}|XY}(10|10) + P_{\tilde{A} \tilde{B}|XY}(00|11) \big) \right) \\
		& = \mu^* \cdot \alpha
	\end{align*}
	and hence
	\begin{equation}
		\alpha \geq \frac{1}{\mu^*} S_\mu \,. \label{eq:relation_alpha_smu}
	\end{equation}
	
	We find the final bound on the Holevo quantity by plugging this relation into Equation~\ref{eq:holevo-bound-CHSH},
	\begin{align*}
		\chi (\tilde{A} : E F| X = x) &\leq h \left( \frac{1}{2} + \sqrt{ \alpha (\alpha + 1)} \right) \\
		&\leq h \left( \frac{1}{2} + \frac{1}{\mu^*} \sqrt{ S_\mu (S_\mu + \mu^*)} \right) \,,
	\end{align*}
	where the last inequality holds because $h(x)$ is monotonically decreasing for $x \in [\frac{1}{2},1]$.
	A bound on the entropy can be found by employing Lemma~\ref{lma:rewrite},
	\begin{equation*}
		H(\tilde{A} \tilde{B}| XYFE) \geq 1 - h \left( \frac{1}{2} + \frac{1}{\mu^*} \sqrt{ S_\mu (S_\mu + \mu^*)} \right) \,.
	\end{equation*}
	
	We conclude the proof by showing that the bound on the entropy of the symmetrized outputs is the same as the bound on the entropy of the unsymmetrized outputs.
	Namely we have
	\begin{align*}
		H(\tilde{A} \tilde{B}| XYFE) &\leq H(AB| XYFE) \\
		&= H(AB|XYE) \,,
	\end{align*}
	where the first step follows because, for fixed $F$, the symmetrisation step is a deterministic operation, and the second step follows because $F$ is independent of everything else.
\end{proof}

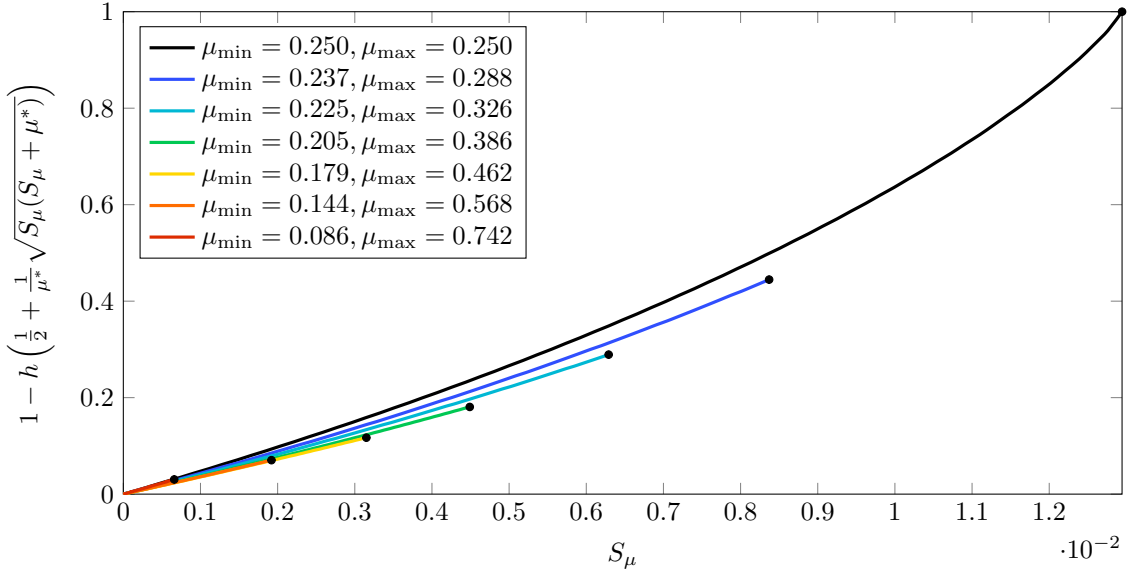
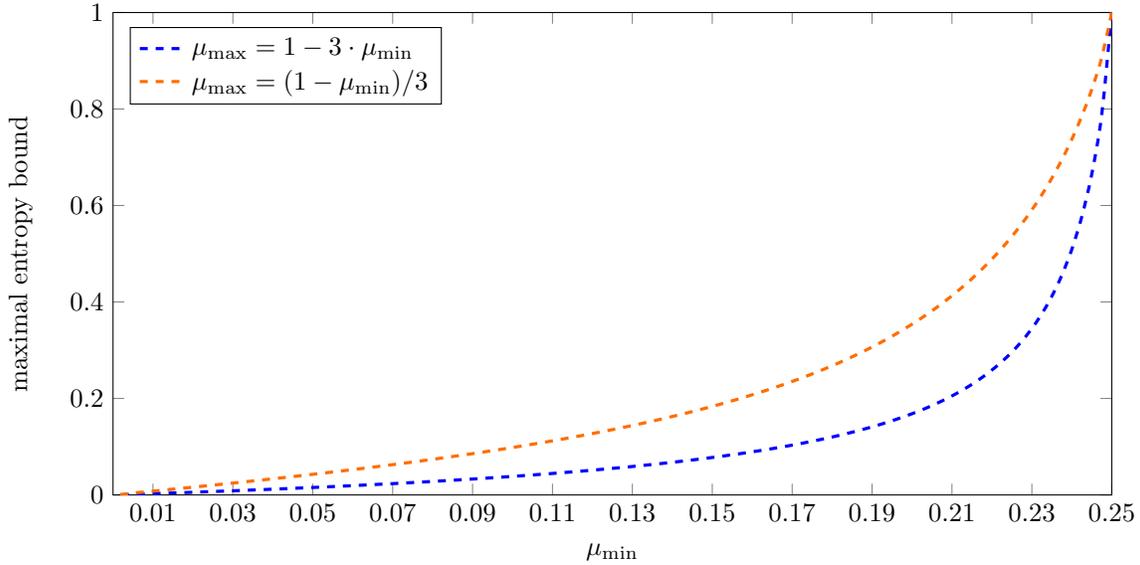
\begin{figure}
\centering
	\subfloat[Entropy bound in an MDL experiment as function of $S_\mu$, as given in Equation~\eqref{eq:single_round_bound}.]{
	\begin{tikzpicture}
		\begin{axis}[
			height=8cm,
			width=.9\textwidth,
			xlabel=$S_{\mu}$,
			ylabel=$1 - h \left( \frac{1}{2} + \frac{1}{\mu^*} \sqrt{ S_\mu (S_\mu + \mu^*)} \right)$,
			xmin=0.0,
			xmax=0.012944,
			ymax=1,
			ymin=0,
			legend style={at={(0.21,0.97)},anchor=north,legend cell align=left,font=\normalsize} 
			]
		
		
			\addplot[black,very thick,smooth] coordinates {
			(0.0, 0.0) (0.00019, 0.00858) (0.00037, 0.01725) (0.00055, 0.026) (0.00074, 0.03484) (0.00092, 0.04376) (0.00111, 0.05278) (0.00129, 0.06189) (0.00148, 0.07109) (0.00166, 0.08039) (0.00185, 0.08978) (0.00203, 0.09926) (0.00222, 0.10885) (0.0024, 0.11854) (0.00259, 0.12832) (0.00277, 0.13821) (0.00296, 0.14821) (0.00314, 0.15831) (0.00333, 0.16852) (0.00351, 0.17885) (0.0037, 0.18928) (0.00388, 0.19983) (0.00407, 0.2105) (0.00425, 0.22128) (0.00444, 0.23219) (0.00462, 0.24322) (0.00481, 0.25438) (0.00499, 0.26567) (0.00518, 0.27709) (0.00536, 0.28865) (0.00555, 0.30035) (0.00573, 0.31219) (0.00592, 0.32417) (0.0061, 0.33631) (0.00629, 0.3486) (0.00647, 0.36105) (0.00666, 0.37365) (0.00684, 0.38643) (0.00703, 0.39938) (0.00721, 0.4125) (0.0074, 0.42581) (0.00758, 0.43931) (0.00777, 0.453) (0.00795, 0.4669) (0.00814, 0.481) (0.00832, 0.49532) (0.00851, 0.50987) (0.00869, 0.52466) (0.00888, 0.53969) (0.00906, 0.55498) (0.00925, 0.57055) (0.00943, 0.58639) (0.00962, 0.60253) (0.0098, 0.61899) (0.00999, 0.63579) (0.01017, 0.65294) (0.01036, 0.67047) (0.01054, 0.68841) (0.01073, 0.70679) (0.01091, 0.72565) (0.0111, 0.74502) (0.01128, 0.76498) (0.01146, 0.78557) (0.01165, 0.80689) (0.01183, 0.82904) (0.01202, 0.85215) (0.0122, 0.87645) (0.01239, 0.90222) (0.01257, 0.92998) (0.01276, 0.96085) (0.012944, 1)
			};
			\addlegendentry{$\mu_{\mathrm{min}} = 0.250, \mu_{\mathrm{max}} = 0.250$}
			
			\addplot[plot1,very thick,smooth] coordinates {
			(0.0, 0.0) (0.00012, 0.00514) (0.00024, 0.01031) (0.00036, 0.01551) (0.00049, 0.02074) (0.00061, 0.026) (0.00073, 0.03129) (0.00085, 0.03661) (0.00097, 0.04197) (0.00109, 0.04736) (0.00121, 0.05278) (0.00133, 0.05823) (0.00146, 0.06372) (0.00158, 0.06924) (0.0017, 0.07479) (0.00182, 0.08038) (0.00194, 0.086) (0.00206, 0.09166) (0.00218, 0.09735) (0.0023, 0.10308) (0.00243, 0.10884) (0.00255, 0.11464) (0.00267, 0.12047) (0.00279, 0.12634) (0.00291, 0.13225) (0.00303, 0.1382) (0.00315, 0.14418) (0.00328, 0.1502) (0.0034, 0.15627) (0.00352, 0.16237) (0.00364, 0.1685) (0.00376, 0.17468) (0.00388, 0.1809) (0.004, 0.18716) (0.00412, 0.19346) (0.00425, 0.19981) (0.00437, 0.20619) (0.00449, 0.21262) (0.00461, 0.21909) (0.00473, 0.22561) (0.00485, 0.23216) (0.00497, 0.23877) (0.00509, 0.24542) (0.00522, 0.25211) (0.00534, 0.25885) (0.00546, 0.26564) (0.00558, 0.27248) (0.0057, 0.27936) (0.00582, 0.2863) (0.00594, 0.29328) (0.00606, 0.30031) (0.00619, 0.3074) (0.00631, 0.31454) (0.00643, 0.32173) (0.00655, 0.32897) (0.00667, 0.33627) (0.00679, 0.34362) (0.00691, 0.35103) (0.00704, 0.3585) (0.00716, 0.36602) (0.00728, 0.37361) (0.0074, 0.38125) (0.00752, 0.38896) (0.00764, 0.39672) (0.00776, 0.40456) (0.00788, 0.41245) (0.00801, 0.42041) (0.00813, 0.42844) (0.00825, 0.43654) (0.00837, 0.4447)
			};
			\addlegendentry{$\mu_\mathrm{min} = 0.237, \mu_\mathrm{max} = 0.288$}
			
			\addplot[plot2,very thick,smooth] coordinates {
			(0.0, 0.0) (9e-05, 0.0036) (0.00018, 0.00722) (0.00027, 0.01085) (0.00036, 0.0145) (0.00046, 0.01816) (0.00055, 0.02184) (0.00064, 0.02553) (0.00073, 0.02924) (0.00082, 0.03296) (0.00091, 0.0367) (0.001, 0.04045) (0.00109, 0.04423) (0.00119, 0.04801) (0.00128, 0.05181) (0.00137, 0.05563) (0.00146, 0.05947) (0.00155, 0.06332) (0.00164, 0.06719) (0.00173, 0.07107) (0.00182, 0.07497) (0.00191, 0.07889) (0.00201, 0.08283) (0.0021, 0.08678) (0.00219, 0.09075) (0.00228, 0.09473) (0.00237, 0.09874) (0.00246, 0.10276) (0.00255, 0.1068) (0.00264, 0.11085) (0.00274, 0.11492) (0.00283, 0.11902) (0.00292, 0.12313) (0.00301, 0.12725) (0.0031, 0.1314) (0.00319, 0.13556) (0.00328, 0.13975) (0.00337, 0.14395) (0.00347, 0.14817) (0.00356, 0.15241) (0.00365, 0.15667) (0.00374, 0.16094) (0.00383, 0.16524) (0.00392, 0.16956) (0.00401, 0.17389) (0.0041, 0.17825) (0.00419, 0.18263) (0.00429, 0.18702) (0.00438, 0.19144) (0.00447, 0.19587) (0.00456, 0.20033) (0.00465, 0.20481) (0.00474, 0.20931) (0.00483, 0.21383) (0.00492, 0.21837) (0.00502, 0.22293) (0.00511, 0.22752) (0.0052, 0.23212) (0.00529, 0.23675) (0.00538, 0.2414) (0.00547, 0.24608) (0.00556, 0.25077) (0.00565, 0.25549) (0.00574, 0.26023) (0.00584, 0.265) (0.00593, 0.26979) (0.00602, 0.2746) (0.00611, 0.27943) (0.0062, 0.28429) (0.00629, 0.28918)
			};
			\addlegendentry{$\mu_\mathrm{min} = 0.225, \mu_\mathrm{max} = 0.326$}

			\addplot[plot3,very thick,smooth] coordinates {
			(0.0, 0.0) (7e-05, 0.00238) (0.00013, 0.00476) (0.0002, 0.00715) (0.00026, 0.00955) (0.00033, 0.01195) (0.00039, 0.01436) (0.00046, 0.01678) (0.00052, 0.0192) (0.00059, 0.02163) (0.00065, 0.02407) (0.00072, 0.02651) (0.00078, 0.02896) (0.00085, 0.03141) (0.00091, 0.03388) (0.00098, 0.03635) (0.00104, 0.03882) (0.00111, 0.04131) (0.00117, 0.0438) (0.00124, 0.0463) (0.0013, 0.0488) (0.00137, 0.05131) (0.00143, 0.05383) (0.0015, 0.05636) (0.00156, 0.05889) (0.00163, 0.06143) (0.00169, 0.06398) (0.00176, 0.06653) (0.00182, 0.06909) (0.00189, 0.07166) (0.00195, 0.07424) (0.00202, 0.07682) (0.00208, 0.07941) (0.00215, 0.08201) (0.00221, 0.08462) (0.00228, 0.08723) (0.00234, 0.08985) (0.00241, 0.09248) (0.00247, 0.09511) (0.00254, 0.09776) (0.0026, 0.10041) (0.00267, 0.10307) (0.00273, 0.10573) (0.0028, 0.10841) (0.00286, 0.11109) (0.00293, 0.11378) (0.00299, 0.11647) (0.00306, 0.11918) (0.00312, 0.12189) (0.00319, 0.12461) (0.00325, 0.12734) (0.00332, 0.13007) (0.00338, 0.13282) (0.00345, 0.13557) (0.00351, 0.13833) (0.00358, 0.1411) (0.00364, 0.14388) (0.00371, 0.14666) (0.00377, 0.14945) (0.00384, 0.15226) (0.0039, 0.15506) (0.00397, 0.15788) (0.00403, 0.16071) (0.0041, 0.16354) (0.00416, 0.16639) (0.00423, 0.16924) (0.00429, 0.1721) (0.00436, 0.17497) (0.00442, 0.17785) (0.00449, 0.18073) 
			};
			\addlegendentry{$\mu_\mathrm{min} = 0.205, \mu_\mathrm{max} = 0.386$}
			
			\addplot[plot4,very thick,smooth] coordinates {
			(0.0, 0.0) (5e-05, 0.00159) (9e-05, 0.00319) (0.00014, 0.00478) (0.00018, 0.00638) (0.00023, 0.00798) (0.00027, 0.00959) (0.00032, 0.0112) (0.00037, 0.01281) (0.00041, 0.01442) (0.00046, 0.01604) (0.0005, 0.01766) (0.00055, 0.01928) (0.00059, 0.02091) (0.00064, 0.02254) (0.00068, 0.02417) (0.00073, 0.0258) (0.00078, 0.02744) (0.00082, 0.02908) (0.00087, 0.03073) (0.00091, 0.03237) (0.00096, 0.03402) (0.001, 0.03568) (0.00105, 0.03733) (0.0011, 0.03899) (0.00114, 0.04065) (0.00119, 0.04232) (0.00123, 0.04399) (0.00128, 0.04566) (0.00132, 0.04733) (0.00137, 0.04901) (0.00141, 0.05069) (0.00146, 0.05238) (0.00151, 0.05407) (0.00155, 0.05576) (0.0016, 0.05745) (0.00164, 0.05915) (0.00169, 0.06085) (0.00173, 0.06255) (0.00178, 0.06426) (0.00183, 0.06597) (0.00187, 0.06768) (0.00192, 0.0694) (0.00196, 0.07112) (0.00201, 0.07284) (0.00205, 0.07457) (0.0021, 0.0763) (0.00214, 0.07803) (0.00219, 0.07976) (0.00224, 0.0815) (0.00228, 0.08325) (0.00233, 0.08499) (0.00237, 0.08674) (0.00242, 0.08849) (0.00246, 0.09025) (0.00251, 0.09201) (0.00256, 0.09377) (0.0026, 0.09554) (0.00265, 0.09731) (0.00269, 0.09908) (0.00274, 0.10086) (0.00278, 0.10263) (0.00283, 0.10442) (0.00287, 0.1062) (0.00292, 0.10799) (0.00297, 0.10979) (0.00301, 0.11158) (0.00306, 0.11338) (0.0031, 0.11519) (0.00315, 0.117)
			};
			\addlegendentry{$\mu_\mathrm{min} = 0.179, \mu_\mathrm{max} = 0.462$}
			
			\addplot[plot5,very thick,smooth] coordinates {
			(0.0, 0.0) (3e-05, 0.00098) (6e-05, 0.00197) (8e-05, 0.00295) (0.00011, 0.00393) (0.00014, 0.00492) (0.00017, 0.00591) (0.00019, 0.00689) (0.00022, 0.00788) (0.00025, 0.00887) (0.00028, 0.00986) (0.00031, 0.01085) (0.00033, 0.01185) (0.00036, 0.01284) (0.00039, 0.01384) (0.00042, 0.01483) (0.00044, 0.01583) (0.00047, 0.01683) (0.0005, 0.01783) (0.00053, 0.01883) (0.00056, 0.01983) (0.00058, 0.02084) (0.00061, 0.02184) (0.00064, 0.02285) (0.00067, 0.02385) (0.00069, 0.02486) (0.00072, 0.02587) (0.00075, 0.02688) (0.00078, 0.02789) (0.00081, 0.02891) (0.00083, 0.02992) (0.00086, 0.03094) (0.00089, 0.03195) (0.00092, 0.03297) (0.00094, 0.03399) (0.00097, 0.03501) (0.001, 0.03603) (0.00103, 0.03705) (0.00106, 0.03807) (0.00108, 0.0391) (0.00111, 0.04012) (0.00114, 0.04115) (0.00117, 0.04218) (0.00119, 0.04321) (0.00122, 0.04424) (0.00125, 0.04527) (0.00128, 0.0463) (0.00131, 0.04733) (0.00133, 0.04837) (0.00136, 0.04941) (0.00139, 0.05044) (0.00142, 0.05148) (0.00144, 0.05252) (0.00147, 0.05356) (0.0015, 0.0546) (0.00153, 0.05565) (0.00156, 0.05669) (0.00158, 0.05774) (0.00161, 0.05879) (0.00164, 0.05983) (0.00167, 0.06088) (0.0017, 0.06193) (0.00172, 0.06299) (0.00175, 0.06404) (0.00178, 0.06509) (0.00181, 0.06615) (0.00183, 0.06721) (0.00186, 0.06826) (0.00189, 0.06932) (0.00192, 0.07038)
			};
			\addlegendentry{$\mu_\mathrm{min} = 0.144, \mu_\mathrm{max} = 0.568$}
			
			\addplot[plot6,very thick,smooth] coordinates {
			(0.0, 0.0) (1e-05, 0.00044) (2e-05, 0.00087) (3e-05, 0.00131) (4e-05, 0.00174) (5e-05, 0.00217) (6e-05, 0.00261) (7e-05, 0.00304) (8e-05, 0.00348) (9e-05, 0.00391) (0.0001, 0.00435) (0.00011, 0.00478) (0.00011, 0.00522) (0.00012, 0.00566) (0.00013, 0.00609) (0.00014, 0.00653) (0.00015, 0.00697) (0.00016, 0.0074) (0.00017, 0.00784) (0.00018, 0.00828) (0.00019, 0.00872) (0.0002, 0.00915) (0.00021, 0.00959) (0.00022, 0.01003) (0.00023, 0.01047) (0.00024, 0.01091) (0.00025, 0.01135) (0.00026, 0.01178) (0.00027, 0.01222) (0.00028, 0.01266) (0.00029, 0.0131) (0.0003, 0.01354) (0.00031, 0.01398) (0.00032, 0.01442) (0.00033, 0.01486) (0.00034, 0.01531) (0.00034, 0.01575) (0.00035, 0.01619) (0.00036, 0.01663) (0.00037, 0.01707) (0.00038, 0.01751) (0.00039, 0.01796) (0.0004, 0.0184) (0.00041, 0.01884) (0.00042, 0.01928) (0.00043, 0.01973) (0.00044, 0.02017) (0.00045, 0.02061) (0.00046, 0.02106) (0.00047, 0.0215) (0.00048, 0.02195) (0.00049, 0.02239) (0.0005, 0.02284) (0.00051, 0.02328) (0.00052, 0.02373) (0.00053, 0.02417) (0.00054, 0.02462) (0.00055, 0.02506) (0.00056, 0.02551) (0.00056, 0.02595) (0.00057, 0.0264) (0.00058, 0.02685) (0.00059, 0.0273) (0.0006, 0.02774) (0.00061, 0.02819) (0.00062, 0.02864) (0.00063, 0.02909) (0.00064, 0.02953) (0.00065, 0.02998) (0.00066, 0.03043) 
			};
			\addlegendentry{$\mu_\mathrm{min} = 0.086, \mu_\mathrm{max} = 0.742$}
			
			\addplot[color=black, only marks, mark=\MARKFORM, mark size=\MARKSZ] coordinates { (0.012944, 1) };
			\addplot[color=black, only marks, mark=\MARKFORM, mark size=\MARKSZ] coordinates { (0.00837, 0.4447) };
			\addplot[color=black, only marks, mark=\MARKFORM, mark size=\MARKSZ] coordinates { (0.00629, 0.28918) };
			\addplot[color=black, only marks, mark=\MARKFORM, mark size=\MARKSZ] coordinates { (0.00449, 0.18073) };
			\addplot[color=black, only marks, mark=\MARKFORM, mark size=\MARKSZ] coordinates { (0.00315, 0.117) };
			\addplot[color=black, only marks, mark=\MARKFORM, mark size=\MARKSZ] coordinates { (0.00192, 0.07038) };
			\addplot[color=black, only marks, mark=\MARKFORM, mark size=\MARKSZ] coordinates { (0.00066, 0.03043) };
		\end{axis}  
	\end{tikzpicture}
	\label{fig:entropy_single_round_S}
	}
	\\
	\subfloat[Lower bound on maximal achievable entropy as function of MDL source parameters, $\mu$.]{
	\begin{tikzpicture}
		\begin{axis}[
			height=8cm,
			width=.9\textwidth,
			xlabel=$\mu_{\mathrm{min}}$,
			ylabel=maximal entropy bound,
			xmin=0,
			xmax=0.25,
			ymax=1,
			ymin=0,
			xtick={0.01, 0.03, 0.05, 0.07, 0.09, 0.11 , 0.13, 0.15, 0.17, 0.19, 0.21, 0.23, 0.25},
			xticklabels={0.01, 0.03, 0.05, 0.07, 0.09, 0.11 , 0.13, 0.15, 0.17, 0.19, 0.21, 0.23, 0.25},
			ytick={0,0.2,0.4,0.6,0.8,1},
			legend style={at={(0.173,0.97)},anchor=north,legend cell align=left,font=\normalsize} 
			]
		
		
			\addplot[blue,very thick,smooth,dashed] coordinates {
			(0.25, 1.0) (0.24747, 0.78663) (0.24495, 0.663) (0.24242, 0.57426) (0.2399, 0.50654) (0.23737, 0.45294) (0.23485, 0.40938) (0.23232, 0.37324) (0.2298, 0.34275) (0.22727, 0.31667) (0.22475, 0.29409) (0.22222, 0.27433) (0.2197, 0.25689) (0.21717, 0.24137) (0.21465, 0.22746) (0.21212, 0.21491) (0.2096, 0.20352) (0.20707, 0.19314) (0.20455, 0.18363) (0.20202, 0.17487) (0.19949, 0.16679) (0.19697, 0.15929) (0.19444, 0.15232) (0.19192, 0.14582) (0.18939, 0.13973) (0.18687, 0.13402) (0.18434, 0.12865) (0.18182, 0.12359) (0.17929, 0.11881) (0.17677, 0.11428) (0.17424, 0.10999) (0.17172, 0.10592) (0.16919, 0.10204) (0.16667, 0.09835) (0.16414, 0.09482) (0.16162, 0.09145) (0.15909, 0.08823) (0.15657, 0.08514) (0.15404, 0.08218) (0.15152, 0.07934) (0.14899, 0.07661) (0.14646, 0.07398) (0.14394, 0.07145) (0.14141, 0.069) (0.13889, 0.06665) (0.13636, 0.06438) (0.13384, 0.06218) (0.13131, 0.06006) (0.12879, 0.058) (0.12626, 0.05601) (0.12374, 0.05409) (0.12121, 0.05222) (0.11869, 0.0504) (0.11616, 0.04864) (0.11364, 0.04693) (0.11111, 0.04527) (0.10859, 0.04365) (0.10606, 0.04208) (0.10354, 0.04055) (0.10101, 0.03906) (0.09848, 0.03761) (0.09596, 0.0362) (0.09343, 0.03482) (0.09091, 0.03347) (0.08838, 0.03216) (0.08586, 0.03088) (0.08333, 0.02963) (0.08081, 0.02841) (0.07828, 0.02721) (0.07576, 0.02604) (0.07323, 0.0249) (0.07071, 0.02379) (0.06818, 0.0227) (0.06566, 0.02163) (0.06313, 0.02058) (0.06061, 0.01956) (0.05808, 0.01855) (0.05556, 0.01757) (0.05303, 0.01661) (0.05051, 0.01566) (0.04798, 0.01473) (0.04545, 0.01383) (0.04293, 0.01294) (0.0404, 0.01206) (0.03788, 0.0112) (0.03535, 0.01036) (0.03283, 0.00953) (0.0303, 0.00872) (0.02778, 0.00792) (0.02525, 0.00714) (0.02273, 0.00637) (0.0202, 0.00561) (0.01768, 0.00487) (0.01515, 0.00414) (0.01263, 0.00342) (0.0101, 0.00271) (0.00758, 0.00202) (0.00505, 0.00133) (0.00253, 0.00023) (0, 0)
			};
			\addlegendentry{$\mu_\mathrm{max} = 1 - 3 \cdot \mu_\mathrm{min}$}
			
			\addplot[plot5,very thick,smooth,dashed] coordinates {
			(0.25, 1.0) (0.24747, 0.90603) (0.24495, 0.83923) (0.24242, 0.78355) (0.2399, 0.7353) (0.23737, 0.69265) (0.23485, 0.65445) (0.23232, 0.61994) (0.2298, 0.58853) (0.22727, 0.55978) (0.22475, 0.53335) (0.22222, 0.50895) (0.2197, 0.48634) (0.21717, 0.46532) (0.21465, 0.44573) (0.21212, 0.42743) (0.2096, 0.41028) (0.20707, 0.39418) (0.20455, 0.37903) (0.20202, 0.36475) (0.19949, 0.35126) (0.19697, 0.33851) (0.19444, 0.32642) (0.19192, 0.31494) (0.18939, 0.30404) (0.18687, 0.29366) (0.18434, 0.28377) (0.18182, 0.27433) (0.17929, 0.26531) (0.17677, 0.25668) (0.17424, 0.24842) (0.17172, 0.2405) (0.16919, 0.23289) (0.16667, 0.22558) (0.16414, 0.21856) (0.16162, 0.21179) (0.15909, 0.20527) (0.15657, 0.19898) (0.15404, 0.19292) (0.15152, 0.18705) (0.14899, 0.18138) (0.14646, 0.1759) (0.14394, 0.17058) (0.14141, 0.16543) (0.13889, 0.16044) (0.13636, 0.15559) (0.13384, 0.15089) (0.13131, 0.14631) (0.12879, 0.14186) (0.12626, 0.13753) (0.12374, 0.13332) (0.12121, 0.12921) (0.11869, 0.12521) (0.11616, 0.1213) (0.11364, 0.11749) (0.11111, 0.11377) (0.10859, 0.11013) (0.10606, 0.10657) (0.10354, 0.10309) (0.10101, 0.09969) (0.09848, 0.09635) (0.09596, 0.09309) (0.09343, 0.08989) (0.09091, 0.08675) (0.08838, 0.08367) (0.08586, 0.08065) (0.08333, 0.07769) (0.08081, 0.07477) (0.07828, 0.07191) (0.07576, 0.0691) (0.07323, 0.06633) (0.07071, 0.0636) (0.06818, 0.06092) (0.06566, 0.05829) (0.06313, 0.05569) (0.06061, 0.05312) (0.05808, 0.0506) (0.05556, 0.04811) (0.05303, 0.04565) (0.05051, 0.04323) (0.04798, 0.04084) (0.04545, 0.03848) (0.04293, 0.03614) (0.0404, 0.03384) (0.03788, 0.03156) (0.03535, 0.02931) (0.03283, 0.02708) (0.0303, 0.02488) (0.02778, 0.0227) (0.02525, 0.02054) (0.02273, 0.0184) (0.0202, 0.01628) (0.01768, 0.01419) (0.01515, 0.01211) (0.01263, 0.01005) (0.0101, 0.00801) (0.00758, 0.00598) (0.00505, 0.00397) (0.00253, 0.00142) (0, 0)
			};
			\addlegendentry{$\mu_\mathrm{max} = (1 - \mu_\mathrm{min})/3$}
		\end{axis}  
	\end{tikzpicture}
	\label{fig:entropy_single_round_mu}
	}
	\caption{In Figure~\ref{fig:entropy_single_round_S} we plot the entropy bound given in Equation~\eqref{eq:single_round_bound} as a function of the MDL violation $S_\mu$ for several choices of $\mu$. 
	The black dots indicate a lower bound on the maximal possible $S_{\mu}$ within quantum mechanics.
	In Figure~\ref{fig:entropy_single_round_mu} we show a lower bound on the maximal achievable entropy as a function of $\mu_\mathrm{min}$ (recall that $\mu_{\mathrm{min}} = \nicefrac{1}{4}$ corresponds to a uniform source).
	The exact way we lower bound the maximal achievable entropy is explained in Appendix~\ref{apx:max-viol}.
	The plot of the maximal achievable entropy shows one curve where we fixed $\mu_{\mathrm{min}}$ and chose $\mu_{\mathrm{max}} = 1 - 3\cdot \mu_{\mathrm{min}}$, and one plot where we chose $\mu_{\mathrm{max}} = \nicefrac{(1 - \mu_{\mathrm{min}})}{3}$. 
	These two choices correspond to the two extreme cases for fixed $\mu_{\mathrm{min}}$. 
	A source with $\mu_{\mathrm{max}} = 1 - 3\cdot \mu_{\mathrm{min}}$ is, for our purpose, the worst case since it produces outputs such that one output pair is considerably favoured  while all other pairs appear with low probability $\mu_{\mathrm{min}}$.
	A source where $\mu_{\mathrm{max}} = \nicefrac{(1 - \mu_{\mathrm{min}})}{3}$ is, for our purpose, the best case since only one output pair appears with low probability while all the other pairs appear with equally (high) probability.
	}
\label{fig:single_round_bound}
\end{figure}

A plot of the bound from Lemma~\ref{lma:holevo_bound} is shown in Figure~\ref{fig:single_round_bound}.
Once it shows the entropy bound as a function of the MDL violation $S_\mu$ for different $\mu$, and once a lower bound on the maximal achievable entropy as a function of $\mu_ \mathrm{min}$.
The reason that we can only plot a lower bound on the maximal entropy is due to the dependence of the maximal entropy on the specific source.
The exact reasoning and how we obtained the lower bound is explained in Appendix~\ref{apx:max-viol}.

In the plots we clearly see that the entropy of the outputs increases with increasing MDL violation.
Furthermore we can observe that the maximal achievable entropy bound decreases with increasing source bias.
Intuitively this makes sense since we expect to get a lower amount of randomness in the outputs if we start with less random inputs.

We also see in the above plot that the entropy is non-zero once there is a violation of the relevant inequality. As will be clear from the next Section, this implies that, asymptotically, we can tolerate maximal amount of noise --- as long as there is a violation of the MDL inequality some randomness can be extracted.

\section{Randomness amplification protocol}
\label{sec:RAP}

In the following sections we first introduce the setting of our randomness amplification protocol and explicitly state the assumptions that we are taking.
After that we introduce the protocol and proceed to prove the completeness and the soundness of the protocol.

\subsection{Setting and assumptions}
\label{sec:assumptions}

\begin{figure}
	\begin{center}
	\begin{tikzpicture} 
		\pgfmathsetmacro{\WIDTH}{1.5}
		\node (l) at (0,0) {$\lambda$};
		\node (XY) at (1.5*\WIDTH, -\WIDTH) {$X^{n}Y^{n}$};
		\node (Z) at (3*\WIDTH, -\WIDTH) {$Z^{d}$};
		\node (E) at (\WIDTH, \WIDTH) {$E$};
		\node (AB) at (3*\WIDTH, \WIDTH) {$A^{n}B^{n}$};
		\node (Ext) at (4*\WIDTH, 0) {$\mathrm{Ext}$};
		\node (K) at (5*\WIDTH, 0) {$K^{m}$};
		\draw[->] (l) edge (XY) (XY) edge (Z);
		\draw[->] (l) -- (E) (E) edge (AB);
		\draw[->] (XY) -- (AB);
		\draw[->] (AB) -- (Ext);
		\draw[->] (Z) -- (Ext);
		\draw[->] (Ext) -- (K);
		\draw[->] (.5*\WIDTH, -2*\WIDTH) -- node [below] {time} (4.5*\WIDTH, -2*\WIDTH);
	\end{tikzpicture}
	\end{center}
	\caption{
		Schematic drawing of the setting in the RAP.
		The adversary learns $\lambda$, the previous outputs of the source and possibly any other knowledge that might help her predict the outputs of the source.
		The source then produces $X^{n}Y^{n}$, a string of binary random variables that might depend on $\lambda$.
		The string $X^{n}Y^{n}$ together with a device, that Eve produced, is used by Alice and Bob to produce the outputs $A^{n}B^{n}$.
		During that process Eve can keep a purification of the quantum state in the device, $E$, that helps her predict $A^{n}B^{n}$.
		Finally, the string $Z^{d}$ is produced and the final output $K^{m}$ is produced from $A^{n}B^{n}$ and $Z^{d}$ using the extractor.
	}
\label{fig:RAP-setting}
\end{figure}
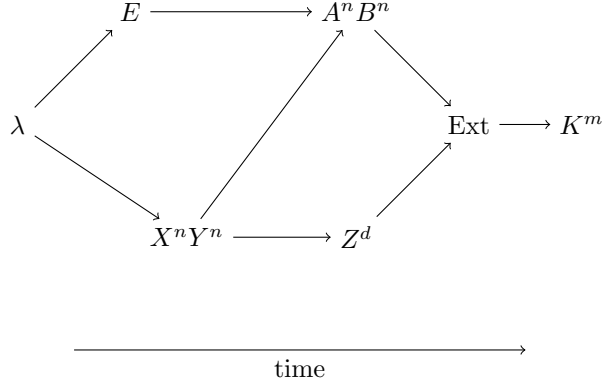

We consider a setting where we have an MDL source and an untrusted device with at least two components.
All the components in our setting are spatially separated and possibly manufactured by an adversary.
Since the source and the components of the device are separated the non-signalling condition holds pairwise between them.
Both the device and the source can be correlated with some classical side information $\lambda$ that the adversary holds.
Furthermore the adversary can have access to a quantum state $\rho_{E}$ that can be correlated with the device and the source.

During the protocol the source produces the inputs $X^{n}Y^{n}$ for the device which then, upon receiving the inputs, produces the outputs $A^{n}B^{n}$.
After the device produced the outputs the source produces another string of binary random variables $Z^{d}$.
The extractor then produces the final output $K^{m}$ using $A^{n}B^{n}$ and $Z^{d}$ as inputs.
The whole setting is schematically depicted in Figure~\ref{fig:RAP-setting}.

We summarise the general assumptions of the analysis of our protocol in the following list.
\begin{enumerate}[1.]
	\item Quantum mechanics is correct.
	\item The adversary is limited by quantum mechanics and without loss of generality we can assume that the adversary only holds a purification of Alice and Bob's initial quantum state.
	\item The untrusted device has at least two separated components.
\end{enumerate}
Moreover, we state the assumptions that are related to our specific setting.
\begin{enumerate}[1.]
\setcounter{enumi}{3}
	\item The adversary only has classical side information, $\lambda$, about the source of randomness. \label{item:guess-1}
	\item The source of randomness is a public $\mu$-MDL source. \label{item:guess-2}
\end{enumerate}
Assumptions~\ref{item:guess-1} and~\ref{item:guess-2} imply that the guessing probability (Definition~\ref{def:min-entropies}) for the outputs of the source is bounded as follows $\mu_{\mathrm{min}} \leq p_{\mathrm{guess}} (X_{i}Y_{i}|X^{i-1}Y^{i-1} E, \lambda) \leq \mu_{\mathrm{max}} \; \forall \lambda, i$.

\begin{enumerate}[1.]
\setcounter{enumi}{5}
	\item While the device produces outputs, it holds that 
	\begin{equation*}
		I(A^{i-1}B^{i-1}:X_{i}Y_{i}|X^{i-1}Y^{i-1} E, \lambda) = 0
	\end{equation*}
	and, after the device is done, it holds that
	\begin{equation*}
	I(Z^{d}:A^{n}B^{n}|X^{n}Y^{n} E, \lambda) = 0 \,.
	\end{equation*}
	\label{item:independence}
\end{enumerate}

The first two assumptions amount to the assumption that quantum mechanics is correct and complete.
Since no experimental evidence has been found that this is not the case, these are reasonable.
Furthermore, the fact that the device consists of at least two components can easily be verified by inspecting it before executing the protocol and the non-signalling condition can be reliably enforced by shielding the two parts of the device.
Without having any restrictions on the source we could not do anything. One therefore must use some assumptions about the source. Here we use Assumptions~\ref{item:guess-1} and~\ref{item:guess-2} which are necessary for our proof technique.
Assumption~\ref{item:independence} can be understood as assuming that, given the adversary's side information, the device and the source are independent, which again can be seen as the restriction that the adversary does not have access to the device and the source after they were produced.
This condition can easily be enforced by securing the devices from being tampered with.

\subsection{Security definition}\label{sec:secur_def}

We define the security via the secrecy of its outputs, similar as was done for the security definition of the DIQKD protocol in~\cite{RotemEAT}.
\begin{defi}[Secrecy]
\label{def:RAP-secrecy}
	A randomness amplification protocol is said to be $\varepsilon_{\mathrm{RA}}$-secret, when implemented using a device $D$, if for an output of length $m$,
	\begin{equation*}
		\left( 1 - \mathrm{Pr}[\text{abort}] \right) \left\Vert \rho_{K^{m} \Sigma} - \rho_{U^{m}} \otimes \rho_{\Sigma} \right\Vert \leq \varepsilon_{RA} \,,
	\end{equation*}
	where $K^{m}$ is the output of the RAP, $\Sigma$ is the adversary's side information that can be correlated with $D$, and $U^{m}$ is a uniform random variable of $m$ bits.
\end{defi}

The protocol is thus said to be secure if either the protocol aborts with high probability or the outputs are close to uniform.

\subsection{The protocol}\label{sec:ra_protocol}

The protocol is given in Protocol~\ref{alg:RAP}.

\begin{algorithm}[t]
	\floatname{algorithm}{Protocol}
	\raggedright
	\caption{Randomness Amplification Protocol}
	\label{alg:RAP}
	\begin{algorithmic}[1]
		\STATEx \textbf{Arguments:} 
		\STATEx\hspace{\algorithmicindent} $M(\mu)$ -- $\mu$-MDL source
		\STATEx\hspace{\algorithmicindent} $D$ -- untrusted device of (at least) two components
		\STATEx\hspace{\algorithmicindent} $n \in \mathbb{N}_+$ -- number of rounds
		\STATEx\hspace{\algorithmicindent} $S_{\mathrm{exp}}$ -- lower bound on the expected violation of the MDL inequality for an honest (possibly noisy) implementation
		\STATEx\hspace{\algorithmicindent} $\delta_{\mathrm{est}} \in \left ( 0,S_\mathrm{exp} \right )$ -- width of statistical confidence interval for the estimation test 
		\STATEx\hspace{\algorithmicindent} $\mathrm{Ext}:\{0,1\}^{2n}\times\{0,1\}^d\rightarrow\{0,1\}^m$ -- $(k_{1}, k_{2}, \varepsilon_{\mathrm{ext}})$ quantum-proof randomness extractor in the Markov model which is strong in the second input.
	
		\STATEx
	
		\STATEx \textbf{Entropy Accumulation:}
		\STATE For every round $i \in \{1,\dots,n\}$ do steps \ref{step:first}-\ref{step:last}.
		\STATE\hspace{\algorithmicindent} Alice and Bob choose inputs $X_i$ and $Y_i$ from the MDL source. \label{step:first}
		\STATE\hspace{\algorithmicindent} Alice and Bob use $D$ with $X_i,Y_i$ and record their outputs as $A_i,B_i$.
		\STATE\hspace{\algorithmicindent} Alice and Bob set $C_i = w(A_i,B_i,X_i,Y_i)$ for $w$ as defined in Equation~\eqref{eq:winning_func}. \label{step:last}
		\STATE Alice and Bob abort the protocol if $\bar{C} \equiv \frac{1}{n} \sum_j C_j < (S_\mathrm{exp} - \delta_{\mathrm{est}})$. \label{step:abortion}
		\STATEx \textbf{Randomness Extraction:}
		\STATE Draw a bit string $Z^{d}$ from $M(\mu)$.
		\STATE Use $\mathrm{Ext}$ to create $K^{m} = \mathrm{Ext}(A^{n}B^{n}, Z^{d})$.
	\end{algorithmic}
\end{algorithm}

Our proposed RAP consists of two parts.
In the first part we accumulate entropy.
For that matter we perform a series of MDL experiments, similar to the one described in Section~\ref{sec:single-round}.
In these MDL experiments we draw inputs from an MDL source and feed them to a device that produces outputs.
Ideally these outputs will be generated by doing measurements on a quantum state such that an MDL inequality is violated.
In the second part we draw another string from the MDL source and use this string, as well as the output from the entropy accumulation part, as inputs for the extractor.
The extractor then produces the final output.

During the entropy accumulation part of the protocol the variable $C_i$ is set in each round.
This variable is set to help evaluate whether the protocol should abort or not.
In each round $C_i$ is set according to the winning function
\begin{align}
	w(A_i,B_i,X_i,Y_i) = \, \begin{cases} \mu_{\mathrm{min}} & \text{if } (A_i,B_i,X_i,Y_i) = (0,0,0,0) \\ 
	-\mu_{\mathrm{max}} & \text{if } (A_i,B_i,X_i,Y_i) \in \{(0,1,0,1), (1,0,1,0), (0,0,1,1)\} \\
	0 & \text{otherwise} \;. \end{cases} 
\label{eq:winning_func}
\end{align}

After the $n$ rounds of of the entropy accumulation routine we decide whether to abort or not by comparing $\bar{C} \equiv \frac{1}{n} \sum_j C_j$ with $(S_\mathrm{exp} - \delta_{\mathrm{est}})$.
Note that we almost always abort for sources where $S^{*}_{\mu} \leq \delta_{\mathrm{est}}$ ($S^{*}_{\mu}$ is the maximal MDL value in quantum mechanics, see Appendix~\ref{apx:max-viol}).
Thus we cannot amplify randomness for sources for which $S^{*}_{\mu} \leq \delta_{\mathrm{est}}$.
However, we need a positive $\delta_{\mathrm{est}}$ in order to get a low probability for aborting in an honest implementation (see Section~\ref{sec:completeness}).
To remedy this problem we can decrease $\delta_{\mathrm{est}}$ at the cost of increasing $n$.
Hence, it is possible to have a reasonable probability of aborting in an honest implementation and still be able to amplify arbitrary SV sources.

We state the following theorem that quantifies the quality of the protocol's output.
It is a formal version of Theorem~\ref{thm:informal} which was given in the introduction.
The proof of the theorem is given in the end of the section as it combines our separate proofs of soundness and completeness.

\begin{thm}\label{thm:formal}
	Given any public SV-source with bias $\mu \in (0,0.5)$ and any two-component device $D$ that fulfils the assumptions described in Section~\ref{sec:assumptions}, let $n$ be the number of rounds in Protocol~\ref{alg:RAP}, $\varepsilon_{\mathrm{s}}, \varepsilon_{\mathrm{EA}} \in (0,1)$, $S_\mathrm{exp} \leq S_{\mu}^{*}$, $\delta_{\mathrm{est}} \in \left ( 0,S_\mathrm{exp} \right )$ and $m,\varepsilon_{\mathrm{ext}} $ the parameters of the $(k_{1}, k_{2}, \varepsilon_{\mathrm{ext}})$-extractor used in Protocol~\ref{alg:RAP}, with $k_1,k_2$ fulfilling Equation~\eqref{eq:ext_k_values}.
	Then:
	\begin{enumerate}
		\item (Secrecy) Protocol~\ref{alg:RAP} produces a string $K^{m}$ of length $m$ such that:
		\[
			\left(1-\Pr[\text{abort}]\right) \left\Vert \rho_{K^{m} \Sigma} - \rho_{U^{m}} \otimes \rho_{\Sigma} \right\Vert \leq 12 \left(\varepsilon_{\mathrm{s}} +\varepsilon_{\mathrm{ext}} \right) + \varepsilon_{\mathrm{EA}}\;,
		\]
		where $\Sigma=EX^nY^nZ^d\lambda$ is  the adversary's side information.
		 \label{part:soundness}
		\item (Completeness) There exists an honest implementation of the device such that Protocol~\ref{alg:RAP} aborts with probability $\varepsilon_{\mathrm{c}} \leq \exp \left( - \frac{2 n \delta_{\mathrm{est}}^2}{\left (\mu_\mathrm{min}+\mu_\mathrm{max} \right )^2} \right)$ when using this device. \label{part:completeness}
	\end{enumerate} 
\end{thm}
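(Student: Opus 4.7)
\emph{Soundness setup (EAT).} The plan is to execute the four-step programme sketched in Section~\ref{sec:proof_steps}, conditioning throughout on $\lambda$. I would first cast each round of the entropy-accumulation loop as an EAT channel (Definition~\ref{def:EATchannel}): the register $R_{i-1}$ carries the internal quantum state of the device before round $i$, the input $I_i = X_i Y_i$ is drawn from the MDL source, the outputs $A_i B_i$ are produced by the device, and $C_i = w(A_i,B_i,X_i,Y_i)$ is a deterministic function of the classical data. The Markov chain $A^{i-1}B^{i-1} \leftrightarrow I^{i-1} E \leftrightarrow I_i$ required in Definition~\ref{def:EATchannel} follows directly from the first half of Assumption~\ref{item:independence}. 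For a distribution $p$ on $\mathcal{C} = \{-\mu_{\max},0,\mu_{\min}\}$, the quantity $\tilde{S}(p) := \mu_{\min}\, p(\mu_{\min}) - \mu_{\max}\, p(-\mu_{\max})$ coincides with the MDL violation $S_\mu$ of the joint distribution of $(A_i B_i X_i Y_i)$ produced by any state that marginalises to $p$ on $C_i$, so I would define
\begin{equation*}
    f_{\min}(p) \;=\; \begin{cases} 1 - h\!\left(\tfrac{1}{2} + \tfrac{1}{\mu^*}\sqrt{\tilde{S}(p)\bigl(\tilde{S}(p)+\mu^*\bigr)}\right) & \text{if } \tilde{S}(p) \geq 0, \\ 0 & \text{otherwise}, \end{cases}
\end{equation*}
and verify that this is a valid min-tradeoff function via Lemma~\ref{lma:holevo_bound}. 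Monotonicity in $\tilde{S}$ then forces the infimum of $f_{\min}$ over the non-abort event $\Omega = \{\bar C \geq S_{\mathrm{exp}}-\delta_{\mathrm{est}}\}$ to be attained at $\bar C = S_{\mathrm{exp}}-\delta_{\mathrm{est}}$, so Theorem~\ref{thm:EAT} yields $H_{\min}^{\varepsilon_{\mathrm{s}}}(A^n B^n \mid X^n Y^n E \lambda)_{\rho_{|\Omega}} \geq n\,t - v\sqrt{n}$ with $t = f_{\min}$ evaluated at the threshold.

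\emph{Extractor and final secrecy bound.} The second half of Assumption~\ref{item:independence} is precisely the Markov chain $A^n B^n \leftrightarrow X^n Y^n E \lambda \leftrightarrow Z^d$ required by Definition~\ref{def:quantum_markov_mutual_info}. The string $Z^d$ drawn from the MDL source satisfies $H_{\min}(Z^d \mid X^n Y^n E \lambda) \geq \tfrac{d}{2}\log(1/\mu_{\max})$ since every consecutive pair has min-entropy at least $\log(1/\mu_{\max})$. Choosing the extractor parameters $k_1,k_2$ so that Equation~\eqref{eq:ext_k_values} matches these two entropy lower bounds (absorbing the $\log(1/\varepsilon_{\mathrm{ext}})+1$ slack), Lemma~\ref{lma:smooth-entropy-bound} yields trace distance at most $6(\varepsilon_{\mathrm{s}}+\varepsilon_{\mathrm{ext}})$ between the extracted string and the uniform, conditioned on $\Omega$. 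De-conditioning on $\Omega$ and folding the EAT tolerance $\varepsilon_{\mathrm{EA}}$ into the bound (to absorb the $\Pr[\Omega]$-dependent factor that appears inside $v$) delivers the announced $12(\varepsilon_{\mathrm{s}}+\varepsilon_{\mathrm{ext}})+\varepsilon_{\mathrm{EA}}$.

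\emph{Completeness.} Take the honest device to apply, independently in each round, the optimal quantum strategy for the given $\mu$-MDL source; Appendix~\ref{apx:max-viol} certifies that it achieves MDL value $S_\mu^* \geq S_{\mathrm{exp}}$. Then $\mathbb{E}[C_i \mid C^{i-1}] = S_\mu \geq S_{\mathrm{exp}}$ and each $C_i$ lies in an interval of length $\mu_{\min}+\mu_{\max}$. Applying Azuma--Hoeffding to the centred bounded martingale $\sum_{j\leq i}(C_j - \mathbb{E}[C_j\mid C^{j-1}])$ yields
\begin{equation*}
    \Pr[\bar{C} < S_{\mathrm{exp}} - \delta_{\mathrm{est}}] \;\leq\; \exp\!\left(-\tfrac{2n\delta_{\mathrm{est}}^2}{(\mu_{\min}+\mu_{\max})^2}\right),
\end{equation*}
exactly the stated completeness bound.

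\emph{Main obstacle.} The hardest part is the construction of the min-tradeoff function: one needs (i) pointwise domination by the single-round von~Neumann entropy at any valid input distribution, (ii) a value that grows fast enough in $\tilde S$ over the non-abort event so that the first-order rate $nt$ is non-trivial for the target $\delta_{\mathrm{est}}$, and (iii) bounded $\|\nabla f_{\min}\|_\infty$ so that the correction $v\sqrt{n}$ remains sublinear. The natural candidate above is smooth on $\tilde S > 0$ with finite one-sided derivative $2/(\mu^*\ln 2)$ at $\tilde S = 0^+$, so the gradient is bounded near the operating point; the argument still has to check this and, if needed, replace $f_{\min}$ by an affine lower bound on a neighbourhood of the threshold, in the spirit of~\cite{RotemEAT}, to cleanly fit the hypotheses of Theorem~\ref{thm:EAT}.
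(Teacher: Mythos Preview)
Your proposal is correct and follows essentially the same route as the paper: EAT channels from the protocol rounds, a min-tradeoff function built from Lemma~\ref{lma:holevo_bound}, the Markov-model extractor applied via Lemma~\ref{lma:smooth-entropy-bound}, and a Hoeffding-type tail bound for completeness. Two small clarifications are worth flagging. First, the linearisation of $f_{\min}$ is not optional: the derivative of $g_\mu$ diverges as $\tilde S/\mu^* \to (\sqrt{2}-1)/2$ (the upper end, not near $0^+$), and since the EAT penalty uses $\|\nabla f_{\min}\|_\infty$ over all of $\mathbb{P}_{\mathcal{C}}$, the paper linearises $g_\mu$ \emph{above} a cut point $s_t$ and then optimises over $s_t$ --- this is the origin of the function $\eta_{\mathrm{opt}}$ in Equation~\eqref{fct:eta_opt}. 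Second, the way $\varepsilon_{\mathrm{EA}}$ enters is simply a two-case split (either $\Pr[\Omega]<\varepsilon_{\mathrm{EA}}$, in which case the prefactor $(1-\Pr[\text{abort}])$ kills everything, or $\Pr[\Omega]\geq\varepsilon_{\mathrm{EA}}$, in which case one substitutes $\varepsilon_{\mathrm{EA}}$ for $p_\Omega$ inside $v$ and bounds $(1-\Pr[\text{abort}])\leq 1$); the factor $12$ rather than $6$ comes from Lemma~\ref{lma:smooth-entropy-bound} bounding $\tfrac12\|\cdot\|$ while the theorem statement uses $\|\cdot\|$.
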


\subsection{Completeness}
\label{sec:completeness}

In order for our RAP to be useful we do not only need a protocol that, in theory, produces uniform outputs but also one that can be implemented.
We call this criterion the completeness of the protocol.

\begin{lma}[Completeness]
\label{lma:completeness}
	Let $M(\mu)$ be any $\mu$-MDL source and let $S_\mathrm{exp} \leq S_{\mu}^{*}$, where $S_{\mu}^{*}$ is the maximal possible value for $S_\mu$ in quantum theory for the given MDL source.
	Then Protocol~\ref{alg:RAP} is complete with completeness parameter $\varepsilon_{\mathrm{c}} \leq \exp \left( - \frac{2 n \delta_{\mathrm{est}}^2}{\left (\mu_\mathrm{min}+\mu_\mathrm{max} \right )^2} \right)$;
	i.e., the probability to abort in an honest implementation is upper bounded by $\varepsilon_{\mathrm{c}}$.
\end{lma}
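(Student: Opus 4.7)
The plan is to construct an honest implementation that guarantees an expected per-round contribution of at least $S_{\mathrm{exp}}$ to the empirical test statistic $\bar{C}$, and then apply a Hoeffding-type tail bound to control the probability that $\bar{C}$ dips below $S_{\mathrm{exp}} - \delta_{\mathrm{est}}$. The resulting exponent will match the claimed $-2n\delta_{\mathrm{est}}^2/(\mu_{\mathrm{min}} + \mu_{\mathrm{max}})^2$ exactly.

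First I would fix the honest device. Because $S_{\mathrm{exp}} \leq S_{\mu}^{*}$, Appendix~\ref{apx:max-viol}, together with the constructive result of~\cite{MDL}, yields a two-component quantum strategy -- a fixed bipartite state and binary projective measurements -- whose MDL value on the given $\mu$-MDL source is at least $S_{\mathrm{exp}}$. I take the honest device to apply this strategy independently in each of the $n$ rounds, using a fresh copy of the joint state every time; so the outputs $(A_j, B_j)$ depend on the past only through $(X_j, Y_j)$. Writing out the conditional expectation of $C_j = w(A_j, B_j, X_j, Y_j)$ given the history, the form of $w$ in Equation~\eqref{eq:winning_func} reproduces the MDL test statistic $S_\mu$ for round $j$, so by construction
\begin{equation*}
\mathbb{E}\!\left[C_j \,\middle|\, X^{j-1} Y^{j-1}, A^{j-1} B^{j-1}\right] \geq S_{\mathrm{exp}} .
\end{equation*}

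Since $C_j$ takes values in $\{-\mu_{\mathrm{max}}, 0, \mu_{\mathrm{min}}\}$, it is confined to an interval of width $\mu_{\mathrm{min}} + \mu_{\mathrm{max}}$, so the martingale differences $C_j - \mathbb{E}[C_j \mid \text{past}]$ lie in an interval of the same width. Applying the Azuma--Hoeffding inequality in its range-based form,
\begin{equation*}
\Pr\!\left[\bar{C} < S_{\mathrm{exp}} - \delta_{\mathrm{est}}\right] \leq \Pr\!\left[\sum_{j=1}^{n}\bigl(\mathbb{E}[C_j \mid \text{past}] - C_j\bigr) > n \delta_{\mathrm{est}}\right] \leq \exp\!\left(-\frac{2n \delta_{\mathrm{est}}^2}{(\mu_{\mathrm{min}} + \mu_{\mathrm{max}})^2}\right),
\end{equation*}
which is precisely the claimed bound on $\varepsilon_{\mathrm{c}}$. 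The main obstacle is really the first step: producing an honest strategy that attains $S_{\mathrm{exp}}$ against an arbitrary $\mu$-MDL source with $\mu_{\mathrm{min}} > 0$, which is exactly what Appendix~\ref{apx:max-viol} addresses. The probabilistic step is then standard, and notably does not require independence of the $C_j$'s -- the martingale formulation absorbs any cross-round correlations that an adversarial MDL source may introduce, so the concentration bound holds uniformly over all source realizations consistent with the $\mu$-MDL constraint.
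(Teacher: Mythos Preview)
Your proposal is correct and follows essentially the same route as the paper: pick an honest i.i.d.\ quantum strategy achieving MDL value $S_{\mathrm{exp}}$, note that each $C_j$ lies in the interval $[-\mu_{\mathrm{max}},\mu_{\mathrm{min}}]$ of width $\mu_{\mathrm{min}}+\mu_{\mathrm{max}}$, and apply a Hoeffding-type bound. The only difference is that the paper invokes plain Hoeffding, while you use the Azuma--Hoeffding martingale version; your choice is in fact the more careful one, since the inputs $(X_j,Y_j)$ drawn from a general $\mu$-MDL source need not be independent across rounds, so the $C_j$'s need not be independent either, and the martingale formulation is what actually justifies the stated exponent.
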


\begin{proof}
	We want to show that there exists a device such that the protocol aborts with probability less than $\varepsilon_{\mathrm{c}}$.
	If we implement our device to perform $n$ independent, identical measurements on the product state $\rho_{Q_AQ_B}^{\otimes n}$, where $\rho_{Q_AQ_B}$ together with the chosen measurements achieves an MDL value $S_{\mathrm{exp}}$ of the MDL inequality, the expectation value of $\bar{C} = \frac{1}{n} \sum_j C_j$ is given by $\mathbb{E}[\bar{C}] = S_{\mathrm{exp}}$.
	We can then use Hoeffding's inequality to get an upper bound on the probability that the protocol aborts.
	We have
	\begin{align*}
		\mathrm{Pr}\left[ \mathrm{aborting} \right] &= \mathrm{Pr}\left[ \bar{C} < (S_{\mathrm{exp}} - \delta_{\mathrm{est}}) \right] \\
		&= \mathrm{Pr}\left[ S_{\mathrm{exp}} - \bar{C} > \delta_{\mathrm{est}} \right] \\
		&\leq \exp \left( - \frac{2 n \delta_{\mathrm{est}}^2}{\left (\mu_\mathrm{min}+\mu_\mathrm{max} \right )^2} \right) \, . \qedhere
	\end{align*}
\end{proof}

In Lemma~\ref{lma:completeness} we showed that, as long as $S_{\mathrm{exp}}$ is less than the maximal quantum value of $S_{\mu}$, there  is an honest implementation of the protocol such that it aborts with low probability.
In order for our protocol to be useful in reality it is important to notice that, as shown in~\cite{MDL}, for all $\mu_{\mathrm{min}} > 0$ the maximal quantum value of $S_{\mu}$ is greater than zero.
Moreover, in Appendix~\ref{apx:max-viol} we explain how to obtain a state that achieves a violation of the MDL inequality.
Thus for all MDL sources with $\mu_{\mathrm{min}} \neq 0$ the entropy bound that we derive later on is non-trivial.

\subsection{Soundeness}
\label{sec:soundness}

In the previous part we showed that our proposed protocol is complete.
Besides that we also want that the protocol does what it is supposed to do, i.e., if it does not abort the outputs should be secret with high probability.
This property, which is sometimes called soundness, is quantified in Definition~\ref{def:RAP-secrecy}.

In the following we prove that Protocol~\ref{alg:RAP} is secret and determine the secrecy parameter $\varepsilon_{\mathrm{RA}}$.
In a first step we derive a lower bound on the smooth min-entropy of the MDL experiments' outputs.
In the second step we show that in our protocol we can make use of the quantum-proof randomness extractors introduced in Section~\ref{sec:extractors}, and then proceed to determine the exact value of $\varepsilon_{\mathrm{RA}}$.

\subsubsection{Lower-bounding the smooth min-entropy}
\label{sec:entropy-bound}

In the first part of the RAP we have the entropy accumulation routine where we perform a number of MDL experiments with our device.
The goal now is to lower bound the smooth min-entropy of the outputs of these experiments.
To achieve this we employ the EAT (introduced in Section~\ref{sec:EAT}) together with the entropy bound for a single MDL experiment that was derived in Section~\ref{sec:single-round}.

In order to apply the entropy accumulation theorem we need a protocol that evolves the states using EAT channels.
In our proposed protocol we have in each round two quantum registers $Q_{\mathrm{A},i}$ and $Q_{\mathrm{B},i}$ holding the quantum state of either of the device's two parts.
Furthermore we have the classical registers $X_i,Y_i$ for the inputs, $A_i,B_i$ for the outputs of the device, and $C_i$ evaluating the outcome of the MDL experiment.
Comparing our registers to Definition~\ref{def:EATchannel}, we can identify $R_i = Q_{\mathrm{A},i}Q_{\mathrm{B},i}$ and $I_i = X_iY_i$, and denote the channels evolving the states in our protocol as 
\begin{align}
	\begin{split}
	\mathcal{N}_i: Q_{\mathrm{A},i-1}Q_{\mathrm{B},i-1} &\rightarrow Q_{\mathrm{A},i}Q_{\mathrm{B},i}A_iB_iX_iY_iC_i \\
	\rho_{Q_{\mathrm{A},i-1}Q_{\mathrm{B},i-1}} &\mapsto \rho_{Q_{\mathrm{A},i}Q_{\mathrm{B},i}A_iB_iX_iY_iC_i} \,.
	\end{split} \label{eq:EAT_channel_protocol_1}
\end{align}

The state after the $n$ rounds of the entropy accumulation part, just before step~\ref{step:abortion} is denoted by
\begin{equation}
	\rho_{A^{n} B^{n} X^{n} Y^{n} C^{n} E} = \left( \textrm{Tr}_{Q_{A,{n}}Q_{B,{n}}} \circ \mathcal{N}_{n} \circ \dots \circ \mathcal{N}_1 \right) \otimes \mathcal{I}_E \rho^0_{Q_{A}Q_{B}E} \label{eq:final_state}
\end{equation}
In step~\ref{step:abortion} Alice and Bob decide whether to abort the protocol or not. We denote by $\Omega$ the event of not aborting,
\begin{equation}
	\Omega = \Big \{\bar{C} \geq (S_{\mathrm{exp}} - \delta_{\mathrm{est}}) \Big \} \, . \label{eq:event_not_abort}
\end{equation}
Combined we denote by $\rho_{A^{n} B^{n} X^{n} Y^{n} C^{n} E|\Omega}$, or short $\rho_{|\Omega}$, the state after the protocol conditioned on not aborting the protocol.

We need to prove that these channels are indeed EAT channels.
\begin{lma}
\label{lma:EAT_channels}
	The channels $\mathcal{N}_i$ that evolve the unknown quantum state of Protocol~\ref{alg:RAP}, are EAT channels, i.e., they satisfy Definition~\ref{def:EATchannel}.
\end{lma}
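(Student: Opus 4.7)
The plan is to verify the three defining properties of EAT channels (Definition~\ref{def:EATchannel}) in turn, identifying in each round the quantum register $R_i = Q_{A,i}Q_{B,i}$, the classical input register $I_i = X_iY_i$, and the classical outputs $A_i, B_i, C_i$. The first property is essentially a matter of bookkeeping, the second follows by inspection of the protocol, and the third, the Markov chain condition, is where the assumptions from Section~\ref{sec:assumptions} are used.

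For property (1) I would simply note that in Protocol~\ref{alg:RAP} the inputs drawn from the MDL source are binary, the outputs of the device are binary by the untrusted device convention of Section~\ref{sec:untrusted-devices}, and $C_i$ takes the three values $\{-\mu_{\max},0,\mu_{\min}\}$. Hence $A_i,B_i,I_i,C_i$ are finite-dimensional classical registers with $d_{A_i}=d_{B_i}=2$, while $R_i = Q_{A,i}Q_{B,i}$ is an arbitrary quantum register as required.

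For property (2) I would observe that by Equation~\eqref{eq:winning_func}, $C_i$ is a deterministic function $C_i = w(A_i,B_i,X_i,Y_i)$ of the classical registers already present in $\sigma_{A_iB_iI_i}$. Applying this function by copying the computed value into a fresh classical register is a non-disturbing operation on classical data and thus can indeed be performed on the marginal without altering the state.

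The main obstacle, and the step I would spend most effort on, is property (3): the Markov chain condition $A^{i-1}B^{i-1} \leftrightarrow I^{i-1}E \leftrightarrow I_i$. For this I would absorb Eve's classical side information $\lambda$ into the quantum side information register $E$ (treating $\lambda$ as a classical sub-register of $E$), which is permitted since $\lambda$ is fixed before the protocol begins. Then Assumption~\ref{item:independence} states exactly
\[
I(A^{i-1}B^{i-1} : X_iY_i \mid X^{i-1}Y^{i-1} E, \lambda) = 0,
\]
which, after absorbing $\lambda$ into $E$, reads $I(A^{i-1}B^{i-1}:I_i\mid I^{i-1}E)=0$ and is precisely the required Markov chain $A^{i-1}B^{i-1} \leftrightarrow I^{i-1}E \leftrightarrow I_i$. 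I would add a brief remark justifying why Assumption~\ref{item:independence} is natural in our setting: during the protocol the SV/MDL source produces $X_iY_i$ after the previous outputs $A^{i-1}B^{i-1}$ have been generated, and by the chronological ordering the new bits can only depend on the past through $\lambda$ and $E$, not on the internal workings of the device (which is spatially isolated and non-signalling with the source). Combining the three verified properties yields the lemma.
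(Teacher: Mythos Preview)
Your proposal is correct and follows essentially the same approach as the paper: both verify the three conditions of Definition~\ref{def:EATchannel} in turn, with the first two being immediate from the classical nature of the registers and the definition of $C_i$ via the winning function, and the third following directly from Assumption~\ref{item:independence}. Your explicit remark about absorbing the classical side information $\lambda$ into the quantum register $E$ is a small clarification that the paper leaves implicit, but the substance is identical.
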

\begin{proof}
	\begin{enumerate}[1.]
		\item Condition 1. is satisfied because $A_i,B_i,I_i$ represent the (classical, discrete) inputs and outputs of the device that is employed, and $Q_{\mathrm{A},i}Q_{\mathrm{B},i}$ are quantum registers.
		\item Condition 2. is satisfied because $A_i,B_i,I_i$ are classical registers and $C_{i}$ is a classical function of those registers.
		\item As is stated in Section~\ref{sec:assumptions} it holds that $I(A^{i-1}B^{i-1} : I_i | I^{i-1} Em \lambda) = 0$.
		Thus, the Markov chain condition is satisfied.\qedhere
	\end{enumerate}
\end{proof}

Now that we have the necessary preconditions, we can prove a bound on the smooth min-entropy of $A^{n}B^{n}$ given the inputs and the side information.
More precisely, in Theorem~\ref{thm:main}, we lower bound $H^{\varepsilon_\mathrm{s}}_\mathrm{min} (A^{n}B^{n}|X^{n}Y^{n}F^{n}E)_{\rho_{|\Omega}}$ for any $\varepsilon_{\mathrm{s}} \in \{0,1\}$.
In our proof we combine \cite[Lemma~9 and Theorem~10]{RotemEAT} and adapt the proofs to our setting.

The bound is described with the help of the following functions, where $p \in \mathbb{P}_{\mathcal{C}}$:
\begin{align}
	S_\mu(p) &= \mu_\mathrm{min} \cdot p(1) - \mu_\mathrm{max} \cdot p(-1) \,, \label{fct:Smu_of_p} \\
	g_\mu(p) &= \begin{cases} 1 - h \left( \frac{1}{2} + \frac{1}{\mu^*} \sqrt{ S_\mu(p) (S_\mu(p) + \mu^*)} \right) &\text{for } \frac{S_\mu(p)}{\mu^*} \in [0,\frac{\sqrt{2} - 1}{2}) \\ 1 &\text{for } \frac{S_\mu(p)}{\mu^*} \in [\frac{\sqrt{2} - 1}{2}, 1] \end{cases} \,, \\
	a(s_t) &= \D{}{S_{\mu}(p)}g_\mu(p) \Big|_{S_{\mu}(p) = s_t} \quad \text{and} \quad b(s_t) = g(s_t) - a(s_t)\cdot s_t \,. \\
	f_ \mathrm{min}(p, s_t) &= \begin{cases} g_\mu(p) &\text{for } S_{\mu}(p) \leq s_t \\ a(s_t) \cdot S_{\mu}(p) + b(s_t) &\text{for } S_{\mu}(p) > s_t \end{cases} \,, \label{fct:min_tradeoff} \\
	\zeta_{\mu}(s_t, \varepsilon_ \mathrm{s}, \varepsilon_{\mathrm{EA}}) &= 2(\log(9) + a(s_t) \cdot \mu_{\mathrm{max}}) \sqrt{1-2 \log(\varepsilon_ \mathrm{s} \varepsilon_ \mathrm{EA})} \,, \\
	\eta_ \mathrm{opt}(\varepsilon_ \mathrm{s}, \varepsilon_ \mathrm{EA}, S_{\mathrm{exp}} - \delta_{\mathrm{est}}, n, \mu) &= \max_{0 < s_t < \mu^* \cdot \frac{\sqrt{2}-1}{2}} \left[ f_ \mathrm{min}(S_ \mathrm{exp} - \delta_ \mathrm{est}, s_t) - \frac{1}{\sqrt{n}} \zeta_{\mu}(s_t, \varepsilon_ \mathrm{s}, \varepsilon_ \mathrm{EA}) \right] \,. \label{fct:eta_opt}
\end{align}

\begin{thm}[Main]
\label{thm:main}
	Let D be any device, $\rho$ the state (as defined in Equation~\eqref{eq:final_state}) generated using Protocol~\ref{alg:RAP}, $\Omega$ (as defined in Equation~\eqref{eq:event_not_abort}) the event that the protocol does not abort, and $\rho_{|\Omega}$ the state conditioned on not aborting.
	Then, for any $\varepsilon_\mathrm{EA}, \varepsilon_\mathrm{s} \in (0,1)$, either the protocol aborts with probability greater than $1-\varepsilon_\mathrm{EA}$ or
	\begin{equation}
		H^{\varepsilon_\mathrm{s}}_\mathrm{min} (A^{n} B^{n} | X^{n} Y^{n} E)_{\rho_{|\Omega}} > n \cdot \eta_\mathrm{opt} (\varepsilon_\mathrm{s}, \varepsilon_\mathrm{EA}, S_{\mathrm{exp}} - \delta_{\mathrm{est}}, n, \mu) \,
	\end{equation}
	where $\eta_\mathrm{opt}$ is defined in Equation~\eqref{fct:eta_opt}.
\end{thm}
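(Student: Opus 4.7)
The plan is to apply the Entropy Accumulation Theorem (Theorem~\ref{thm:EAT}) to the EAT channels identified in Lemma~\ref{lma:EAT_channels}, after constructing a suitable min-tradeoff function from the single-round bound of Lemma~\ref{lma:holevo_bound}.

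First, I would translate the single-round bound into a statement about the marginal distribution of $C_i$. The winning function in Equation~\eqref{eq:winning_func} takes values in $\{\mu_{\mathrm{min}}, -\mu_{\mathrm{max}}, 0\}$, and for a marginal distribution $p \in \mathbb{P}_{\mathcal{C}}$ on $C_i$, the expectation $\mathbb{E}_p[c]$ equals exactly $S_\mu(p)$ as given in Equation~\eqref{fct:Smu_of_p}, which is the MDL violation of the round. Thus, Lemma~\ref{lma:holevo_bound} yields $H(A_i B_i | X_i Y_i R')_{\mathcal{N}_i(\sigma)} \geq g_\mu(p)$ for every input state $\sigma_{R_{i-1}R'}$ whose output has marginal $p$ on $C_i$, so $g_\mu$ is itself a valid (but not yet usable) min-tradeoff candidate.

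Next, because the derivative of $g_\mu$ with respect to $S_\mu$ diverges as $S_\mu/\mu^* \to (\sqrt{2}-1)/2$, using $g_\mu$ directly would make the EAT error term $v$ from Theorem~\ref{thm:EAT} infinite. I would therefore introduce a tangent cutoff at a parameter $s_t \in (0,\mu^*(\sqrt{2}-1)/2)$ and define $f_{\min}(\cdot, s_t)$ as in Equation~\eqref{fct:min_tradeoff}: equal to $g_\mu$ below $s_t$ and to its tangent line $a(s_t)\, S_\mu + b(s_t)$ above. Verifying that this remains a lower bound in the linear regime amounts to checking convexity of $g_\mu$ in $S_\mu$, which follows from the explicit form $g_\mu(S_\mu) = 1 - h\!\bigl(\tfrac{1}{2} + \tfrac{1}{\mu^*}\sqrt{S_\mu(S_\mu + \mu^*)}\bigr)$ by a direct second-derivative computation. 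Since $f_{\min}$ depends on $p$ only through $S_\mu(p)$, and $\partial S_\mu/\partial p(c) \in \{\mu_{\mathrm{min}}, -\mu_{\mathrm{max}}, 0\}$, one obtains $\|\nabla f_{\min}\|_\infty \leq a(s_t)\cdot \mu_{\mathrm{max}}$, which is exactly what is captured in the definition of $\zeta_\mu$.

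Having the min-tradeoff function in hand, I would use the non-abort event in Equation~\eqref{eq:event_not_abort}. On $\Omega$, the empirical average $\bar{C}$ equals $S_\mu(\mathrm{freq}_{C^n})$ and is at least $S_{\mathrm{exp}} - \delta_{\mathrm{est}}$. Choosing $s_t < S_{\mathrm{exp}} - \delta_{\mathrm{est}}$ places the observed frequency in the linear regime, where monotonicity of $a(s_t)\,S_\mu + b(s_t)$ in $S_\mu$ (since $a(s_t) > 0$) yields $f_{\min}(\mathrm{freq}_{C^n}, s_t) \geq f_{\min}(S_{\mathrm{exp}} - \delta_{\mathrm{est}}, s_t)$ for every outcome in $\Omega$, providing the value of $t$ required by Theorem~\ref{thm:EAT}.

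Finally, I would invoke Theorem~\ref{thm:EAT} with $d_{A_i B_i} = 4$ (so $\log(1+2d_{A_iB_i}) = \log(9)$). Either $p_\Omega < \varepsilon_{\mathrm{EA}}$ (the protocol aborts with probability greater than $1 - \varepsilon_{\mathrm{EA}}$), or substituting $p_\Omega \geq \varepsilon_{\mathrm{EA}}$ into the error term gives
\begin{equation*}
  H_{\min}^{\varepsilon_{\mathrm s}}(A^n B^n | X^n Y^n E)_{\rho_{|\Omega}} > n\, f_{\min}(S_{\mathrm{exp}} - \delta_{\mathrm{est}}, s_t) - \zeta_\mu(s_t, \varepsilon_{\mathrm s}, \varepsilon_{\mathrm{EA}})\sqrt{n}.
\end{equation*}
Since $s_t$ was arbitrary in $(0,\mu^*(\sqrt{2}-1)/2)$, optimizing over it yields the claimed bound with $\eta_{\mathrm{opt}}$ from Equation~\eqref{fct:eta_opt}. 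The main obstacle I anticipate is the convexity verification for $g_\mu$ needed to ensure the tangent lies below the true single-round entropy bound throughout the linear regime; everything else is bookkeeping to match the EAT notation and to identify $\|\nabla f_{\min}\|_\infty$ correctly.
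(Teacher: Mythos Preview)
Your proposal is correct and follows essentially the same route as the paper's proof: establish $g_\mu$ as a single-round bound via Lemma~\ref{lma:holevo_bound}, linearise at a cutoff $s_t$ to tame the diverging gradient, identify $\|\nabla f_{\min}\|_\infty = a(s_t)\,\mu_{\max}$, feed $t = f_{\min}(S_{\mathrm{exp}}-\delta_{\mathrm{est}},s_t)$ into Theorem~\ref{thm:EAT} under the dichotomy $p_\Omega \lessgtr \varepsilon_{\mathrm{EA}}$, and optimise over $s_t$. The paper's argument is organised identically (it packages the min-tradeoff construction as a separate Claim and uses an orthogonal change of variables $(s,t)$ to see that $g_\mu$ depends only on $S_\mu(p)$, but this is cosmetic).

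One remark: you are right to flag the convexity of $g_\mu$ in $S_\mu$ as the point that actually needs checking for the tangent to stay below the curve; the paper's proof asserts that any differentiable minorant of $g_\mu$ is a valid min-tradeoff function but does not explicitly verify that the linearised $f_{\min}$ is such a minorant. Since $g_\mu(S_\mu)$ is just the CHSH entropy bound $1-h\bigl(\tfrac12+\sqrt{\alpha(\alpha+1)}\bigr)$ evaluated at $\alpha=S_\mu/\mu^*$, convexity reduces to the known convexity of the CHSH single-round bound, so this step goes through.
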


The entropy bound from Theorem~\ref{thm:main} is plotted in Figure~\ref{fig:entropy_rate}.

\begin{proof}[Proof of Theorem~\ref{thm:main}]
	We begin the proof by devising a min-tradeoff function for the EAT channels.
	We then proceed to lower bound the smooth min-entropy by employing the EAT with the given min-tradeoff function.
	\begin{claim} \label{claim:min-tradeoff}
		Let $\{ \mathcal{N}_i \}$ be the set of EAT channels implemented in Protocol~\ref{alg:RAP}. Then, for any $0 < s_t < \mu^* \cdot \frac{\sqrt{2}-1}{2}$, where $\mu^{*} = \mu_{\mathrm{min}} \cdot \mu_{\mathrm{\mathrm{ax}}}$, the function~\eqref{fct:min_tradeoff} is a min-tradeoff function for the set $\{ \mathcal{N}_i \}$.
	\end{claim}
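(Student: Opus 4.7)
The plan is to verify Definition~\ref{def:min_tradeoff} directly. First I observe that, by construction of the winning function $w$ in~\eqref{eq:winning_func}, the quantity $S_\mu(p) = \mu_\mathrm{min}\,p(1) - \mu_\mathrm{max}\,p(-1)$ from~\eqref{fct:Smu_of_p} is exactly the MDL-value of the joint distribution over $A_iB_iX_iY_i$ induced by any input state $\sigma_{R_{i-1}R'}$ whose $\mathcal{N}_i$-output has marginal $p$ on $C_i$. Thus, the constraint ``$\mathcal{N}_i(\sigma)_{C_i}=p$'' in the infimum fixes the observed MDL-violation in that round to be $S_\mu(p)$.

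Next I will apply Lemma~\ref{lma:holevo_bound}, treating the purifying reference $R'$ as the adversary's quantum side information $E$ from that lemma (this is permissible because the lemma is stated for arbitrary quantum side information extending the device's state). The lemma then gives $H(A_iB_i\,|\,X_iY_iR')_{\mathcal{N}_i(\sigma)} \geq g_\mu\big(S_\mu(p)\big)$ whenever $S_\mu(p) \in (0, \mu^*(\sqrt{2}-1)/2]$. For $S_\mu(p) \leq 0$ the bound $H \geq 0 = g_\mu$ is trivial, and for $S_\mu(p) > \mu^*(\sqrt{2}-1)/2$ the infimum is over the empty set: combining the relation $\alpha \geq S_\mu/\mu^*$ from~\eqref{eq:relation_alpha_smu} with Tsirelson's bound $\alpha \leq (\sqrt{2}-1)/2$ rules out any quantum realization of such a $p$, so the claim holds vacuously.

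It remains to handle the two branches of $f_\mathrm{min}(\cdot, s_t)$. On $\{p : S_\mu(p) \leq s_t\}$ the function equals $g_\mu$ itself and the required inequality is exactly the content of the previous paragraph. On $\{p : S_\mu(p) > s_t\}$ the function equals the tangent line $a(s_t)\, S_\mu(p) + b(s_t)$ to $g_\mu$ at $s_t$. The reason for replacing $g_\mu$ by its tangent here is that $g_\mu$ has an unbounded derivative as $S_\mu/\mu^* \to (\sqrt{2}-1)/2$, whereas the tangent-line piece has a bounded gradient controlled by $a(s_t)\,\mu_\mathrm{max}$, which is precisely what the EAT needs in order for the $\zeta_\mu$ term in~\eqref{fct:eta_opt} to be finite.

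The main technical obstacle is therefore to show that $g_\mu$, viewed as a function of $S = S_\mu(p)$, is convex on $[0,\, \mu^*(\sqrt{2}-1)/2]$; given convexity, the tangent at $s_t$ lies below $g_\mu$ on the whole interval, yielding $a(s_t) S + b(s_t) \leq g_\mu(S) \leq H$ exactly as required. I would establish convexity by direct differentiation: writing $y(S) = \sqrt{S(S+\mu^*)}/\mu^*$, one finds $g_\mu''(S) = \phi'(y)\,(y')^2 + \phi(y)\,y''$ with $\phi(y) = \log\!\big((\tfrac{1}{2}+y)/(\tfrac{1}{2}-y)\big)$; although $y'' < 0$, a short asymptotic analysis near the endpoints together with a routine sign check on the interior shows that the positive $\phi'(y)(y')^2$ term dominates throughout, giving $g_\mu'' > 0$ (this matches the visible shape of the curves in Figure~\ref{fig:single_round_bound}). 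Combining all the pieces then yields $f_\mathrm{min}(p, s_t) \leq H(A_iB_i\,|\,X_iY_iR')$ for every admissible $p$, completing the claim.
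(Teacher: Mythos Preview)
Your proposal is correct and follows the same approach as the paper: identify $S_\mu(p)$ with the MDL value of the round, invoke Lemma~\ref{lma:holevo_bound} to obtain $H\geq g_\mu$, and linearize beyond $s_t$ to control the gradient for the EAT. You are in fact more careful than the paper on one point---the paper simply performs the linearization without justifying that the tangent line lies below $g_\mu$, whereas you correctly isolate convexity of $g_\mu$ (in the variable $S_\mu$) as the needed ingredient and sketch its verification.
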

	\begin{proof}[Proof of Claim~\ref{claim:min-tradeoff}]
		Note that, due to Assumption~\ref{item:guess-2}, each $\mathcal{N}_{i}$ describes a single MDL experiment (as described in Chapter~\ref{sec:single-round}).
		Thus, employing the bound from Equation~\eqref{eq:single_round_bound}, it follows directly that		\begin{equation}
			\inf_{\sigma_{R_{i-1}R'}:\mathcal{N}_i(\sigma)_{C_i}=p} H\left( A_i B_i | X_i Y_i R' \right)_{\mathcal{N}_i(\sigma)} \geq 1 - h \left( \frac{1}{2} + \frac{1}{\mu^*} \sqrt{ S_\mu(p) (S_\mu(p) + \mu^*)} \right) \;. \label{eq:one_box_entropy_final}
		\end{equation}
		
		Let $\mathcal{C} = \{ -\mu_{\mathrm{max}}, 0, \mu_{\mathrm{min}} \}$ and define the function $g_\mu$ on $\mathbb{P}_{\mathcal{C}}$ as
		\begin{equation}
			g_\mu(p) = \begin{cases} 1 - h \left( \frac{1}{2} + \frac{1}{\mu^*} \sqrt{ S_\mu(p) (S_\mu(p) + \mu^*)} \right) &\text{for } \frac{S_\mu(p)}{\mu^*} \in [0,\frac{\sqrt{2} - 1}{2}) \\ 1 &\text{for } \frac{S_\mu(p)}{\mu^*} \in [\frac{\sqrt{2} - 1}{2}, 1] \end{cases} \,.
		\end{equation}
		Then any function $f_ \mathrm{min}(p)$, that is differentiable and satisfies $f_ \mathrm{min}(p) \leq g_\mu(p)$ for all $p$, is a min-tradeoff function for the set $\{ \mathcal{N}_i \}$.
		Unfortunately, as $\frac{S_\mu(p)}{\mu^*}$ approaches $\frac{\sqrt{2}-1}{2}$, the derivative of $g_\mu$ diverges.
		Since the bound that we derive later depends on the derivative, we want to avoid this.
		Therefore we linearize $g_\mu$ starting at some point $p_t$ with $\frac{S_\mu(p_t)}{\mu^*} < \frac{\sqrt{2}-1}{2}$ and thus avoid the problem of a diverging derivative.
		
		Consider the change of variables
		\begin{equation}
		\begin{aligned}
			s &= S_\mu(p) \\
			t &= \mu_ \mathrm{max} \cdot p(1) + \mu_ \mathrm{min} \cdot p(-1) \,.
		\end{aligned}
		\end{equation}
		In this orthogonal coordinate system we clearly see that $g_\mu(s,t)$ is independent of $t$ and only changes with $s$.
		Thus we can restrict our attention in analysing $g_\mu(s,t)$ to a slice where $t$ is constant.
		The divergence of the derivative now happens as $\frac{s}{\mu^*}$ approaches $\frac{\sqrt{2}-1}{2}$.
		Hence we linearize $g_\mu$ at some point $s_t < \mu^* \cdot \frac{\sqrt{2}-1}{2}$.
		
		For the linearization we define
		\begin{equation}
			a(s_t) = \D{}{s}g_\mu(s) \Big|_{s_t} \quad \text{and} \quad b(s_t) = g(s_t) - a(s_t)\cdot s_t \,.
		\end{equation}
		Given these constants we can define the function $f_ \mathrm{min}$ as
		\begin{equation}
			f_ \mathrm{min}(s,s_t) = \begin{cases} g_\mu(s) &\text{for } s \leq s_t \\ a(s_t) \cdot s + b(s_t) &\text{for } s > s_t \end{cases} \,.
		\end{equation}
		Note that this is technically not yet a min-tradeoff function, since it is a function taking arguments in $\mathbb{R}$ instead of $\mathbb{P}_{\mathcal{C}}$.
		However, expressing the new variables as a function of $p$ we can get the final min-tradeoff function as
		\begin{equation}
			f_ \mathrm{min}(p,s_t) = \begin{cases} g_\mu(p) &\text{for } s(p) \leq s_t \\ a(s_t) \cdot s(p) + b(s_t) &\text{for } s(p) > s_t \end{cases} \,.
		\end{equation}
		
		Note also that this is a min-tradeoff function for all $0 < s_t < \mu^* \cdot \frac{\sqrt{2}-1}{2}$.
		Hence, when we derive the entropy bound for Protocol~\ref{alg:RAP} we can optimize over the parameter $s_t$ to get the best possible bound.
	\end{proof}	
	
	Now that we have a min-tradeoff function we can continue to derive a lower bound on the smooth min-entropy.
	As stated in Lemma~\ref{lma:EAT_channels} the channels in the protocol are EAT channels and we can employ the EAT.
	Furthermore we realize that the event $\Omega$ of the protocol not aborting implies that the estimation for the MDL violation is at least $S_{\mathrm{exp}} - \delta_{\mathrm{est}}$, i.e., 
	\[
	S_\mu \left(\mathrm{freq}_{C^{n}}\right) \geq S_{\mathrm{exp}} - \delta_{\mathrm{est}} 
	\]
	for any $C^{n}$ for which $\Pr\left[C^{n}\right]_{\rho_{|\Omega}}> 0$.
	Thus, employing the EAT, we find that either the protocol aborts with probability $1-\Pr(\Omega)\geq 1-\varepsilon_{\mathrm{EA}}$ or, the lower bound
	\begin{equation}
		H^{\varepsilon_{\text{s}}}_{\min} \left( A^{n} B^{n} | X^{n} Y^{n} E \right)_{\rho_{|\Omega}} > {n} f_ \mathrm{min}(S_ \mathrm{exp} - \delta_ \mathrm{est}, s_t) - \sqrt{{n}} \zeta_{\mu}(s_t, \varepsilon_ \mathrm{s}, \varepsilon_ \mathrm{EA}) \,,
	\end{equation}
	holds.
	Here we introduced $\zeta_{\mu}(s_t, \varepsilon_ \mathrm{s}, \varepsilon_{\mathrm{EA}}) = 2(\log(9) + a(s_t) \cdot \mu_{ \mathrm{max}}) \sqrt{1-2 \log(\varepsilon_ \mathrm{s} \varepsilon_ \mathrm{EA})}$,
	where we used that $\| \nabla f_{\mathrm{min}} \|_{\infty} = a(s_{t}) \cdot \mu_{\mathrm{max}}$, due to the linearization in the direction of the steepest slope.
	Additionally, in the description of the lower bound we used the argument $S_\mathrm{exp}-\delta_\mathrm{est}$ as shorthand to denote any probability distribution with $S_{\mu}(p) = S_ \mathrm{exp} - \delta_ \mathrm{est}$.
	We can use this abbreviated notation because for all $p$ with fixed $S_{\mu}(p)$, the value of the min-tradeoff function is the same.
	Furthermore, the fact that $f_{\mathrm{min}}(p)$ is constant as long as $S_{\mu}(p)$ is constant is also the reason that, in the EAT, we can set $t = f_ \mathrm{min}(S_ \mathrm{exp} - \delta_ \mathrm{est}, s_t)$ in our lower bound.
	
	Since $s_t$ is chosen arbitrarily, we can optimize over it.
	For the final entropy bound define
	\begin{equation*}
		\eta_ \mathrm{opt}(\varepsilon_ \mathrm{s}, \varepsilon_ \mathrm{EA}, S_{\mathrm{exp}} - \delta_{\mathrm{est}}, n, \mu) = \max_{0 < s_t < \mu^* \cdot \frac{\sqrt{2}-1}{2}} \left[ f_ \mathrm{min}(S_ \mathrm{exp} - \delta_ \mathrm{est}, s_t) - \frac{1}{\sqrt{{n}}} \zeta_{\mu}(s_t, \varepsilon_ \mathrm{s}, \varepsilon_ \mathrm{EA}) \right] \,.
	\end{equation*}
	Thus, the entropy bound reduces to
	\begin{equation*}
		H^{\varepsilon_{\text{s}}}_{\min} \left( A^{n} B^{n} | X^{n} Y^{n} E \right)_{\rho_{|\Omega}} > {n} \cdot \eta_ \mathrm{opt} (\varepsilon_ \mathrm{s}, \varepsilon_ \mathrm{EA}, S_{\mathrm{exp}} - \delta_{\mathrm{est}}, n, \mu) \,. \qedhere
	\end{equation*}
\end{proof}

As stated before, in Figure~\ref{fig:entropy_rate} the entropy rate, $\eta_{\mathrm{opt}}$, is plotted for several different parameters of the RAP.
In Figure~\ref{fig:entropy_rate_n} the asymptotic rates are equal to the single round entropy bounds of Figure~\ref{fig:entropy_single_round_S} with corresponding $\mu$.
Furthermore, we observe that, as expected, the entropy rate decreases for a decreasing number of rounds, $n$.
If the number of rounds decreases below a certain threshold, we do not get a non-trivial (positive) entropy bound anymore.

In Figure~\ref{fig:entropy_rate_mu} we see that, as was the case for a single MDL experiment, the maximal entropy (rate) decreases as the bias of the source increases.
Moreover, we observe that, as $n$ decreases, the minimal MDL violation to achieve a non-trivial entropy bound increases.
Therefore, we can compensate imperfections in the implementation (which lead to a decreasing MDL violation) with an increasing number of rounds.

\begin{figure}
\centering
	\subfloat[Entropy rate, $\eta_{\mathrm{opt}}$, as function of $S_{\mathrm{exp}}$ for different $n$ and, in each plot, fixed $\mu$.]{
	\begin{tikzpicture}
		\begin{axis}[
			title=$\mu_{\mathrm{min}} \stackrel{}{=} 0.250 \text{, } \mu_{\mathrm{max}} \stackrel{}{=} 0.250$,
			xlabel=$S_{\mathrm{exp}}$,
			ylabel=$\eta_{\mathrm{opt}}$,
			xmin=0,
			xmax=0.012944,
			ymax=1,
			ymin=0,
			ytick={0,0.2,0.4,0.6,0.8,1},
			legend style={at={(0.33, 0.97)},anchor=north,legend cell align=left,font=\FONTSZ} 
			]
		
		
			\addplot[plot1,very thick,smooth] coordinates {
			(0.0001, -0.00902) (0.00029, -0.00043) (0.00047, 0.00823) (0.00066, 0.01696) (0.00084, 0.02579) (0.00103, 0.0347) (0.00122, 0.04371) (0.0014, 0.05281) (0.00159, 0.062) (0.00178, 0.07128) (0.00196, 0.08066) (0.00215, 0.09014) (0.00233, 0.09971) (0.00252, 0.10939) (0.00271, 0.11916) (0.00289, 0.12904) (0.00308, 0.13903) (0.00326, 0.14912) (0.00345, 0.15932) (0.00364, 0.16963) (0.00382, 0.18006) (0.00401, 0.1906) (0.0042, 0.20125) (0.00438, 0.21203) (0.00457, 0.22293) (0.00475, 0.23395) (0.00494, 0.2451) (0.00513, 0.25638) (0.00531, 0.26779) (0.0055, 0.27934) (0.00568, 0.29102) (0.00587, 0.30285) (0.00606, 0.31483) (0.00624, 0.32695) (0.00643, 0.33923) (0.00662, 0.35166) (0.0068, 0.36425) (0.00699, 0.37702) (0.00717, 0.38995) (0.00736, 0.40306) (0.00755, 0.41635) (0.00773, 0.42983) (0.00792, 0.44351) (0.0081, 0.45738) (0.00829, 0.47146) (0.00848, 0.48577) (0.00866, 0.50029) (0.00885, 0.51505) (0.00904, 0.53006) (0.00922, 0.54531) (0.00941, 0.56084) (0.00959, 0.57665) (0.00978, 0.59275) (0.00997, 0.60917) (0.01015, 0.62591) (0.01034, 0.64301) (0.01052, 0.66048) (0.01071, 0.67834) (0.0109, 0.69664) (0.01108, 0.71541) (0.01127, 0.73468) (0.01145, 0.75452) (0.01164, 0.77497) (0.01183, 0.79611) (0.01201, 0.81805) (0.0122, 0.84092) (0.01239, 0.86488) (0.01257, 0.8902) (0.01276, 0.9173) (0.01294, 0.947)
			};
			
			\addplot[plot2,very thick,smooth] coordinates {
			(0.0001, -0.01276) (0.00029, -0.00417) (0.00047, 0.00446) (0.00066, 0.01317) (0.00084, 0.02196) (0.00103, 0.03085) (0.00122, 0.03982) (0.0014, 0.04889) (0.00159, 0.05805) (0.00178, 0.0673) (0.00196, 0.07664) (0.00215, 0.08609) (0.00233, 0.09563) (0.00252, 0.10527) (0.00271, 0.11501) (0.00289, 0.12486) (0.00308, 0.13481) (0.00326, 0.14486) (0.00345, 0.15502) (0.00364, 0.1653) (0.00382, 0.17568) (0.00401, 0.18618) (0.0042, 0.1968) (0.00438, 0.20754) (0.00457, 0.21839) (0.00475, 0.22937) (0.00494, 0.24048) (0.00513, 0.25171) (0.00531, 0.26308) (0.0055, 0.27458) (0.00568, 0.28622) (0.00587, 0.298) (0.00606, 0.30992) (0.00624, 0.32199) (0.00643, 0.33421) (0.00662, 0.34659) (0.0068, 0.35913) (0.00699, 0.37184) (0.00717, 0.38471) (0.00736, 0.39776) (0.00755, 0.41099) (0.00773, 0.42441) (0.00792, 0.43801) (0.0081, 0.45182) (0.00829, 0.46583) (0.00848, 0.48006) (0.00866, 0.4945) (0.00885, 0.50918) (0.00904, 0.5241) (0.00922, 0.53927) (0.00941, 0.55471) (0.00959, 0.57042) (0.00978, 0.58642) (0.00997, 0.60273) (0.01015, 0.61936) (0.01034, 0.63633) (0.01052, 0.65367) (0.01071, 0.6714) (0.0109, 0.68955) (0.01108, 0.70815) (0.01127, 0.72724) (0.01145, 0.74688) (0.01164, 0.76711) (0.01183, 0.78801) (0.01201, 0.80967) (0.0122, 0.8322) (0.01239, 0.85576) (0.01257, 0.88058) (0.01276, 0.90701) (0.01294, 0.93569)
			};
			
			\addplot[plot3,very thick,smooth] coordinates {
			(0.0001, -0.02854) (0.00029, -0.01994) (0.00047, -0.01134) (0.00066, -0.00274) (0.00084, 0.00593) (0.00103, 0.01469) (0.00122, 0.02353) (0.0014, 0.03247) (0.00159, 0.04149) (0.00178, 0.05061) (0.00196, 0.05982) (0.00215, 0.06912) (0.00233, 0.07853) (0.00252, 0.08802) (0.00271, 0.09762) (0.00289, 0.10732) (0.00308, 0.11712) (0.00326, 0.12702) (0.00345, 0.13703) (0.00364, 0.14714) (0.00382, 0.15737) (0.00401, 0.16771) (0.0042, 0.17815) (0.00438, 0.18872) (0.00457, 0.1994) (0.00475, 0.21021) (0.00494, 0.22113) (0.00513, 0.23218) (0.00531, 0.24336) (0.0055, 0.25467) (0.00568, 0.26611) (0.00587, 0.27768) (0.00606, 0.2894) (0.00624, 0.30126) (0.00643, 0.31327) (0.00662, 0.32542) (0.0068, 0.33773) (0.00699, 0.3502) (0.00717, 0.36283) (0.00736, 0.37563) (0.00755, 0.3886) (0.00773, 0.40175) (0.00792, 0.41509) (0.0081, 0.42861) (0.00829, 0.44233) (0.00848, 0.45625) (0.00866, 0.47038) (0.00885, 0.48473) (0.00904, 0.4993) (0.00922, 0.51411) (0.00941, 0.52917) (0.00959, 0.54449) (0.00978, 0.56008) (0.00997, 0.57595) (0.01015, 0.59212) (0.01034, 0.60861) (0.01052, 0.62542) (0.01071, 0.6426) (0.0109, 0.66015) (0.01108, 0.67811) (0.01127, 0.69651) (0.01145, 0.71539) (0.01164, 0.73478) (0.01183, 0.75474) (0.01201, 0.77534) (0.0122, 0.79665) (0.01239, 0.81878) (0.01257, 0.84186) (0.01276, 0.86609) (0.01294, 0.89176)
			};
			
			\addplot[plot4,very thick,smooth] coordinates {
			(0.001, -0.04036) (0.00117, -0.03236) (0.00135, -0.02437) (0.00152, -0.01637) (0.00169, -0.00838) (0.00187, -0.00035) (0.00204, 0.00776) (0.00221, 0.01594) (0.00238, 0.0242) (0.00256, 0.03253) (0.00273, 0.04095) (0.0029, 0.04944) (0.00308, 0.05802) (0.00325, 0.06667) (0.00342, 0.07541) (0.0036, 0.08424) (0.00377, 0.09315) (0.00394, 0.10214) (0.00412, 0.11123) (0.00429, 0.1204) (0.00446, 0.12966) (0.00464, 0.13902) (0.00481, 0.14847) (0.00498, 0.15801) (0.00515, 0.16765) (0.00533, 0.17739) (0.0055, 0.18723) (0.00567, 0.19717) (0.00585, 0.20722) (0.00602, 0.21737) (0.00619, 0.22763) (0.00637, 0.23799) (0.00654, 0.24847) (0.00671, 0.25907) (0.00689, 0.26978) (0.00706, 0.28061) (0.00723, 0.29157) (0.0074, 0.30264) (0.00758, 0.31385) (0.00775, 0.32518) (0.00792, 0.33666) (0.0081, 0.34826) (0.00827, 0.36001) (0.00844, 0.3719) (0.00862, 0.38395) (0.00879, 0.39614) (0.00896, 0.40849) (0.00914, 0.42101) (0.00931, 0.43369) (0.00948, 0.44654) (0.00966, 0.45958) (0.00983, 0.47279) (0.01, 0.4862) (0.01017, 0.49981) (0.01035, 0.51362) (0.01052, 0.52765) (0.01069, 0.5419) (0.01087, 0.55639) (0.01104, 0.57112) (0.01121, 0.5861) (0.01139, 0.60136) (0.01156, 0.6169) (0.01173, 0.63274) (0.01191, 0.6489) (0.01208, 0.6654) (0.01225, 0.68226) (0.01242, 0.69951) (0.0126, 0.71718) (0.01277, 0.73531) (0.01294, 0.75395)
			};
			
			\addplot[plot5,very thick,smooth] coordinates {
			(0.001, -0.09024) (0.00117, -0.08224) (0.00135, -0.07425) (0.00152, -0.06626) (0.00169, -0.05826) (0.00187, -0.05027) (0.00204, -0.04227) (0.00221, -0.03428) (0.00238, -0.02629) (0.00256, -0.01829) (0.00273, -0.01025) (0.0029, -0.00214) (0.00308, 0.00605) (0.00325, 0.01432) (0.00342, 0.02266) (0.0036, 0.03109) (0.00377, 0.03959) (0.00394, 0.04817) (0.00412, 0.05684) (0.00429, 0.06559) (0.00446, 0.07442) (0.00464, 0.08334) (0.00481, 0.09234) (0.00498, 0.10143) (0.00515, 0.11062) (0.00533, 0.11989) (0.0055, 0.12925) (0.00567, 0.13871) (0.00585, 0.14827) (0.00602, 0.15792) (0.00619, 0.16767) (0.00637, 0.17752) (0.00654, 0.18747) (0.00671, 0.19752) (0.00689, 0.20769) (0.00706, 0.21796) (0.00723, 0.22834) (0.0074, 0.23883) (0.00758, 0.24944) (0.00775, 0.26016) (0.00792, 0.271) (0.0081, 0.28197) (0.00827, 0.29306) (0.00844, 0.30428) (0.00862, 0.31563) (0.00879, 0.32711) (0.00896, 0.33874) (0.00914, 0.3505) (0.00931, 0.36241) (0.00948, 0.37446) (0.00966, 0.38668) (0.00983, 0.39904) (0.01, 0.41158) (0.01017, 0.42428) (0.01035, 0.43715) (0.01052, 0.4502) (0.01069, 0.46344) (0.01087, 0.47687) (0.01104, 0.49049) (0.01121, 0.50433) (0.01139, 0.51838) (0.01156, 0.53266) (0.01173, 0.54717) (0.01191, 0.56192) (0.01208, 0.57694) (0.01225, 0.59222) (0.01242, 0.60779) (0.0126, 0.62367) (0.01277, 0.63986) (0.01294, 0.65639) 
			};
			
			\addplot[plot6,very thick,smooth] coordinates {
			(0.001, -0.12762) (0.00117, -0.11962) (0.00135, -0.11163) (0.00152, -0.10363) (0.00169, -0.09564) (0.00187, -0.08765) (0.00204, -0.07965) (0.00221, -0.07166) (0.00238, -0.06366) (0.00256, -0.05567) (0.00273, -0.04768) (0.0029, -0.03968) (0.00308, -0.03169) (0.00325, -0.02369) (0.00342, -0.01562) (0.0036, -0.00748) (0.00377, 0.00073) (0.00394, 0.00902) (0.00412, 0.01739) (0.00429, 0.02584) (0.00446, 0.03437) (0.00464, 0.04298) (0.00481, 0.05168) (0.00498, 0.06045) (0.00515, 0.06931) (0.00533, 0.07826) (0.0055, 0.08729) (0.00567, 0.09642) (0.00585, 0.10563) (0.00602, 0.11493) (0.00619, 0.12433) (0.00637, 0.13382) (0.00654, 0.1434) (0.00671, 0.15309) (0.00689, 0.16287) (0.00706, 0.17275) (0.00723, 0.18274) (0.0074, 0.19283) (0.00758, 0.20303) (0.00775, 0.21333) (0.00792, 0.22375) (0.0081, 0.23428) (0.00827, 0.24493) (0.00844, 0.25569) (0.00862, 0.26657) (0.00879, 0.27758) (0.00896, 0.28871) (0.00914, 0.29998) (0.00931, 0.31137) (0.00948, 0.3229) (0.00966, 0.33457) (0.00983, 0.34638) (0.01, 0.35834) (0.01017, 0.37044) (0.01035, 0.38271) (0.01052, 0.39513) (0.01069, 0.40772) (0.01087, 0.42047) (0.01104, 0.4334) (0.01121, 0.44652) (0.01139, 0.45982) (0.01156, 0.47331) (0.01173, 0.48701) (0.01191, 0.50091) (0.01208, 0.51504) (0.01225, 0.52939) (0.01242, 0.54398) (0.0126, 0.55882) (0.01277, 0.57392) (0.01294, 0.5893) 
			};

			\addplot[black,very thick,smooth,dashed] coordinates {
			(0.0, -1e-05) (0.00019, 0.0087) (0.00038, 0.01749) (0.00056, 0.02637) (0.00075, 0.03534) (0.00094, 0.0444) (0.00113, 0.05356) (0.00131, 0.06281) (0.0015, 0.07215) (0.00169, 0.08159) (0.00188, 0.09113) (0.00206, 0.10077) (0.00225, 0.11051) (0.00244, 0.12036) (0.00263, 0.13031) (0.00281, 0.14036) (0.003, 0.15053) (0.00319, 0.1608) (0.00338, 0.17119) (0.00356, 0.18169) (0.00375, 0.19231) (0.00394, 0.20305) (0.00413, 0.21391) (0.00431, 0.22489) (0.0045, 0.236) (0.00469, 0.24724) (0.00488, 0.25861) (0.00507, 0.27011) (0.00525, 0.28175) (0.00544, 0.29354) (0.00563, 0.30546) (0.00582, 0.31754) (0.006, 0.32977) (0.00619, 0.34215) (0.00638, 0.3547) (0.00657, 0.3674) (0.00675, 0.38028) (0.00694, 0.39334) (0.00713, 0.40657) (0.00732, 0.41999) (0.0075, 0.4336) (0.00769, 0.4474) (0.00788, 0.46142) (0.00807, 0.47565) (0.00825, 0.49009) (0.00844, 0.50477) (0.00863, 0.51969) (0.00882, 0.53486) (0.009, 0.55029) (0.00919, 0.56599) (0.00938, 0.58198) (0.00957, 0.59828) (0.00976, 0.61489) (0.00994, 0.63185) (0.01013, 0.64916) (0.01032, 0.66686) (0.01051, 0.68498) (0.01069, 0.70354) (0.01088, 0.72259) (0.01107, 0.74216) (0.01126, 0.76232) (0.01144, 0.78313) (0.01163, 0.80467) (0.01182, 0.82705) (0.01201, 0.85042) (0.01219, 0.87498) (0.01238, 0.90103) (0.01257, 0.92912) (0.01276, 0.96034) (0.01294, 0.99976) 
			};
		\end{axis} 
	\end{tikzpicture}
	~
	\begin{tikzpicture}
		\begin{axis}[
			title=$\mu_{\mathrm{min}} \stackrel{}{=} 0.210 \text{, } \mu_{\mathrm{max}} \stackrel{}{=} 0.371$,
			xlabel=$S_{\mathrm{exp}}$,
			ylabel=$\eta_{\mathrm{opt}}$,
			xmin=0,
			xmax=0.00462,
			ymax=1,
			ymin=0,
			ytick={0,0.2,0.4,0.6,0.8,1},
			legend style={at={(0.34, 0.97)},anchor=north,legend cell align=left,font=\FONTSZ} 
			]
		
		
			\addplot[plot1,very thick,smooth] coordinates {
			(0.0001, -0.01039) (0.00017, -0.0078) (0.00024, -0.00521) (0.00031, -0.00262) (0.00038, -3e-05) (0.00045, 0.00256) (0.00052, 0.00517) (0.00059, 0.00778) (0.00066, 0.0104) (0.00073, 0.01303) (0.0008, 0.01567) (0.00087, 0.01831) (0.00094, 0.02096) (0.00101, 0.02362) (0.00108, 0.02629) (0.00115, 0.02897) (0.00122, 0.03165) (0.00129, 0.03434) (0.00136, 0.03704) (0.00143, 0.03975) (0.0015, 0.04247) (0.00157, 0.0452) (0.00163, 0.04793) (0.0017, 0.05067) (0.00177, 0.05342) (0.00184, 0.05618) (0.00191, 0.05895) (0.00198, 0.06173) (0.00205, 0.06451) (0.00212, 0.0673) (0.00219, 0.0701) (0.00226, 0.07291) (0.00233, 0.07573) (0.0024, 0.07856) (0.00247, 0.0814) (0.00254, 0.08424) (0.00261, 0.0871) (0.00268, 0.08996) (0.00275, 0.09283) (0.00282, 0.09571) (0.00289, 0.0986) (0.00296, 0.1015) (0.00303, 0.10441) (0.0031, 0.10733) (0.00317, 0.11025) (0.00324, 0.11319) (0.00331, 0.11613) (0.00338, 0.11909) (0.00345, 0.12205) (0.00352, 0.12502) (0.00359, 0.12801) (0.00366, 0.131) (0.00373, 0.134) (0.0038, 0.13701) (0.00387, 0.14003) (0.00394, 0.14306) (0.00401, 0.1461) (0.00408, 0.14915) (0.00415, 0.15221) (0.00422, 0.15528) (0.00429, 0.15836) (0.00436, 0.16145) (0.00443, 0.16455) (0.0045, 0.16766) (0.00457, 0.17078) (0.00464, 0.17391) (0.0047, 0.17706) (0.00477, 0.18021) (0.00484, 0.18337) (0.00491, 0.18654) 
			};
			\addlegendentry{$n = 10^9, \delta_\mathrm{est} = 10^{-4}$}
			
			\addplot[plot2,very thick,smooth] coordinates {
			(0.0001, -0.01469) (0.00017, -0.0121) (0.00024, -0.00951) (0.00031, -0.00693) (0.00038, -0.00434) (0.00045, -0.00175) (0.00052, 0.00084) (0.00059, 0.00345) (0.00066, 0.00606) (0.00073, 0.00867) (0.0008, 0.0113) (0.00087, 0.01393) (0.00094, 0.01657) (0.00101, 0.01922) (0.00108, 0.02188) (0.00115, 0.02455) (0.00122, 0.02722) (0.00129, 0.0299) (0.00136, 0.03259) (0.00143, 0.03529) (0.0015, 0.038) (0.00157, 0.04071) (0.00163, 0.04343) (0.0017, 0.04616) (0.00177, 0.0489) (0.00184, 0.05165) (0.00191, 0.05441) (0.00198, 0.05717) (0.00205, 0.05994) (0.00212, 0.06273) (0.00219, 0.06552) (0.00226, 0.06831) (0.00233, 0.07112) (0.0024, 0.07394) (0.00247, 0.07676) (0.00254, 0.07959) (0.00261, 0.08244) (0.00268, 0.08529) (0.00275, 0.08815) (0.00282, 0.09102) (0.00289, 0.09389) (0.00296, 0.09678) (0.00303, 0.09968) (0.0031, 0.10258) (0.00317, 0.1055) (0.00324, 0.10842) (0.00331, 0.11135) (0.00338, 0.11429) (0.00345, 0.11724) (0.00352, 0.1202) (0.00359, 0.12317) (0.00366, 0.12615) (0.00373, 0.12914) (0.0038, 0.13214) (0.00387, 0.13515) (0.00394, 0.13816) (0.00401, 0.14119) (0.00408, 0.14423) (0.00415, 0.14728) (0.00422, 0.15033) (0.00429, 0.1534) (0.00436, 0.15647) (0.00443, 0.15956) (0.0045, 0.16266) (0.00457, 0.16576) (0.00464, 0.16888) (0.0047, 0.17201) (0.00477, 0.17514) (0.00484, 0.17829) (0.00491, 0.18145) 
			};
			\addlegendentry{$n = 5 \cdot 10^8, \delta_\mathrm{est} = 10^{-4}$}
			
			\addplot[plot3,very thick,smooth] coordinates {
			(0.0001, -0.03285) (0.00017, -0.03026) (0.00024, -0.02767) (0.00031, -0.02508) (0.00038, -0.0225) (0.00045, -0.01991) (0.00052, -0.01732) (0.00059, -0.01473) (0.00066, -0.01214) (0.00073, -0.00956) (0.0008, -0.00697) (0.00087, -0.00438) (0.00094, -0.00178) (0.00101, 0.00082) (0.00108, 0.00343) (0.00115, 0.00605) (0.00122, 0.00868) (0.00129, 0.01132) (0.00136, 0.01396) (0.00143, 0.01661) (0.0015, 0.01927) (0.00157, 0.02194) (0.00163, 0.02462) (0.0017, 0.0273) (0.00177, 0.02999) (0.00184, 0.03269) (0.00191, 0.0354) (0.00198, 0.03812) (0.00205, 0.04084) (0.00212, 0.04358) (0.00219, 0.04632) (0.00226, 0.04907) (0.00233, 0.05183) (0.0024, 0.05459) (0.00247, 0.05737) (0.00254, 0.06015) (0.00261, 0.06295) (0.00268, 0.06575) (0.00275, 0.06856) (0.00282, 0.07138) (0.00289, 0.0742) (0.00296, 0.07704) (0.00303, 0.07988) (0.0031, 0.08274) (0.00317, 0.0856) (0.00324, 0.08847) (0.00331, 0.09135) (0.00338, 0.09424) (0.00345, 0.09714) (0.00352, 0.10005) (0.00359, 0.10296) (0.00366, 0.10589) (0.00373, 0.10882) (0.0038, 0.11177) (0.00387, 0.11472) (0.00394, 0.11769) (0.00401, 0.12066) (0.00408, 0.12364) (0.00415, 0.12663) (0.00422, 0.12963) (0.00429, 0.13264) (0.00436, 0.13566) (0.00443, 0.13869) (0.0045, 0.14173) (0.00457, 0.14478) (0.00464, 0.14784) (0.0047, 0.15091) (0.00477, 0.15399) (0.00484, 0.15708) (0.00491, 0.16018)
			};
			\addlegendentry{$n = 10^8, \delta_\mathrm{est} = 10^{-4}$}
			
			\addplot[plot4,very thick,smooth] coordinates {
			(0.001, -0.04645) (0.00106, -0.04435) (0.00111, -0.04225) (0.00117, -0.04014) (0.00123, -0.03804) (0.00128, -0.03593) (0.00134, -0.03383) (0.0014, -0.03173) (0.00145, -0.02962) (0.00151, -0.02752) (0.00157, -0.02541) (0.00162, -0.02331) (0.00168, -0.0212) (0.00174, -0.0191) (0.00179, -0.017) (0.00185, -0.01489) (0.00191, -0.01279) (0.00196, -0.01068) (0.00202, -0.00858) (0.00208, -0.00647) (0.00213, -0.00436) (0.00219, -0.00224) (0.00225, -0.00012) (0.0023, 0.002) (0.00236, 0.00414) (0.00242, 0.00627) (0.00247, 0.00842) (0.00253, 0.01057) (0.00259, 0.01272) (0.00265, 0.01488) (0.0027, 0.01704) (0.00276, 0.01921) (0.00282, 0.02139) (0.00287, 0.02357) (0.00293, 0.02575) (0.00299, 0.02794) (0.00304, 0.03014) (0.0031, 0.03234) (0.00316, 0.03455) (0.00321, 0.03676) (0.00327, 0.03898) (0.00333, 0.0412) (0.00338, 0.04343) (0.00344, 0.04566) (0.0035, 0.0479) (0.00355, 0.05015) (0.00361, 0.0524) (0.00367, 0.05466) (0.00372, 0.05692) (0.00378, 0.05919) (0.00384, 0.06146) (0.00389, 0.06374) (0.00395, 0.06603) (0.00401, 0.06832) (0.00406, 0.07061) (0.00412, 0.07292) (0.00418, 0.07522) (0.00423, 0.07754) (0.00429, 0.07986) (0.00435, 0.08218) (0.0044, 0.08452) (0.00446, 0.08685) (0.00452, 0.0892) (0.00457, 0.09155) (0.00463, 0.0939) (0.00469, 0.09626) (0.00474, 0.09863) (0.0048, 0.101) (0.00486, 0.10338) (0.00491, 0.10577)
			};
			\addlegendentry{$n = 5 \cdot 10^7, \delta_\mathrm{est} = 10^{-3}$}
			
			\addplot[plot5,very thick,smooth] coordinates {
			(0.001, -0.10387) (0.00106, -0.10177) (0.00111, -0.09967) (0.00117, -0.09756) (0.00123, -0.09546) (0.00128, -0.09335) (0.00134, -0.09125) (0.0014, -0.08915) (0.00145, -0.08704) (0.00151, -0.08494) (0.00157, -0.08283) (0.00162, -0.08073) (0.00168, -0.07862) (0.00174, -0.07652) (0.00179, -0.07442) (0.00185, -0.07231) (0.00191, -0.07021) (0.00196, -0.0681) (0.00202, -0.066) (0.00208, -0.0639) (0.00213, -0.06179) (0.00219, -0.05969) (0.00225, -0.05758) (0.0023, -0.05548) (0.00236, -0.05338) (0.00242, -0.05127) (0.00247, -0.04917) (0.00253, -0.04706) (0.00259, -0.04496) (0.00265, -0.04285) (0.0027, -0.04075) (0.00276, -0.03865) (0.00282, -0.03654) (0.00287, -0.03444) (0.00293, -0.03233) (0.00299, -0.03023) (0.00304, -0.02813) (0.0031, -0.02602) (0.00316, -0.02392) (0.00321, -0.02181) (0.00327, -0.01971) (0.00333, -0.0176) (0.00338, -0.01549) (0.00344, -0.01338) (0.0035, -0.01126) (0.00355, -0.00913) (0.00361, -0.007) (0.00367, -0.00486) (0.00372, -0.00272) (0.00378, -0.00057) (0.00384, 0.00158) (0.00389, 0.00374) (0.00395, 0.0059) (0.00401, 0.00807) (0.00406, 0.01024) (0.00412, 0.01242) (0.00418, 0.01461) (0.00423, 0.0168) (0.00429, 0.01899) (0.00435, 0.02119) (0.0044, 0.0234) (0.00446, 0.02561) (0.00452, 0.02783) (0.00457, 0.03005) (0.00463, 0.03228) (0.00469, 0.03451) (0.00474, 0.03675) (0.0048, 0.03899) (0.00486, 0.04124) (0.00491, 0.0435)
			};
			\addlegendentry{$n = 10^7, \delta_\mathrm{est} = 10^{-3}$}
			
			\addplot[plot6,very thick,smooth] coordinates {
			(0.001, -0.1469) (0.00106, -0.1448) (0.00111, -0.14269) (0.00117, -0.14059) (0.00123, -0.13848) (0.00128, -0.13638) (0.00134, -0.13428) (0.0014, -0.13217) (0.00145, -0.13007) (0.00151, -0.12796) (0.00157, -0.12586) (0.00162, -0.12375) (0.00168, -0.12165) (0.00174, -0.11955) (0.00179, -0.11744) (0.00185, -0.11534) (0.00191, -0.11323) (0.00196, -0.11113) (0.00202, -0.10903) (0.00208, -0.10692) (0.00213, -0.10482) (0.00219, -0.10271) (0.00225, -0.10061) (0.0023, -0.09851) (0.00236, -0.0964) (0.00242, -0.0943) (0.00247, -0.09219) (0.00253, -0.09009) (0.00259, -0.08798) (0.00265, -0.08588) (0.0027, -0.08378) (0.00276, -0.08167) (0.00282, -0.07957) (0.00287, -0.07746) (0.00293, -0.07536) (0.00299, -0.07326) (0.00304, -0.07115) (0.0031, -0.06905) (0.00316, -0.06694) (0.00321, -0.06484) (0.00327, -0.06274) (0.00333, -0.06063) (0.00338, -0.05853) (0.00344, -0.05642) (0.0035, -0.05432) (0.00355, -0.05222) (0.00361, -0.05011) (0.00367, -0.04801) (0.00372, -0.0459) (0.00378, -0.0438) (0.00384, -0.04169) (0.00389, -0.03959) (0.00395, -0.03749) (0.00401, -0.03538) (0.00406, -0.03328) (0.00412, -0.03117) (0.00418, -0.02907) (0.00423, -0.02697) (0.00429, -0.02486) (0.00435, -0.02275) (0.0044, -0.02063) (0.00446, -0.0185) (0.00452, -0.01638) (0.00457, -0.01424) (0.00463, -0.0121) (0.00469, -0.00996) (0.00474, -0.00781) (0.0048, -0.00566) (0.00486, -0.0035) (0.00491, -0.00133)
			};
			\addlegendentry{$n = 5 \cdot 10^6, \delta_\mathrm{est} = 10^{-3}$}

			\addplot[black,very thick,smooth,dashed] coordinates {
			(0.0, -1e-05) (7e-05, 0.00263) (0.00014, 0.00529) (0.00021, 0.00795) (0.00028, 0.01062) (0.00036, 0.0133) (0.00043, 0.01598) (0.0005, 0.01868) (0.00057, 0.02138) (0.00064, 0.02409) (0.00071, 0.02681) (0.00078, 0.02954) (0.00085, 0.03227) (0.00093, 0.03502) (0.001, 0.03777) (0.00107, 0.04053) (0.00114, 0.0433) (0.00121, 0.04608) (0.00128, 0.04887) (0.00135, 0.05166) (0.00142, 0.05447) (0.0015, 0.05728) (0.00157, 0.06011) (0.00164, 0.06294) (0.00171, 0.06578) (0.00178, 0.06863) (0.00185, 0.07148) (0.00192, 0.07435) (0.00199, 0.07723) (0.00207, 0.08011) (0.00214, 0.08301) (0.00221, 0.08591) (0.00228, 0.08882) (0.00235, 0.09174) (0.00242, 0.09468) (0.00249, 0.09762) (0.00256, 0.10057) (0.00264, 0.10352) (0.00271, 0.10649) (0.00278, 0.10947) (0.00285, 0.11246) (0.00292, 0.11546) (0.00299, 0.11846) (0.00306, 0.12148) (0.00313, 0.12451) (0.0032, 0.12755) (0.00328, 0.13059) (0.00335, 0.13365) (0.00342, 0.13671) (0.00349, 0.13979) (0.00356, 0.14288) (0.00363, 0.14597) (0.0037, 0.14908) (0.00377, 0.1522) (0.00385, 0.15532) (0.00392, 0.15846) (0.00399, 0.16161) (0.00406, 0.16477) (0.00413, 0.16794) (0.0042, 0.17112) (0.00427, 0.17431) (0.00434, 0.17751) (0.00442, 0.18072) (0.00449, 0.18394) (0.00456, 0.18718) (0.00463, 0.19042) (0.0047, 0.19368) (0.00477, 0.19694) (0.00484, 0.20022) (0.00491, 0.20351)
			};
			\addlegendentry{asymptotic i.i.d rate}
		\end{axis}
	\end{tikzpicture}
	\label{fig:entropy_rate_n}
	}
	\\
	\subfloat[Entropy rate, $\eta_{\mathrm{opt}}$, as function of $S_{\mathrm{exp}}$ for different $\mu$ and, in each plot, fixed $n$.]{
	\begin{tikzpicture}
		\begin{axis}[
			title=$n \stackrel{}{=} 10^{11} \text{, } \delta_\mathrm{est} \stackrel{}{=} 10^{-4}$,
			xlabel=$S_{\mathrm{exp}}$,
			ylabel=$\eta_{\mathrm{opt}}$,
			xmin=0,
			xmax=0.01293,
			ymax=1,
			ymin=0,
			ytick={0,0.2,0.4,0.6,0.8,1},
			legend style={at={(0.29, 0.97)},anchor=north,legend cell align=left,font=\FONTSZ} 
			]
		
		
			\addplot[plot1,very thick,smooth] coordinates {
			(0.0001, -0.0009) (0.00029, 0.00773) (0.00047, 0.01644) (0.00066, 0.02525) (0.00084, 0.03414) (0.00103, 0.04312) (0.00122, 0.05219) (0.0014, 0.06136) (0.00159, 0.07062) (0.00178, 0.07997) (0.00196, 0.08942) (0.00215, 0.09897) (0.00233, 0.10862) (0.00252, 0.11837) (0.00271, 0.12823) (0.00289, 0.13818) (0.00308, 0.14825) (0.00326, 0.15842) (0.00345, 0.1687) (0.00364, 0.1791) (0.00382, 0.18961) (0.00401, 0.20023) (0.0042, 0.21098) (0.00438, 0.22185) (0.00457, 0.23284) (0.00475, 0.24395) (0.00494, 0.2552) (0.00513, 0.26657) (0.00531, 0.27809) (0.0055, 0.28974) (0.00568, 0.30153) (0.00587, 0.31346) (0.00606, 0.32555) (0.00624, 0.33778) (0.00643, 0.35018) (0.00662, 0.36273) (0.0068, 0.37545) (0.00699, 0.38834) (0.00717, 0.4014) (0.00736, 0.41464) (0.00755, 0.42807) (0.00773, 0.4417) (0.00792, 0.45552) (0.0081, 0.46955) (0.00829, 0.48379) (0.00848, 0.49826) (0.00866, 0.51296) (0.00885, 0.5279) (0.00904, 0.54309) (0.00922, 0.55855) (0.00941, 0.57428) (0.00959, 0.59031) (0.00978, 0.60664) (0.00997, 0.6233) (0.01015, 0.6403) (0.01034, 0.65767) (0.01052, 0.67543) (0.01071, 0.69361) (0.0109, 0.71226) (0.01108, 0.73139) (0.01127, 0.75108) (0.01145, 0.77136) (0.01164, 0.79232) (0.01183, 0.81404) (0.01201, 0.83665) (0.0122, 0.86031) (0.01239, 0.88524) (0.01257, 0.91183) (0.01276, 0.94074) (0.01294, 0.97362) 
			};
			
			\addplot[plot2,very thick,smooth] coordinates {
			(0.0001, -0.00104) (0.00017, 0.00155) (0.00024, 0.00415) (0.00031, 0.00675) (0.00038, 0.00936) (0.00045, 0.01198) (0.00052, 0.01461) (0.00059, 0.01725) (0.00066, 0.01989) (0.00073, 0.02254) (0.0008, 0.0252) (0.00087, 0.02787) (0.00094, 0.03055) (0.00101, 0.03323) (0.00108, 0.03592) (0.00115, 0.03862) (0.00122, 0.04133) (0.00129, 0.04405) (0.00136, 0.04677) (0.00143, 0.04951) (0.0015, 0.05225) (0.00157, 0.055) (0.00163, 0.05776) (0.0017, 0.06052) (0.00177, 0.0633) (0.00184, 0.06608) (0.00191, 0.06887) (0.00198, 0.07168) (0.00205, 0.07449) (0.00212, 0.0773) (0.00219, 0.08013) (0.00226, 0.08297) (0.00233, 0.08581) (0.0024, 0.08866) (0.00247, 0.09153) (0.00254, 0.0944) (0.00261, 0.09728) (0.00268, 0.10017) (0.00275, 0.10307) (0.00282, 0.10597) (0.00289, 0.10889) (0.00296, 0.11182) (0.00303, 0.11475) (0.0031, 0.1177) (0.00317, 0.12065) (0.00324, 0.12361) (0.00331, 0.12658) (0.00338, 0.12957) (0.00345, 0.13256) (0.00352, 0.13556) (0.00359, 0.13857) (0.00366, 0.14159) (0.00373, 0.14462) (0.0038, 0.14766) (0.00387, 0.15071) (0.00394, 0.15377) (0.00401, 0.15684) (0.00408, 0.15992) (0.00415, 0.16301) (0.00422, 0.16611) (0.00429, 0.16921) (0.00436, 0.17233) (0.00443, 0.17546) (0.0045, 0.1786) (0.00457, 0.18176) (0.00464, 0.18492) (0.0047, 0.18809) (0.00477, 0.19127) (0.00484, 0.19446) (0.00491, 0.19767) 
			};
			
			\addplot[plot3,very thick,smooth] coordinates {
			(0.0001, -0.00126) (0.00014, 4e-05) (0.00018, 0.00134) (0.00021, 0.00265) (0.00025, 0.00395) (0.00029, 0.00526) (0.00033, 0.00656) (0.00036, 0.00787) (0.0004, 0.00919) (0.00044, 0.0105) (0.00048, 0.01182) (0.00051, 0.01313) (0.00055, 0.01445) (0.00059, 0.01578) (0.00063, 0.0171) (0.00066, 0.01843) (0.0007, 0.01975) (0.00074, 0.02108) (0.00078, 0.02242) (0.00081, 0.02375) (0.00085, 0.02509) (0.00089, 0.02642) (0.00093, 0.02776) (0.00096, 0.02911) (0.001, 0.03045) (0.00104, 0.0318) (0.00108, 0.03314) (0.00111, 0.03449) (0.00115, 0.03585) (0.00119, 0.0372) (0.00123, 0.03856) (0.00126, 0.03992) (0.0013, 0.04128) (0.00134, 0.04264) (0.00138, 0.044) (0.00141, 0.04537) (0.00145, 0.04674) (0.00149, 0.04811) (0.00153, 0.04948) (0.00156, 0.05086) (0.0016, 0.05224) (0.00164, 0.05362) (0.00168, 0.055) (0.00171, 0.05638) (0.00175, 0.05777) (0.00179, 0.05915) (0.00183, 0.06054) (0.00186, 0.06194) (0.0019, 0.06333) (0.00194, 0.06473) (0.00198, 0.06613) (0.00201, 0.06753) (0.00205, 0.06893) (0.00209, 0.07034) (0.00213, 0.07174) (0.00216, 0.07315) (0.0022, 0.07456) (0.00224, 0.07598) (0.00228, 0.07739) (0.00231, 0.07881) (0.00235, 0.08023) (0.00239, 0.08165) (0.00243, 0.08308) (0.00246, 0.08451) (0.0025, 0.08594) (0.00254, 0.08737) (0.00258, 0.0888) (0.00261, 0.09024) (0.00265, 0.09168) (0.00269, 0.09312) 
			};
			
			\addplot[plot4,very thick,smooth] coordinates {
			(0.0001, -0.00162) (0.00012, -0.00092) (0.00014, -0.00021) (0.00016, 0.0005) (0.00018, 0.0012) (0.0002, 0.00191) (0.00021, 0.00262) (0.00023, 0.00333) (0.00025, 0.00404) (0.00027, 0.00475) (0.00029, 0.00546) (0.00031, 0.00617) (0.00033, 0.00688) (0.00035, 0.00759) (0.00037, 0.00831) (0.00039, 0.00902) (0.0004, 0.00974) (0.00042, 0.01045) (0.00044, 0.01117) (0.00046, 0.01188) (0.00048, 0.0126) (0.0005, 0.01332) (0.00052, 0.01403) (0.00054, 0.01475) (0.00056, 0.01547) (0.00058, 0.01619) (0.0006, 0.01691) (0.00061, 0.01763) (0.00063, 0.01835) (0.00065, 0.01907) (0.00067, 0.0198) (0.00069, 0.02052) (0.00071, 0.02124) (0.00073, 0.02197) (0.00075, 0.02269) (0.00077, 0.02342) (0.00079, 0.02414) (0.0008, 0.02487) (0.00082, 0.0256) (0.00084, 0.02633) (0.00086, 0.02705) (0.00088, 0.02778) (0.0009, 0.02851) (0.00092, 0.02924) (0.00094, 0.02997) (0.00096, 0.03071) (0.00098, 0.03144) (0.001, 0.03217) (0.00101, 0.0329) (0.00103, 0.03364) (0.00105, 0.03437) (0.00107, 0.03511) (0.00109, 0.03584) (0.00111, 0.03658) (0.00113, 0.03732) (0.00115, 0.03805) (0.00117, 0.03879) (0.00119, 0.03953) (0.0012, 0.04027) (0.00122, 0.04101) (0.00124, 0.04175) (0.00126, 0.04249) (0.00128, 0.04323) (0.0013, 0.04398) (0.00132, 0.04472) (0.00134, 0.04546) (0.00136, 0.04621) (0.00138, 0.04695) (0.0014, 0.0477) (0.00141, 0.04844) 
			};

			\addplot[plot5,very thick,smooth] coordinates {
			(0.0001, -0.00232) (0.00011, -0.00196) (0.00012, -0.00161) (0.00012, -0.00125) (0.00013, -0.0009) (0.00014, -0.00054) (0.00015, -0.00019) (0.00015, 0.00017) (0.00016, 0.00052) (0.00017, 0.00088) (0.00018, 0.00124) (0.00018, 0.00159) (0.00019, 0.00195) (0.0002, 0.0023) (0.00021, 0.00266) (0.00022, 0.00302) (0.00022, 0.00337) (0.00023, 0.00373) (0.00024, 0.00409) (0.00025, 0.00444) (0.00025, 0.0048) (0.00026, 0.00516) (0.00027, 0.00552) (0.00028, 0.00587) (0.00028, 0.00623) (0.00029, 0.00659) (0.0003, 0.00695) (0.00031, 0.00731) (0.00032, 0.00767) (0.00032, 0.00802) (0.00033, 0.00838) (0.00034, 0.00874) (0.00035, 0.0091) (0.00035, 0.00946) (0.00036, 0.00982) (0.00037, 0.01018) (0.00038, 0.01054) (0.00038, 0.0109) (0.00039, 0.01126) (0.0004, 0.01162) (0.00041, 0.01198) (0.00042, 0.01234) (0.00042, 0.0127) (0.00043, 0.01306) (0.00044, 0.01342) (0.00045, 0.01378) (0.00045, 0.01414) (0.00046, 0.0145) (0.00047, 0.01487) (0.00048, 0.01523) (0.00048, 0.01559) (0.00049, 0.01595) (0.0005, 0.01631) (0.00051, 0.01667) (0.00052, 0.01704) (0.00052, 0.0174) (0.00053, 0.01776) (0.00054, 0.01812) (0.00055, 0.01849) (0.00055, 0.01885) (0.00056, 0.01921) (0.00057, 0.01958) (0.00058, 0.01994) (0.00058, 0.0203) (0.00059, 0.02067) (0.0006, 0.02103) (0.00061, 0.0214) (0.00062, 0.02176) (0.00062, 0.02212) (0.00063, 0.02249) 
			};
			
			\addplot[color=black, only marks, mark=\MARKFORM, mark size=\MARKSZ] coordinates { (0.01294, 0.97362) };
			\addplot[color=black, only marks, mark=\MARKFORM, mark size=\MARKSZ] coordinates { (0.00491, 0.19767) };
			\addplot[color=black, only marks, mark=\MARKFORM, mark size=\MARKSZ] coordinates { (0.00269, 0.09312) };
			\addplot[color=black, only marks, mark=\MARKFORM, mark size=\MARKSZ] coordinates { (0.00141, 0.04844) };
			\addplot[color=black, only marks, mark=\MARKFORM, mark size=\MARKSZ] coordinates { (0.00063, 0.02249) };
		\end{axis}  
	\end{tikzpicture}
	~
	\begin{tikzpicture}
		\begin{axis}[
			title=$n \stackrel{}{=} 5 \cdot 10^8 \text{, } \delta_\mathrm{est} \stackrel{}{=} 10^{-4}$,
			xlabel=$S_{\mathrm{exp}}$,
			ylabel=$\eta_{\mathrm{opt}}$,
			xmin=0,
			xmax=0.01293,
			ymax=1,
			ymin=0,
			ytick={0,0.2,0.4,0.6,0.8,1},
			legend style={at={(0.38, 0.97)},anchor=north,legend cell align=left,font=\FONTSZ} 
			]
		
		
			\addplot[plot1,very thick,smooth] coordinates {
			(0.0001, -0.01276) (0.00029, -0.00417) (0.00047, 0.00446) (0.00066, 0.01317) (0.00084, 0.02196) (0.00103, 0.03085) (0.00122, 0.03982) (0.0014, 0.04889) (0.00159, 0.05805) (0.00178, 0.0673) (0.00196, 0.07664) (0.00215, 0.08609) (0.00233, 0.09563) (0.00252, 0.10527) (0.00271, 0.11501) (0.00289, 0.12486) (0.00308, 0.13481) (0.00326, 0.14486) (0.00345, 0.15502) (0.00364, 0.1653) (0.00382, 0.17568) (0.00401, 0.18618) (0.0042, 0.1968) (0.00438, 0.20754) (0.00457, 0.21839) (0.00475, 0.22937) (0.00494, 0.24048) (0.00513, 0.25171) (0.00531, 0.26308) (0.0055, 0.27458) (0.00568, 0.28622) (0.00587, 0.298) (0.00606, 0.30992) (0.00624, 0.32199) (0.00643, 0.33421) (0.00662, 0.34659) (0.0068, 0.35913) (0.00699, 0.37184) (0.00717, 0.38471) (0.00736, 0.39776) (0.00755, 0.41099) (0.00773, 0.42441) (0.00792, 0.43801) (0.0081, 0.45182) (0.00829, 0.46583) (0.00848, 0.48006) (0.00866, 0.4945) (0.00885, 0.50918) (0.00904, 0.5241) (0.00922, 0.53927) (0.00941, 0.55471) (0.00959, 0.57042) (0.00978, 0.58642) (0.00997, 0.60273) (0.01015, 0.61936) (0.01034, 0.63633) (0.01052, 0.65367) (0.01071, 0.6714) (0.0109, 0.68955) (0.01108, 0.70815) (0.01127, 0.72724) (0.01145, 0.74688) (0.01164, 0.76711) (0.01183, 0.78801) (0.01201, 0.80967) (0.0122, 0.8322) (0.01239, 0.85576) (0.01257, 0.88058) (0.01276, 0.90701) (0.01294, 0.93569) 
			};
			\addlegendentry{$\mu_{\mathrm{min}} = 0.250, \mu_{\mathrm{max}} = 0.250$}
			
			\addplot[plot2,very thick,smooth] coordinates {
			(0.0001, -0.01469) (0.00017, -0.0121) (0.00024, -0.00951) (0.00031, -0.00693) (0.00038, -0.00434) (0.00045, -0.00175) (0.00052, 0.00084) (0.00059, 0.00345) (0.00066, 0.00606) (0.00073, 0.00867) (0.0008, 0.0113) (0.00087, 0.01393) (0.00094, 0.01657) (0.00101, 0.01922) (0.00108, 0.02188) (0.00115, 0.02455) (0.00122, 0.02722) (0.00129, 0.0299) (0.00136, 0.03259) (0.00143, 0.03529) (0.0015, 0.038) (0.00157, 0.04071) (0.00163, 0.04343) (0.0017, 0.04616) (0.00177, 0.0489) (0.00184, 0.05165) (0.00191, 0.05441) (0.00198, 0.05717) (0.00205, 0.05994) (0.00212, 0.06273) (0.00219, 0.06552) (0.00226, 0.06831) (0.00233, 0.07112) (0.0024, 0.07394) (0.00247, 0.07676) (0.00254, 0.07959) (0.00261, 0.08244) (0.00268, 0.08529) (0.00275, 0.08815) (0.00282, 0.09102) (0.00289, 0.09389) (0.00296, 0.09678) (0.00303, 0.09968) (0.0031, 0.10258) (0.00317, 0.1055) (0.00324, 0.10842) (0.00331, 0.11135) (0.00338, 0.11429) (0.00345, 0.11724) (0.00352, 0.1202) (0.00359, 0.12317) (0.00366, 0.12615) (0.00373, 0.12914) (0.0038, 0.13214) (0.00387, 0.13515) (0.00394, 0.13816) (0.00401, 0.14119) (0.00408, 0.14423) (0.00415, 0.14728) (0.00422, 0.15033) (0.00429, 0.1534) (0.00436, 0.15647) (0.00443, 0.15956) (0.0045, 0.16266) (0.00457, 0.16576) (0.00464, 0.16888) (0.0047, 0.17201) (0.00477, 0.17514) (0.00484, 0.17829) (0.00491, 0.18145) 
			};
			\addlegendentry{$\mu_{\mathrm{min}} = 0.210, \mu_{\mathrm{max}} = 0.371$}
			
			\addplot[plot3,very thick,smooth] coordinates {
			(0.0001, -0.01777) (0.00014, -0.01647) (0.00018, -0.01517) (0.00021, -0.01387) (0.00025, -0.01257) (0.00029, -0.01127) (0.00033, -0.00997) (0.00036, -0.00867) (0.0004, -0.00737) (0.00044, -0.00607) (0.00048, -0.00477) (0.00051, -0.00348) (0.00055, -0.00218) (0.00059, -0.00088) (0.00063, 0.00043) (0.00066, 0.00173) (0.0007, 0.00304) (0.00074, 0.00435) (0.00078, 0.00566) (0.00081, 0.00697) (0.00085, 0.00828) (0.00089, 0.0096) (0.00093, 0.01092) (0.00096, 0.01224) (0.001, 0.01356) (0.00104, 0.01489) (0.00108, 0.01621) (0.00111, 0.01754) (0.00115, 0.01887) (0.00119, 0.0202) (0.00123, 0.02154) (0.00126, 0.02288) (0.0013, 0.02421) (0.00134, 0.02556) (0.00138, 0.0269) (0.00141, 0.02824) (0.00145, 0.02959) (0.00149, 0.03094) (0.00153, 0.03229) (0.00156, 0.03364) (0.0016, 0.035) (0.00164, 0.03635) (0.00168, 0.03771) (0.00171, 0.03907) (0.00175, 0.04044) (0.00179, 0.0418) (0.00183, 0.04317) (0.00186, 0.04454) (0.0019, 0.04591) (0.00194, 0.04728) (0.00198, 0.04866) (0.00201, 0.05004) (0.00205, 0.05142) (0.00209, 0.0528) (0.00213, 0.05418) (0.00216, 0.05557) (0.0022, 0.05696) (0.00224, 0.05835) (0.00228, 0.05974) (0.00231, 0.06114) (0.00235, 0.06253) (0.00239, 0.06393) (0.00243, 0.06533) (0.00246, 0.06674) (0.0025, 0.06814) (0.00254, 0.06955) (0.00258, 0.07096) (0.00261, 0.07237) (0.00265, 0.07379) (0.00269, 0.07521) 
			};
			\addlegendentry{$\mu_{\mathrm{min}} = 0.167, \mu_{\mathrm{max}} = 0.500$}
			
			\addplot[plot4,very thick,smooth] coordinates {
			(0.0001, -0.02298) (0.00012, -0.02227) (0.00014, -0.02156) (0.00016, -0.02086) (0.00018, -0.02015) (0.0002, -0.01944) (0.00021, -0.01874) (0.00023, -0.01803) (0.00025, -0.01732) (0.00027, -0.01662) (0.00029, -0.01591) (0.00031, -0.0152) (0.00033, -0.0145) (0.00035, -0.01379) (0.00037, -0.01308) (0.00039, -0.01238) (0.0004, -0.01167) (0.00042, -0.01096) (0.00044, -0.01026) (0.00046, -0.00955) (0.00048, -0.00884) (0.0005, -0.00814) (0.00052, -0.00743) (0.00054, -0.00672) (0.00056, -0.00602) (0.00058, -0.00531) (0.0006, -0.00461) (0.00061, -0.0039) (0.00063, -0.00319) (0.00065, -0.00249) (0.00067, -0.00178) (0.00069, -0.00107) (0.00071, -0.00036) (0.00073, 0.00035) (0.00075, 0.00106) (0.00077, 0.00176) (0.00079, 0.00248) (0.0008, 0.00319) (0.00082, 0.0039) (0.00084, 0.00461) (0.00086, 0.00532) (0.00088, 0.00604) (0.0009, 0.00675) (0.00092, 0.00746) (0.00094, 0.00818) (0.00096, 0.0089) (0.00098, 0.00961) (0.001, 0.01033) (0.00101, 0.01105) (0.00103, 0.01176) (0.00105, 0.01248) (0.00107, 0.0132) (0.00109, 0.01392) (0.00111, 0.01464) (0.00113, 0.01536) (0.00115, 0.01608) (0.00117, 0.01681) (0.00119, 0.01753) (0.0012, 0.01825) (0.00122, 0.01898) (0.00124, 0.0197) (0.00126, 0.02043) (0.00128, 0.02115) (0.0013, 0.02188) (0.00132, 0.02261) (0.00134, 0.02333) (0.00136, 0.02406) (0.00138, 0.02479) (0.0014, 0.02552) (0.00141, 0.02625) 
			};
			\addlegendentry{$\mu_{\mathrm{min}} = 0.124, \mu_{\mathrm{max}} = 0.629$}

			\addplot[plot5,very thick,smooth] coordinates {
			(0.0001, -0.03279) (0.00011, -0.03243) (0.00012, -0.03208) (0.00012, -0.03172) (0.00013, -0.03137) (0.00014, -0.03101) (0.00015, -0.03065) (0.00015, -0.0303) (0.00016, -0.02994) (0.00017, -0.02959) (0.00018, -0.02923) (0.00018, -0.02888) (0.00019, -0.02852) (0.0002, -0.02817) (0.00021, -0.02781) (0.00022, -0.02746) (0.00022, -0.0271) (0.00023, -0.02675) (0.00024, -0.02639) (0.00025, -0.02603) (0.00025, -0.02568) (0.00026, -0.02532) (0.00027, -0.02497) (0.00028, -0.02461) (0.00028, -0.02426) (0.00029, -0.0239) (0.0003, -0.02355) (0.00031, -0.02319) (0.00032, -0.02284) (0.00032, -0.02248) (0.00033, -0.02213) (0.00034, -0.02177) (0.00035, -0.02141) (0.00035, -0.02106) (0.00036, -0.0207) (0.00037, -0.02035) (0.00038, -0.01999) (0.00038, -0.01964) (0.00039, -0.01928) (0.0004, -0.01893) (0.00041, -0.01857) (0.00042, -0.01822) (0.00042, -0.01786) (0.00043, -0.01751) (0.00044, -0.01715) (0.00045, -0.01679) (0.00045, -0.01644) (0.00046, -0.01608) (0.00047, -0.01573) (0.00048, -0.01537) (0.00048, -0.01502) (0.00049, -0.01466) (0.0005, -0.01431) (0.00051, -0.01395) (0.00052, -0.0136) (0.00052, -0.01324) (0.00053, -0.01288) (0.00054, -0.01253) (0.00055, -0.01217) (0.00055, -0.01182) (0.00056, -0.01146) (0.00057, -0.01111) (0.00058, -0.01075) (0.00058, -0.0104) (0.00059, -0.01004) (0.0006, -0.00969) (0.00061, -0.00933) (0.00062, -0.00898) (0.00062, -0.00862) (0.00063, -0.00826) 
			};
			\addlegendentry{$\mu_{\mathrm{min}} = 0.083, \mu_{\mathrm{max}} = 0.750$}
			
			\addplot[color=black, only marks, mark=\MARKFORM, mark size=\MARKSZ] coordinates { (0.01294, 0.93569) };
			\addplot[color=black, only marks, mark=\MARKFORM, mark size=\MARKSZ] coordinates { (0.00491, 0.18145) };
			\addplot[color=black, only marks, mark=\MARKFORM, mark size=\MARKSZ] coordinates { (0.00269, 0.07521) };
			\addplot[color=black, only marks, mark=\MARKFORM, mark size=\MARKSZ] coordinates { (0.00141, 0.02625) };
		\end{axis}  
	\end{tikzpicture}
	\label{fig:entropy_rate_mu}
	}
	\caption{Plots of the entropy rate, $\eta_{\mathrm{opt}}$, as a function of the expected MDL violation, $S_{\mathrm{exp}}$, for different parameters of the EAP.
	In addition to the noted values of $n$, $\mu$, and $\delta_{\mathrm{est}} = 10^{-5}$, we used the parameters $\varepsilon_{\mathrm{s}} = \varepsilon_{\mathrm{EA}} = 10^{-7}$.
	As in Figure~\ref{fig:entropy_single_round_S} the black dots indicate the maximal possible MDL violation for given $\mu$.
	Note that in Figure~\ref{fig:entropy_rate_n} the asymptotic rates are equal to the curves in Figure~\ref{fig:entropy_single_round_S} with corresponding $\mu$.
	Furthermore the curves in Figure~\ref{fig:entropy_rate_mu} converge to the curves in Figure~\ref{fig:entropy_single_round_S}.
	Both these facts are a consequence of the asymptotic equipartition property.
	For convenience we choose $\mu$ such that $\mu_{\mathrm{max}} = 1 - 3\cdot \mu_{\mathrm{min}}$.
	}
\label{fig:entropy_rate}
\end{figure}
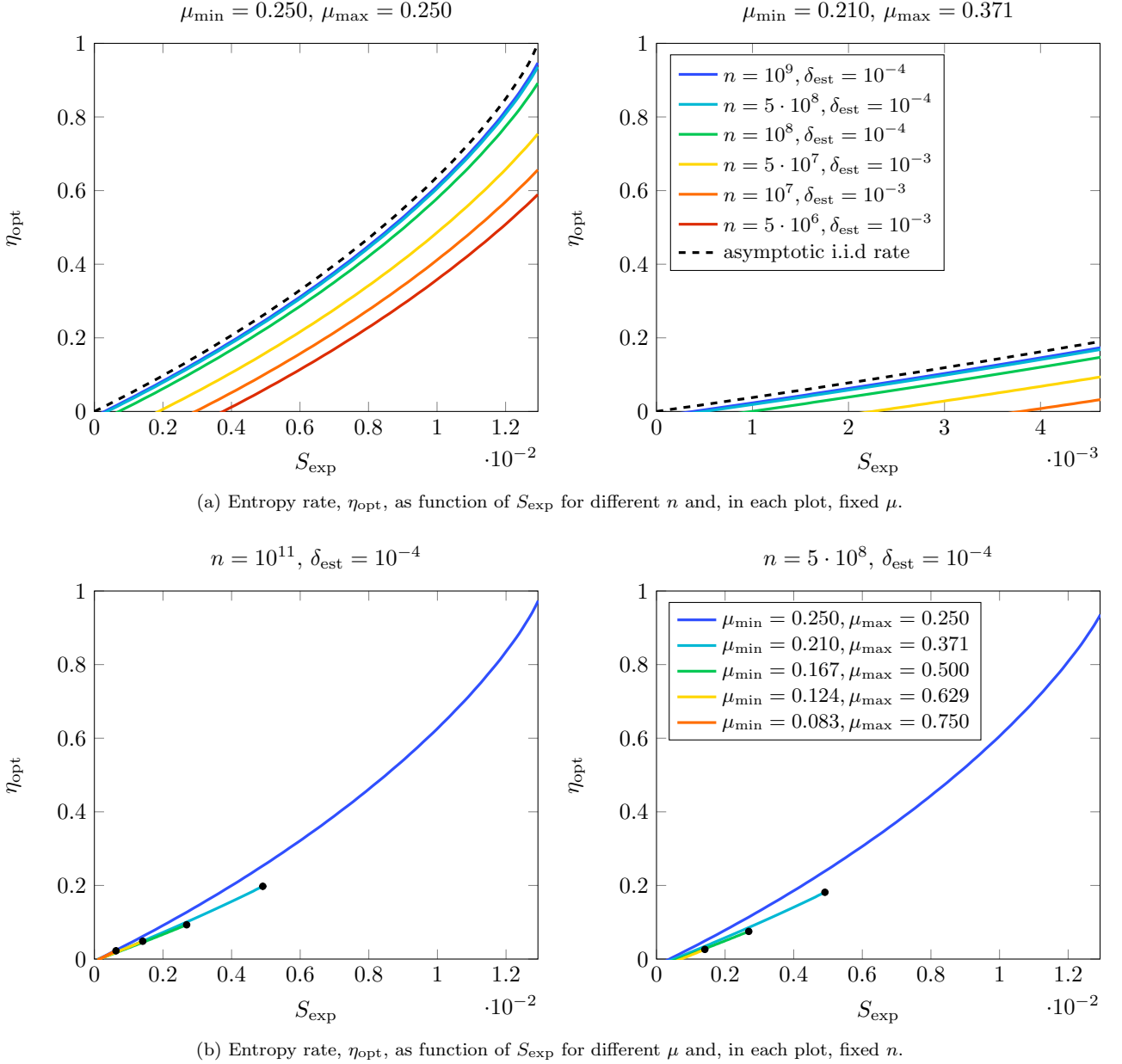

\subsubsection{Applying the extractor}

So far we gave an explicit lower bound on the smooth min-entropy of the entropy accumulation routine's output.
The last part in our RAP, that produces the final bits, is the application of a randomness extractor (cf. Section~\ref{sec:extractors}).
More precisely we are using a quantum-proof two-source randomness extractor in the Markov model.

Using a two-source extractor we need, as the name indicates, two inputs.
The first input that we use is the outputs generated by the entropy accumulation routine, $A^{n}B^{n}$.
As the second input to the extractor we use additional raw bits from the source.\footnote{The first part of the source's output is used as input for the entropy accumulation routine and the second part as second input for the extractor.}
Thus we first use the source to produce the inputs to the MDL experiment and then to draw inputs, $Z^{d}$, for the extractor directly.
The exact setup that we use for that is described in Section~\ref{sec:assumptions}.

Since we are using extractors that work in the Markov model we require that the two inputs to the extractor are independent conditioned on the adversary's side information, i.e., $I(Z^{d} : A^{n} B^{n} | X^{n} Y^{n} E, \lambda) = 0$.
The fact that this is indeed the case in our setting is explicitly stated in Section~\ref{sec:assumptions}.
Hence we can use the extractor to quantify the secrecy of the outputs.

\begin{rmk}
\label{rmk:non-markov}
	When using $I(Z^{d} : A^{n} B^{n} | X^{n} Y^{n} E, \lambda) = 0$ we assume that the adversary has full access to $E$ and $\lambda$.
	However in a realistic setting this might not be the case, thus leading to the Markov condition not being satisfied.
	Nevertheless, as stated in Section~5.2 in~\cite{Extractors}, the deletion of a part of the side information cannot decrease the security of the extractor.
	Consequently, if the adversary is less powerful and does not have access to all of $E$ and $\lambda$, and thus the Markov condition is not satisfied, the extractor is still secure.
\end{rmk}

The quality of the extractor's output depends on the (smooth) min-entropy of the two sources.
Thus, in addition to the mutual information vanishing, we also need to know what the min-entropy of the random variables $Z^{d}$ is.

\begin{lma}
\label{lma:MDL-min-entropy}
	Let $Z^{d}$ be the output of a $\mu$-MDL source. Then, the lower bound
	\begin{equation*}
		H_{\mathrm{min}}(Z^{d}|X^{n}Y^{n} E, \lambda) \geq - \frac{d}{2} \cdot \log(\mu_{\mathrm{max}})
	\end{equation*}
	on the min-entropy holds.
\end{lma}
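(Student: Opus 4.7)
The plan is to express the min-entropy in terms of the guessing probability and then bound the latter using the pairwise MDL constraint together with a chain-rule argument. Viewing $Z^{d}$ as a sequence of $d/2$ consecutive pairs $Z^{(j)} = (Z_{2j-1}, Z_{2j})$ emitted by the $\mu$-MDL source \emph{after} the pairs $X^{n}Y^{n}$ have been produced, the operational definition of min-entropy for classical-quantum states gives $H_{\min}(Z^{d} | X^{n}Y^{n}E, \lambda) = -\log p_{\mathrm{guess}}(Z^{d} | X^{n}Y^{n}E, \lambda)$, so it suffices to show
\[
p_{\mathrm{guess}}(Z^{d} | X^{n}Y^{n}E, \lambda) \leq \mu_{\max}^{d/2}.
\]

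The key step is a chain-rule-type inequality for the guessing probability: any POVM on $E$ attempting to recover $Z^{d}$ in one shot can be simulated by, and hence cannot outperform, the adaptive procedure that at step $j$ is handed the true past $Z^{<j}$ and optimally guesses only the next pair $Z^{(j)}$. This gives
\[
p_{\mathrm{guess}}(Z^{d} | X^{n}Y^{n}E, \lambda) \leq \prod_{j=1}^{d/2} \sup_{z^{<j}} p_{\mathrm{guess}}\bigl(Z^{(j)} \,\big|\, Z^{<j}=z^{<j}, X^{n}Y^{n}E, \lambda\bigr).
\]
Now Assumptions~\ref{item:guess-1} and~\ref{item:guess-2} of Section~\ref{sec:assumptions} state precisely that every pair the source emits has guessing probability at most $\mu_{\max}$ conditioned on all previously emitted pairs together with $E$ and $\lambda$. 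Since $Z^{(j)}$ is the $(n+j)$-th pair produced by the source and its causal past is exactly $X^{n}, Y^{n}, Z^{<j}$, the MDL bound applies to each factor, yielding $p_{\mathrm{guess}}(Z^{d} | X^{n}Y^{n}E, \lambda) \leq \mu_{\max}^{d/2}$. Taking logarithms gives the desired bound $H_{\min}(Z^{d} | X^{n}Y^{n}E, \lambda) \geq -\frac{d}{2}\log(\mu_{\max})$.

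The only step requiring genuine care is justifying the chain rule in the presence of the quantum register $E$. Classically the inequality is immediate by conditioning, but quantumly one must argue that a single global POVM $\{M^{z^{d}}\}$ on $E$ does not gain anything over the ``oracle-assisted'' sequential adversary above. This can be seen by simulating any such POVM as a sequence of sub-POVMs, where at step $j$ the adversary, having been told the true $Z^{<j}$, applies the conditional POVM corresponding to those outcomes; the simulation never performs worse than the original measurement. Equivalently, one may invoke the standard chain rule for conditional min-entropy against classical-quantum side information (see~\cite{RennerPhD}), which yields $H_{\min}(Z^{d} | X^{n}Y^{n}E, \lambda) \geq \sum_{j} \inf_{z^{<j}} H_{\min}(Z^{(j)} | Z^{<j}=z^{<j}, X^{n}Y^{n}E, \lambda)$ directly. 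Combined with the pairwise MDL bound, this delivers the lemma.
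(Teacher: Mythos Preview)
Your proof follows the same route as the paper's: write the min-entropy as a guessing probability, split $Z^{d}$ into $d/2$ consecutive pairs, and chain the pairwise MDL bound $p_{\mathrm{guess}}\le\mu_{\max}$ to obtain $\mu_{\max}^{d/2}$. The paper's argument is two lines and does not pause over the quantum register $E$.

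Where you differ is in flagging the quantum nature of $E$ as a genuine obstacle and then trying to discharge it. That instinct is good, but the two justifications you offer are both shaky. The ``simulation'' argument --- replacing a one-shot POVM by an adaptive guesser who is handed the true past --- does not obviously work, because each measurement on $E$ disturbs it for subsequent rounds; an oracle-assisted sequential guesser does not straightforwardly dominate the one-shot guesser when the side information is quantum. And the chain rule you attribute to~\cite{RennerPhD}, namely $H_{\min}(Z^{d}\mid C)\ge\sum_{j}\inf_{z^{<j}}H_{\min}(Z^{(j)}\mid Z^{<j}=z^{<j},C)$ for quantum $C$, is not a standard result and in fact the additive min-entropy chain rule is well known to fail against quantum side information.

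The reason the paper can ignore all of this is Assumption~\ref{item:guess-1}: the adversary's information about the \emph{source} is purely the classical $\lambda$. Hence, conditioned on $\lambda$ (and the earlier source outputs $X^{n}Y^{n}$), the register $E$ is independent of $Z^{d}$, so $p_{\mathrm{guess}}(Z^{d}\mid X^{n}Y^{n}E,\lambda)=p_{\mathrm{guess}}(Z^{d}\mid X^{n}Y^{n},\lambda)$ and the problem is entirely classical; the chain rule is then just factoring a conditional probability. If you want your write-up to stand on its own, replace the quantum chain-rule paragraph with this observation.
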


\begin{proof}
	For a $\mu$-MDL source we require that the guessing probability of the outputs is bounded (recall form Section~\ref{sec:assumptions}),
	\[
		\mu_{\mathrm{min}} \leq p_{\mathrm{guess}} (Z_{2i}Z_{2i+1}|Z^{2i-1} E, \lambda) \leq \mu_{\mathrm{max}} \; \forall \lambda, i \,.
	\]
	
	Thus the maximal probability of any particular string appearing is at most $\mu_{\mathrm{max}}^{\nicefrac{d}{2}}$.
	Finally, since the min-entropy is the negative logarithm of the maximal guessing probability, the lemma follows.
\end{proof}

Using the results from~\cite{Extractors}, Theorem~\ref{thm:main}, and Lemma~\ref{lma:MDL-min-entropy} we can determine how close to uniform the output of our RAP is.

\begin{lma}
\label{lma:RAP-secrecy}
	Let $\mathrm{Ext}:\{0,1\}^{2n}\times\{0,1\}^d\rightarrow\{0,1\}^m$ be a $(k_1,k_2,\varepsilon_{ext})$ be a two-source quantum-proof extractor in the Markov model, strong in the second input, such that
	\begin{equation} \label{eq:ext_k_values}
	\begin{split}
		k_1 &\leq n \cdot \eta_{\mathrm{opt}}(\varepsilon_{\mathrm{EA}}, \varepsilon_{\mathrm{s}}, S_{\mathrm{exp}} - \delta_{\mathrm{est}}, n, \mu) -\log(1/\varepsilon_{ext})-1 \\
		k_2 &\leq -\frac{d}{2} \cdot \log(\mu_{\mathrm{max}}) -\log(1/\varepsilon_{ext})-1 
	\end{split}
	\end{equation}
	 Consider the RAP (Protocol~\ref{alg:RAP}) using $\mathrm{Ext}$ and any $\varepsilon_{\mathrm{EA}}, \varepsilon_{\mathrm{s}} \in (0,1)$. 
	Then, either the protocol aborts with probability greater than $1 - \varepsilon_{\mathrm{EA}}$, or for the output $K^{m}$ together with the whole information the adversary possibly has access to, $\Sigma = Z^{d} X^{n} Y^{n} E \lambda$, it holds that
	\begin{equation*}
		\frac{1}{2} \left\Vert \rho_{K^{m} \Sigma} - \rho_{U_m} \otimes \rho_{\Sigma} \right\Vert \leq 6 \left(\varepsilon_{\mathrm{s}} +  \varepsilon_{\mathrm{ext}} \right) \,.
	\end{equation*}
\end{lma}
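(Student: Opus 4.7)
The plan is to chain the smooth min-entropy bound from Theorem~\ref{thm:main}, the min-entropy bound on $Z^d$ from Lemma~\ref{lma:MDL-min-entropy}, and the quantum-proof two-source extractor statement Lemma~\ref{lma:smooth-entropy-bound}, doing careful bookkeeping of the conditioning on the non-aborting event.

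First I would split on the abort probability: if $\Pr[\Omega] \leq \varepsilon_{\mathrm{EA}}$ then the first alternative of the lemma holds and we are done. Otherwise, Theorem~\ref{thm:main} applies to the conditional state $\rho_{|\Omega}$ and gives
\[
H^{\varepsilon_{\mathrm{s}}}_{\min}(A^n B^n \mid X^n Y^n E)_{\rho_{|\Omega}} > n \cdot \eta_{\mathrm{opt}}(\varepsilon_{\mathrm{s}}, \varepsilon_{\mathrm{EA}}, S_{\mathrm{exp}}-\delta_{\mathrm{est}}, n, \mu) \geq k_1 + \log(1/\varepsilon_{\mathrm{ext}}) + 1
\]
by the hypothesis on $k_1$. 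Since the classical register $\lambda$ can be absorbed into the classical side information without decreasing the smooth min-entropy, the same bound holds with $\lambda$ included in the conditioning.

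Second, Lemma~\ref{lma:MDL-min-entropy} gives $H_{\min}(Z^d \mid X^n Y^n E \lambda)_{\rho} \geq -\tfrac{d}{2}\log \mu_{\mathrm{max}} \geq k_2 + \log(1/\varepsilon_{\mathrm{ext}}) + 1$ by the hypothesis on $k_2$. To transport this bound to the conditional state $\rho_{|\Omega}$, I would invoke Assumption~\ref{item:independence}, which gives the Markov chain $A^n B^n \leftrightarrow X^n Y^n E \lambda \leftrightarrow Z^d$. Since $\Omega$ is a classical event that is a function only of $A^n B^n X^n Y^n$, the Markov chain implies that the marginal of $Z^d$ given $X^n Y^n E \lambda$ is unaffected by conditioning on $\Omega$, so the min-entropy bound carries over to $\rho_{|\Omega}$. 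The same Markov chain also confirms that $\rho_{A^n B^n Z^d X^n Y^n E \lambda |\Omega}$ belongs to the Markov model in the sense of Definition~\ref{def:quantum_markov_mutual_info}, with $C = X^n Y^n E \lambda$.

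Third, I would apply Lemma~\ref{lma:smooth-entropy-bound} with $X_1 = A^n B^n$, $X_2 = Z^d$, and $C = X^n Y^n E \lambda$. Because $\mathrm{Ext}$ is strong in its second input, the lemma yields
\[
\tfrac{1}{2}\,\bigl\Vert \rho_{K^m Z^d X^n Y^n E \lambda | \Omega} - \rho_{U_m} \otimes \rho_{Z^d X^n Y^n E \lambda | \Omega} \bigr\Vert \leq 6(\varepsilon_{\mathrm{s}} + \varepsilon_{\mathrm{ext}}),
\]
which, with $\Sigma = Z^d X^n Y^n E \lambda$, is the claimed bound.

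The main obstacle I foresee is the bookkeeping around $\Omega$: rigorously verifying that the Markov chain and the min-entropy bound on $Z^d$ survive the conditioning on $\Omega$. This hinges on $\Omega$ depending only on registers that are ``cut off'' from $Z^d$ by the Markov chain supplied by Assumption~\ref{item:independence}, together with the fact that conditioning a classical register on one of its own values preserves Markov chains of the above form.
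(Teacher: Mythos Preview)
Your proposal is correct and follows essentially the same route as the paper: invoke Theorem~\ref{thm:main} for the smooth min-entropy of $A^nB^n$, Lemma~\ref{lma:MDL-min-entropy} for the min-entropy of $Z^d$, Assumption~\ref{item:independence} for the Markov condition, and then apply Lemma~\ref{lma:smooth-entropy-bound}. In fact you are more explicit than the paper about the bookkeeping around $\Omega$ (why the Markov chain and the min-entropy of $Z^d$ survive conditioning on the non-abort event), which the paper's proof leaves implicit.
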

\begin{proof}
	Starting with Theorem~\ref{thm:main}, we know that, either the protocol aborts with probability greater than $1 - \varepsilon_{\mathrm{EA}}$, or the smooth min-entropy of the entropy accumulation routine's output is lower bounded by $n \cdot \eta_{\mathrm{opt}}(\varepsilon_{\mathrm{EA}}, \varepsilon_{\mathrm{s}}, S_{\mathrm{exp}} - \delta_{\mathrm{est}}, n, \mu)$.
	For the second input of the extractor, using Lemma~\ref{lma:MDL-min-entropy}, we know that the min-entropy of the string $Z^{d}$ is lower bounded by $\frac{d}{2} \cdot \log(\nicefrac{1}{\mu_{\mathrm{max}}})$.
	Furthermore by assumption the state that is generated in the protocol is a Markov state.
	Thus we can employ Lemma~\ref{lma:smooth-entropy-bound} to get an upper bound on the distance between the output $K^{m}$ and a uniform string, and proof the claim.
\end{proof}

\begin{rmk}
\label{rmk:RAP-secrecy}
	As stated in Lemma~\ref{lem:quantum_markov_two_source}, one can construct two-source quantum-proof extractors in the Markov model from classical ones.  The parameters of the chosen extractor affect the parameters of our protocol directly. In particular, the security parameter (given below) and the efficiency of the protocol (the extraction rate) depend on the extractor. 
	It is important to note that there are explicit extractors with good parameters for our purpose. In particular:
	\begin{enumerate}
		\item If one of the two sources (either the device's outputs $A^nB^n$ or the seed $Z^{d}$ for $d=2n$) has (smooth) min-entropy of more than $n$ one can use the explicit construction of an extractor given in~\cite[Corollary 25]{Extractors} to extract a linear number of bits. Focusing on the the seed, the min-entropy is sufficiently high when $\mu_{\max} \leq 1/2$.\footnote{In terms of an SV-source, the source should be such that, roughly, $0.3\leq p(0) \leq 0.7$; recall Lemma~\ref{lma:SV-MDL}.} 
		\item Otherwise, one can use the explicit construction of an extractor given in~\cite[Corollary 30]{Extractors} to extract a logarithmic number of bits.
		\item To extract a sub-linear number of bits using an explicit extractor one can also consider a simple modification of our protocol, similarly to what was done in~\cite[Theorem~2]{brandao2016realistic} -- given another device with two components one can use the inputs to run the same protocol with the additional device and by this create another source of randomness. Combined with what we had before, we now have three sources of randomness (the outputs of the two devices and the seed) in the Markov model (see~\cite[Definition~7]{Extractors}) . Thus, the three-source extractor given in~\cite[Corollary 28]{Extractors} can be used to extract a sub-linear number of bits.
	\end{enumerate}
\end{rmk}

After putting everything together we can determine the secrecy parameter for our RAP corresponding to the secrecy definition (Definition~\ref{def:RAP-secrecy}).
In the final theorem we state $\varepsilon_{\mathrm{RA}}$ in terms of the RAP's parameters. 

\begin{thm}[Secrecy]
\label{thm:secrecy}
	For any $\varepsilon_{\mathrm{EA}}, \varepsilon_{\mathrm{s}} \in (0,1)$ the RAP (Protocol~\ref{alg:RAP}) with the given parameters is $\varepsilon_{\mathrm{RA}}$-secret (according to Definition~\ref{def:RAP-secrecy}), with $\varepsilon_{\mathrm{RA}} = 12 \left(\varepsilon_{\mathrm{s}} +\varepsilon_{\mathrm{ext}} \right)+ \varepsilon_{\mathrm{EA}}$.
\end{thm}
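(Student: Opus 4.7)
The plan is to derive Theorem~\ref{thm:secrecy} almost immediately from Lemma~\ref{lma:RAP-secrecy}, which carries all of the non-trivial content of the soundness argument. Lemma~\ref{lma:RAP-secrecy} provides a dichotomy: either the protocol aborts with probability greater than $1-\varepsilon_{\mathrm{EA}}$, or $\tfrac{1}{2}\|\rho_{K^m\Sigma} - \rho_{U_m}\otimes\rho_\Sigma\| \leq 6(\varepsilon_{\mathrm{s}}+\varepsilon_{\mathrm{ext}})$. My plan is to re-express this dichotomy in the form required by Definition~\ref{def:RAP-secrecy} through a one-step case analysis.

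First I would handle the high-abort case. If $\Pr[\text{abort}] > 1 - \varepsilon_{\mathrm{EA}}$, then $1-\Pr[\text{abort}] < \varepsilon_{\mathrm{EA}}$. Since the trace norm of the difference of two density operators is bounded by $2$ (equivalently, the trace distance by $1$), the product $(1-\Pr[\text{abort}])\,\|\rho_{K^m\Sigma} - \rho_{U_m}\otimes\rho_\Sigma\|$ is controlled by $\varepsilon_{\mathrm{EA}}$, contributing exactly the additive $\varepsilon_{\mathrm{EA}}$ term appearing in the statement. Next I would handle the complementary case: if the abort probability is not that large, Lemma~\ref{lma:RAP-secrecy} supplies $\|\rho_{K^m\Sigma} - \rho_{U_m}\otimes\rho_\Sigma\| \leq 12(\varepsilon_{\mathrm{s}}+\varepsilon_{\mathrm{ext}})$ directly, and weighting by $(1-\Pr[\text{abort}]) \leq 1$ does not increase the bound. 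Taking the worse of the two cases gives
\begin{equation*}
    (1-\Pr[\text{abort}])\,\|\rho_{K^m\Sigma} - \rho_{U_m}\otimes\rho_\Sigma\| \leq 12(\varepsilon_{\mathrm{s}}+\varepsilon_{\mathrm{ext}}) + \varepsilon_{\mathrm{EA}},
\end{equation*}
which is precisely the secrecy statement with $\varepsilon_{\mathrm{RA}} = 12(\varepsilon_{\mathrm{s}}+\varepsilon_{\mathrm{ext}}) + \varepsilon_{\mathrm{EA}}$.

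I do not anticipate any substantive obstacle: all of the heavy lifting has been done upstream, namely verifying that the protocol defines EAT channels (Lemma~\ref{lma:EAT_channels}), constructing the min-tradeoff function from the single-round bound of Lemma~\ref{lma:holevo_bound}, applying the EAT to lower-bound $H_{\min}^{\varepsilon_{\mathrm{s}}}(A^nB^n|X^nY^nE)_{\rho_{|\Omega}}$ (Theorem~\ref{thm:main}), checking the Markov condition on $(A^nB^n,Z^d,X^nY^nE\lambda)$ so that a quantum-proof two-source extractor in the Markov model applies, lower-bounding $H_{\min}(Z^d|X^nY^nE\lambda)$ via Lemma~\ref{lma:MDL-min-entropy}, and feeding these entropy bounds into Lemma~\ref{lma:smooth-entropy-bound}. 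The only ``care'' required at this final step is to keep the trace-norm/trace-distance normalization consistent between Lemma~\ref{lma:RAP-secrecy} and Definition~\ref{def:RAP-secrecy}, which is exactly what the case split above achieves.
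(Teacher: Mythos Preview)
Your proposal is correct and mirrors the paper's own proof essentially verbatim: both invoke the dichotomy of Lemma~\ref{lma:RAP-secrecy} and handle the two cases exactly as you describe, bounding the high-abort case by $\varepsilon_{\mathrm{EA}}$ via the trivial trace-distance bound and the low-abort case by $12(\varepsilon_{\mathrm{s}}+\varepsilon_{\mathrm{ext}})$ directly from the lemma.
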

\begin{proof}
	In the following let $\Sigma = Z^{d} X^{n} Y^{n} E \lambda$ be the whole information the adversary has access to.
	Starting with Lemma~\ref{lma:RAP-secrecy} we can distinguish two cases.
	\begin{cse}
		The protocol aborts with probability greater than $1 - \varepsilon_{\mathrm{EA}}$.
	\end{cse}
	In that case, we find 
	\[
		\left( 1 - \mathrm{Pr}[\text{abort}] \right) \left\Vert \rho_{K^{m} \Sigma} - \rho_{U^{m}} \otimes \rho_{\Sigma} \right\Vert \leq \varepsilon_{\mathrm{EA}} \left\Vert \rho_{K^{m} \Sigma} - \rho_{U^{m}} \otimes \rho_{\Sigma} \right\Vert \\
		\leq \varepsilon_{\mathrm{EA}} \;,
	\]
	since the trace distance is always less than one.
	
	\begin{cse}
		The protocol aborts with probability less than $1 - \varepsilon_{\mathrm{EA}}$ (hence the entropy is sufficiently high).
	\end{cse}
	In that case, using the bound from Lemma~\ref{lma:RAP-secrecy}, we find	
	\[
		\left( 1 - \mathrm{Pr}[\text{abort}] \right) \left\Vert \rho_{K^{m} \Sigma} - \rho_{U^{m}} \otimes \rho_{\Sigma} \right\Vert \leq \left\Vert \rho_{K^{m} \Sigma} - \rho_{U^{m}} \otimes \rho_{\Sigma} \right\Vert \leq 12 \left(\varepsilon_{\mathrm{s}} +\varepsilon_{\mathrm{ext}} \right) \,. \qedhere
	\]
\end{proof}

We can now continue to prove Theorem~\ref{thm:formal}.

\begin{proof}[Proof of Theorem~\ref{thm:formal}]
	Part~\ref{part:soundness} follows directly from the proof of Theorem~\ref{thm:secrecy}.
	Part~\ref{part:completeness} follows directly from Lemma~\ref{lma:completeness}.
\end{proof}

\section{Open questions} \label{sec:conclustions}

We end with some open questions:

\begin{enumerate}[1.]
	\item Is the amount of extractable randomness given in our work tight? There are few things that one can consider when trying to improve the extraction rate:
	\begin{enumerate}[i.]
		\item While the bound given in Lemma~\ref{lma:holevo_bound} is non-trivial for any violation of the MDL inequality, it might not be tight.
		\item We used the MDL inequality derived in~\cite{MDL}. They derived their inequality with the motivation of detecting quantumness for an arbitrary MDL source. Thus it might be possible that there are other MDL inequalities that are more suitable for quantifying randomness.
		\item The final length of the extracted randomness depends on the parameters of the extractor used. Finding quantum-proof multi-source extractors for the Markov model which have good parameters is therefore of interest. This can be achieved by considering better specific (classical) two-source extractors and then applying the technique of~\cite{Extractors}, or by improving over the parameters of~\cite{Extractors} for general constructions.
	\end{enumerate}
	\item Can the analysis be extended such that the adversary is allowed to hold some quantum side information about the source? Currently we only allow the adversary to know $\lambda$ in advance (while $E$ is the quantum side information about the device itself). In Particular, this is a realistic assumptions in scenarios where the device and the producer of the weak source are different parties. Nevertheless, it will be interesting to see whether holding quantum side information about the source before producing the device is beneficial for the adversary and what the consequences for the security of our protocol are.
	\item Is it possible to amplify min-entropy sources while maintaining similar parameters? In particular, can it be done with a constant number of devices? (in contrast to what was done in~\cite{chung2014physical}).
	The technique presented here does not work if the SV (MDL) source is replaced with a min-entropy sources (while it might be possible to extend them to block-sources). Thus, another approach has to be taken.
	\item Similarly, is it possible to amplify randomness against a non-signalling adversary  while maintaining similar parameters? 
	Our RAP works only against an adversary that is bound by quantum mechanics and an extension to the non-signalling case is not possible using the techniques that we employed. In particular, the proofs of both~\cite{EAT} and~\cite{Extractors} use the assumption that everything can be described with the formalism of quantum physics. We remark that, while it might be possible to extend one of these results to the non-signalling case, an extension of both of them will result in a contradiction with~\cite{arnon2012limits}. 
	Previous works that focused on non-signalling adversaries cannot be used to achieve similar statements as we derived in this work. 
	\item What is the effect of using a weak source of randomness in device-independent protocols that assume prefect randomness, e.g., device-independent quantum key distribution protocol or randomness expansion? 
	In such protocols random bits are used not only for choosing the inputs for the devices, but also for choosing the rounds in which a ``test'' is carried out.
	To analyse the effect of replacing perfect randomness with weak randomness one can use our RAP.
	One trivial possibility to include our RAP would be to just use it separately to generate uniform bits, before starting with the other protocols.
	Another option is to use our protocol as the main building block for the test rounds. The test rounds themselves can then be chosen with the SV-source, by using techniques such as enumeration. 
\end{enumerate}

\subsection*{Acknowledgments}
We thank Gilles P{\"u}tz for helpful discussions about the MDL inequalities and for letting us use his code to evaluate numerically the optimal violation of the inequalities within quantum physics.  We also thank Jean-Daniel Bancal, Roger Colbeck, Christopher Portman,  and Thomas Vidick for helpful comments. 
RAF was supported by the Swiss National Science Foundation via the  National Center for Competence in Research, QSIT, and by the Air Force Office of Scientific Research (AFOSR) via grant~FA9550-16-1-0245.

\appendix
\appendixpage

\section{Finding the maximal quantum violation of an MDL inequality}
\label{apx:max-viol}

It is not possible to find the maximal MDL value ($S_{\mu}^{*}$) in quantum mechanics for an MDL source with fixed $\mu$ since this value depends on the specific probability distribution of the source.\footnote{For fixed $\mu$ the probability distribution for the source's outputs is not necessarily fixed.}
However, we can find a lower bound on $S_{\mu}^{*}$ by taking the worst case probability distribution for a fixed $\mu$.
What we get is the value 
\begin{align}
	\tilde{S}_\mu &\equiv \mu_\mathrm{min}^2 P_{AB|XY}(00|00) - \mu_\mathrm{max}^2 \big( P_{AB|XY}(01|01) + P_{AB|XY}(10|10) + P_{AB|XY}(00|11) \big) \,. \label{eq:tilde-S}
\end{align}
The value $\tilde{S}_{\mu}$ is a lower bound on $S_{\mu}$ that is independent of the source as long as $\mu$ is fixed.
Therefore, when we find the maximum of $\tilde{S}_{\mu}$ in quantum mechanics ($\tilde{S}_{\mu}^{*}$) we also get lower bound on $S_{\mu}^{*}$.

\begin{lma}
	For fixed state and measurements $\tilde{S}_\mu$ is a lower bound for $S_\mu$ (as defined in Equation~\ref{eq:MDL_ineq}).
\label{lma:MDL_ineq_cond}
\end{lma}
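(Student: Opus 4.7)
The plan is to unfold the joint probabilities in $S_\mu$ into a product of the conditional probability of the outputs given the inputs and the input distribution itself, and then apply the MDL source constraints $\mu_\mathrm{min} \leq P_{XY}(xy) \leq \mu_\mathrm{max}$ term-by-term to each of the four summands. Concretely, I would write
\[
S_\mu = \mu_\mathrm{min}\, P_{AB|XY}(00|00)\,P_{XY}(00) - \mu_\mathrm{max}\!\!\!\sum_{(a,b,x,y)\in\mathcal{T}}\!\! P_{AB|XY}(ab|xy)\,P_{XY}(xy),
\]
where $\mathcal{T}=\{(0,1,0,1),(1,0,1,0),(0,0,1,1)\}$ collects the three ``penalty'' events, and treat the positive contribution and the three negative contributions separately.

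For the positive contribution, the coefficient $\mu_\mathrm{min}\,P_{AB|XY}(00|00)$ is non-negative, so replacing $P_{XY}(00)$ by the lower bound $\mu_\mathrm{min}$ from Definition~\ref{def:MDL-source} can only decrease the term, giving $\mu_\mathrm{min}^2\,P_{AB|XY}(00|00)$. For each of the three negative contributions, the prefactor $-\mu_\mathrm{max}\,P_{AB|XY}(ab|xy)$ is non-positive, so replacing $P_{XY}(xy)$ by the upper bound $\mu_\mathrm{max}$ only makes the (negative) term more negative, i.e., yields a lower bound of $-\mu_\mathrm{max}^2\,P_{AB|XY}(ab|xy)$. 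Summing the four bounds gives exactly $\tilde{S}_\mu$ as defined in Equation~\eqref{eq:tilde-S}, which proves $S_\mu \geq \tilde{S}_\mu$.

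There is no real obstacle here: the argument is just the observation that, under the MDL source constraints, the worst case for the positive term is $P_{XY}(00)=\mu_\mathrm{min}$ while the worst case for each penalty term is $P_{XY}(xy)=\mu_\mathrm{max}$, and these extremes are exactly what $\tilde{S}_\mu$ encodes by design. The only thing one has to be careful about is the sign bookkeeping: lower-bounding $S_\mu$ means lower-bounding a positive contribution (so use $\mu_\mathrm{min}$) and lower-bounding a negative contribution (so the $P_{XY}$ factor in its absolute value must be upper-bounded, hence $\mu_\mathrm{max}$). No assumption on the quantum strategy or on the state and measurements enters, so the bound holds pointwise for any fixed state and measurements, as claimed.
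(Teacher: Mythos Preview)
Your proposal is correct and follows essentially the same approach as the paper: factor $P_{ABXY}(abxy)=P_{XY}(xy)P_{AB|XY}(ab|xy)$, then replace $P_{XY}(00)$ by $\mu_\mathrm{min}$ in the positive term and each $P_{XY}(xy)$ by $\mu_\mathrm{max}$ in the three negative terms. The only point the paper makes slightly more explicit is that Definition~\ref{def:MDL-source} bounds the \emph{conditional} distribution $P_{XY|\lambda}$, and one first averages over $\lambda$ to obtain $\mu_\mathrm{min}\leq P_{XY}(xy)\leq \mu_\mathrm{max}$ unconditionally; you implicitly use this, but it is a one-line convexity observation.
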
 
\begin{proof}
	First note that with $\mu_\mathrm{min} \leq P_{XY|\Sigma} (xy|\sigma) \leq \mu_\mathrm{max} \; \forall x,y,\sigma$ and $P_{XY} = \sum_\sigma P_\Sigma(\sigma) P_{XY|\Sigma} (xy|\sigma)$ it also holds that 
	\begin{equation}
		\mu_\mathrm{min} \leq P_{XY} (xy) \leq \mu_\mathrm{max} \; \forall x,y \,. \label{eq:mdl_w/o_lambda}
	\end{equation}
	Employing these bounds we find
	\begin{align*}
		S_\mu &= \mu_\mathrm{min} P_{ABXY}(0000) - \mu_\mathrm{max} \big( P_{ABXY}(0101) + P_{ABXY}(1010) + P_{ABXY}(0011) \big) \notag \\
		&= \mu_\mathrm{min} \underbrace{P_{XY}(00)}_{\geq \mu_\mathrm{min}} P_{AB|XY}(00|00) - \\ 
		& \qquad - \mu_\mathrm{max} \big( \underbrace{P_{XY}(01)}_{\leq \mu_\mathrm{max}} P_{AB|XY}(01|01) + \underbrace{P_{XY}(10)}_{\leq \mu_\mathrm{max}} P_{AB|XY}(10|10) + \underbrace{P_{XY}(11)}_{\leq \mu_\mathrm{max}} P_{AB|XY}(00|11) \big) \notag \\
		&\geq \mu_\mathrm{min}^2 P_{AB|XY}(00|00) - \mu_\mathrm{max}^2 \big( P_{AB|XY}(01|01) + P_{AB|XY}(10|10) + P_{AB|XY}(00|11) \big) = \tilde{S}_\mu \nonumber
	\end{align*}
\end{proof}

We found $\tilde{S}_{\mu}^{*}$ by maximising the eigenvalue of the Bell operator as a function of the measurement parameters in Matlab.
For a Bell inequality $\sum_{a,b,x,y} \alpha_{abxy} P_{AB|XY}(ab|xy) \leq p_{\mathrm{local}}$ with parameters $\alpha_{abxy}$ and measurement operators $\{M_{a}^{x}\}_{a,x}$ and $\{M_{b}^{y}\}_{b,y}$ the Bell operator is defined as 
\begin{equation*}
	\mathcal{B} = \sum_{a,b,x,y} \alpha_{abxy} M_{a}^{x} \otimes M_{b}^{y} \,.
\end{equation*}

\section{Additional proofs}
\label{apx:proofs}

\begin{proof}[Proof of Lemma~\ref{lma:rewrite}]
	First of all we have
	\begin{align*}
		H(AB|XYFE) \geq H(A|XYFE) = H(A|XFE) \,.
	\end{align*}
	Here the first step follows because $A$ and $B$ are classical registers.
	The second step follows because the non-signalling condition holds between the two components of the device.
	Thus the dependence of $A$ on $Y$ can only be through $X$; i.e., $A$, $X$, and $Y$ form a Markov chain, $A \leftrightarrow X \leftrightarrow Y$.
	Furthermore, it holds that
	\begin{align*}
		H(A|XFE) &= \sum_x \mathrm{Pr}[X = x] \cdot H(A | FE, X = x) \,.
	\end{align*}
	
	Finally we can rewrite
	\begin{align}
		H(A | FE, X = x) &= H(AFE | X=x) - H(FE | X=x) \notag\\
		&= H(A | X=x) + H(FE | A, X=x) - H(FE | X = x) \notag \\
		&= H(A | X = x) - \chi (A : F E | X = x) \notag \\
		&= 1 - \chi (A : F E | X = x) \,, \label{eq:single_entropy_bound}
	\end{align}
	where we used the fact that the outputs are symmetrized (Step~\ref{step:MDL-exp-symm}) and we introduced the Holevo quantity $\chi (A : F E | X = x) = H(F E|X=x) - H(F E|A, X=x)$.
	
	Combining everything, the result follows.
\end{proof}

\bibliographystyle{alphaurl}
\bibliography{refs.bib}

\end{document}